\newtheorem{theorem}{Theorem}[section]
\newtheorem{prop}[theorem]{Proposition}
\newtheorem{claim}[theorem]{Claim}
\newtheorem{lemma}[theorem]{Lemma}
\newtheorem{remark}[theorem]{Remark}
\newtheorem{defi}[theorem]{Definition}
\newcommand{\ceil}[1]{ \left \lceil {#1} \right \rceil}
\newcommand{\Rin}{\mathcal{R_{\textup{in}}}}
\newcommand{\Rout}{\mathcal{R_{\textup {out}}}}
\newcommand{\delin}{\delta_{\textup{in}}}
\newcommand{\delout}{\delta_{\textup{out}}}
\newcommand{\epsout}{\epsilon_{\textup{out}}}
\newcommand{\epsin}{\epsilon_{\textup{in}}}
\newcommand{\Cout}{C_{\textup{out}}}
\newcommand{\Cin}{C_{\textup{in}}}
\newcommand{\ssub}{s_{\textup{sub}}}
\newcommand{\ed}{\textup{ED}}
\newcommand{\zo}{\{0,1\}}
\newcommand{\rate}{\text{Rate}}
\newcommand*\samethanks[1][\value{footnote}]{\footnotemark[#1]}
\tikzset{
	short/.style={draw,rectangle,text height=10pt,text depth=1pt,
		text width=1.3cm,align=center,fill=gray!10},
	long/.style={short,text width=3cm},
	buff/.style={short, fill=gray!30, text width=1cm}
}
\begin{document}
	\title{Explicit and Efficient Constructions of Coding Schemes for the Binary Deletion Channel and the Poisson Repeat Channel}
	
	\author{
		Roni Con\thanks{Department of Computer Science, Tel Aviv University, Tel Aviv, Israel. The research leading to these results has received funding from the Israel Science Foundation (grant number 552/16) and from the Len Blavatnik and the Blavatnik Family foundation. 
		} \and Amir Shpilka\samethanks[1] \\
	}
	\date{}
	\maketitle
	\begin{abstract}
		This work gives an explicit construction of a family of error correcting codes for the binary deletion channel and for the Poisson repeat channel. In the binary deletion channel with parameter $p$ (BDC$_p$) every bit is deleted independently with probability $p$. A lower bound of $(1-p)/9$ is known on the capacity of the BDC$_p$ \cite{mitzenmacher2006simple}, yet no explicit construction is known to achieve this rate. We give an explicit family of codes of rate $(1-p)/16$, for every $p$. This improves upon the work of Guruswami and Li \cite{guruswami2017efficiently} that gave a construction of rate $(1-p)/120$. The codes in our family have polynomial time encoding and decoding algorithms.
		
		Another channel considered in this work is the Poisson repeat channel with parameter $\lambda$ (PRC$_{\lambda}$) in which every bit is replaced with a discrete Poisson number of copies of that bit, where the number of copies has mean $\lambda$. We show that our construction works for this channel as well. As far as we know, this is the first explicit construction of an error correcting code for PRC$_{\lambda}$. 
	\end{abstract}

	\newpage
	\tableofcontents
	\newpage	
	
	\section{Introduction}

	This work deals with constructing error correcting codes for the channels called the binary deletion channel (BDC for short) and the Poisson repeat channel (PRC for short).
	
	Loosely speaking, a channel is a medium over which messages are sent. A channel is defined by the way in which it introduces errors to the transmitted messages (also called codewords when they come from an error correcting code).	
	Before describing the channels that we consider in this work, we first discuss the two main error models - a worst case model and an average case model.
	
	The first model, which is very common in the theory of computation and has found many applications there, is called the Hamming model \cite{hamming1950error}. This is a worst case setting in which a transmitted message is subjected to an adversarial corruption of a fraction $p$ of its entries and we must recover the original message regardless of the location of the errors. Thus, if the adversary is allowed to corrupt  a fraction $p$ of the entries of a transmitted message, then an error correcting code for this channel that allows perfect recovery is a subset of the messages such that any two codewords (i.e. elements of the code) have normalized hamming distance larger than $2p$.
	The second error model, which is the one relevant to our work, was first considered by Shannon in his pioneering work \cite{shannon1948mathematical}. This is an average case model in which a transmitted message is subjected to a random corruption such as bit flips, bit erasures, bit deletions, etc., where each bit is corrupted independently at random according to some distribution.\footnote{This description corresponds to a memoryless channel, which is the most common model.} A channel is basically determined by the probability distribution of corruptions. Since the corruption is random, it can be the case that the whole word is corrupted. In particular, in this setting, the most we can expect from the decoder is to decode the original word with high probability (over the randomness of the corruptions). 
	
	The two most studied channels are the Binary Erasure Channel (BEC$_p$) where each bit is independently replaced by a question mark with probability $p$ and the Binary Symmetric Channel (BSC$_p$) where each bit is independently flipped with probability $p$. 
	
	In this work we consider the BDC with parameter $p$. This channel models the situation where bits of a transmitted message are deleted (i.e. removed) from the message randomly and independently with probability $p$. In particular, if a message of length $n$ was transmitted on the BDC$_p$ then the length of the received message is concentrated around $(1-p)\cdot n$. We note that the output of the BDC is very different from that of the BEC or the BSC.  For example, if we transmit the message $1110101$ over each of the channels and corruptions occurred in locations $2$ and $5$, then the BEC will return the word $1?10?01$, the BSC will return $1010001$, and the BDC will return $11001$. In particular, while the BEC and the BSC do not affect the length of messages transmitted over them, the BDC does exactly that. Thus, unlike the BEC and BSC, the BDC causes synchronization errors. In fact, one of the main reasons for introducing the BDC was to model synchronization errors in communication.

	The motivation to study the BDC is obvious. It is not just a theoretical object as it describes a real-life scenario in which there is a loss of information that was sent on some physical layer as well as synchronization errors. Moreover, the surveys \cite{mitzenmacher2009survey,mercier2010survey} indicate that tools that were developed in the context of the BDC are useful in the study of other questions. An example of such a question is the trace reconstruction problem, which has applications in computational biology and DNA storage systems \cite{bornholt2016dna}. The problem that we study in this work is the construction of explicit error correcting codes of (relatively) high rate (we will soon explain this notion) for the BDC$_p$.

	
	
	Another model that we consider is the PRC that was first introduced in the work of Mitzenmacher and Drinea \cite{mitzenmacher2006simple}. In the PRC with parameter $\lambda$, each bit of the message is (randomly and independently) replaced with a discrete number of copies of that bit, distributed according to the Poisson distribution with parameter $0<\lambda$. In particular, with probability $e^{-\lambda}$ the bit is deleted from the message (i.e. this channel can cause synchronization errors similar to the BDC).  This channel can model, for example, messages sent using a keyboard that has tendency to get stuck so a key cannot be pressed or can get stuck and then its symbol is repeated several times. 
	While the PRC is less motivated by practical applications (we are unaware of any applications of this channel besides in the study of the BDC), it is closely related to the BDC as demonstrated in the work of Mitzenmacher and Drinea  \cite{mitzenmacher2006simple}, Drinea and Mitzenmacher \cite{drinea2007improved} and  Cheraghchi \cite{cheraghchi2018capacity}. In particular, the lower bound on the capacity (a notion that we explain shortly) of BDC$_p$ of $(1-p)/9$ \cite{mitzenmacher2006simple} relies on a reduction from the PRC$_{\lambda}$.
	We too exploit the connection between the BDC and the PRC, and using our construction for the BDC we obtain explicit constructions of error correcting codes for the PRC. 
	
	To explain the question that we study we need some basic notions from coding theory. Recall that a binary\footnote{The case of codes over non-binary alphabets is very important of course, but in this work we only focus on binary codes.} error correcting code can be described either as an encoding map $C : \zo^k \rightarrow \zo^n$ or, abusing notation, as the image of such a map $C$. The rate of such a code $C$ is $\rate(C)=k/n$, which intuitively captures the amount of information encoded in every bit of a codeword. Naturally, we would like the rate to be as large as possible, but there is a tension between the rate of the code and the amount of errors/noise it can tolerate.
	
	One of the most fundamental questions when studying a channel is to determine its capacity, i.e., the maximum achievable transmission rate over the channel that still allows recovering from the errors introduced by the channel, with high probability. Shannon proved in his seminal work \cite{shannon1948mathematical} that the capacity of the BSC$_p$ is $1-h(p)$, where $h(\cdot)$ is the binary entropy function (for $0<x<1$, $h(x)=-x\log x-(1-x)\log{1-x}$).\footnote{All logarithms in this paper are base $2$.} I.e., there are codes with block lengths going to infinity, whose rates converge to $1-h(p)$, that can recover with high probability from the errors inflicted by the channel.
	Elias \cite{	elias}, who introduced the BEC$_p$, proved that its capacity is $1-p$.

	
	What about the capacity of the BDC$_p$?  In spite of many efforts (see \cite{mitzenmacher2009survey}), the capacity of the BDC$_p$ is still not known and it is an outstanding open challenge to determine it. Yet, for the extremal cases, the asymptotic behavior is somewhat understood. In the regime where $p\rightarrow 0$ the capacity approaches to $1-h(p)$ \cite{kalai2010tight}, i.e. it approaches the capacity of the BSC$_p$. In the regime where $p\rightarrow 1$ the capacity is at least $(1-p)/9$ \cite{mitzenmacher2006simple}. This means that even if $p$ is extremely close to $1$, there are codes of positive rate that allow reliable communication over this channel. Another somewhat surprising aspect of this result is that the asymptotic behavior is only a constant off from the capacity of the related BEC$_p$. In the BEC$_p$, we know how to build codes that nearly achieve its capacity of $1-p$ for every $p$. This is not the case for the BDC$_p$, where the best explicit construction known for the regime $p \rightarrow 1$, prior to this work, has rate of $(1-p)/120$ \cite{guruswami2017efficiently}.  
	
	
	In this work we present and analyze an efficient, deterministic construction of a family of codes for the BDC$_p$ that achieves rate higher than $(1-p)/16$ for every $p$. 
	We also show that this construction yields a family of codes for PRC$_{\lambda}$ of rate $\mathcal{R} > \lambda/ 17$ for $\lambda \leq 0.5$. This further emphasizes that these channel have much in common. 
	
	\subsection{Previous Work}
	Much of the major results on the capacity of deletion type channels can be found in the excellent surveys of Mitzenmacher's and Mercier et al.   \cite{mitzenmacher2009survey,mercier2010survey}. We highlight some of the results related to the regime where $p$ tends to $1$ as this regime is the focus of this paper.
	
	The best known lower bound on the capacity is due to  Mitzenmacher and Drinea \cite{mitzenmacher2006simple} that showed a lower bound of $ (1-p)/9$ for all $p$, meaning that there are codes of this rate such that every transmitted codeword is decoded correctly with high probability. Their proof is existential and does not yield an explicit construction with this rate. As far as we know there is no explicit construction that achieves rate of $(1-p)/9$. 
	A more recent work by Guruswami and Li \cite{guruswami2017efficiently} presents a deterministic code construction for the BDC$_p$ with rate $(1-p)/120$ for all values of $p$. This rate is smaller then Mitzenmacher's bound, but it is the first construction with rate that scales proportionally to $(1-p)$ for $p\rightarrow 1$. In \cite{dalai2011new} an upper bound of $0.4143(1-p)$ was shown on the capacity of BDC$_p$ for $p\rightarrow 1$, meaning that there are no error correcting codes that achieve this rate for the BDC$_p$.\footnote{We note that Dalai's proof was computer assisted.  A recent work by Cheraghchi \cite{cheraghchi2018capacity} gave an upper bound on the capacity of the BDC$_p$ for $p\geq 1/2$ of $(1-p)\log((1+\sqrt{5})/2)$ without computer assistance.} 
	
	%
	Deletion correction is studied also in the adversarial model, i.e., when there is an adversary that can delete up to some threshold number of symbols. In fact, works dealing with the adversary model considered the more general case in which the adversary is also allowed, in addition to deletions, to insert symbols, i.e., to add a new symbol from the alphabet between two adjacent symbols in the codeword. 
	In this context of adversarial deletions and insertions, the work of Haeupler and Shahrasbi \cite{haeupler2017synchronization} gave efficient insertion-deletion (insdel for short) codes over large alphabet, which are almost optimal in rate-distance trade-off. In particular, for every $\epsilon>0$ and $\delta \in (0,1)$ there is a code $C$ with rate $1-\delta - \epsilon$ that can efficiently correct a $\delta$ fraction of insertions and deletions and its alphabet size is given by $\left| \Sigma \right| = O_{\epsilon}(1)$. We note that this construction does not give a binary code. 
	In the high rate regime, Guruswami and Wang \cite{guruswami2017deletion} showed that there are binary codes of rate $1-\tilde{O}(\sqrt{\delta})$ that can correct $\delta n$ worst-case deletions in polynomial time.
	
	\subsection{Our Results}
	In this work, we improve the construction presented in \cite{guruswami2017efficiently} and construct an explicit family of efficiently encodable and decodable codes for the binary deletion channel with rate at least $(1-p)/16$ for any $p \in (0,1)$.

	\begin{theorem} \label{BDC_theorem}
		Let $p \in (0,1)$. There exist a family of binary error correcting codes $\left \lbrace C_i \right \rbrace_{i=1}^{\infty}$ for the BDC$_p$ where the block length of $C_i$ goes to infinity as $i\rightarrow \infty$ and
		\begin{enumerate}
			\item $C_i$ can be constructed in time polynomial in its block length.
			\item $C_i$ has rate at least $(1-p)/16$.
			\item $C_i$ is decodable in quadratic time and encodable in linear time.
		\end{enumerate}
	\end{theorem}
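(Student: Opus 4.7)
The plan follows the code-concatenation paradigm of Guruswami and Li \cite{guruswami2017efficiently}, whose construction achieves the weaker rate $(1-p)/120$; the point is to sharpen each of their ingredients in order to lift the overall rate to $(1-p)/16$. I would fix a binary outer code $\Cout$ over an alphabet $\Sigma$ whose size is a small power of two, of rate $\Rout$ very close to $1$, able to correct efficiently some small constant fraction $\delout$ of symbol erasures; a Reed--Solomon code (concatenated, where needed, with a short auxiliary binary code to return to the binary alphabet) is a standard choice. Then one designs an inner code $\Cin$ of rate $\Rin$ that encodes each symbol of $\Sigma$ as a binary string of length $\ssub$, and interleaves the $N$ inner codewords with fixed buffer strings of length $b$ (for example a long run $0^b$, or a short marker pattern) whose role is to let the decoder recover the inner block boundaries after the channel has acted. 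A codeword takes the form $(\mathrm{buf}, x_1, \mathrm{buf}, x_2, \ldots, \mathrm{buf}, x_N)$ of total length $n = (N+1)b + N\ssub$, and its rate equals $\Rout \cdot \Rin \cdot \ssub/(\ssub + b) - o(1)$.

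Decoding would proceed in three stages. First, a single left-to-right scan of the received word detects the surviving images of the buffers; choosing the buffer pattern so that it is statistically distinguishable from the channel output of any inner codeword, one argues that with high probability over the channel randomness the scan correctly segments the received word into $N$ windows, each containing the BDC-image of one inner codeword. Second, each window is brute-force decoded against $\Cin$ by computing edit distance to every inner codeword via the $O(\ssub^2)$ dynamic program; keeping $\ssub = O(\log n)$ makes $|\Cin|$ polynomial in $n$ and the total inner-decoding time $O(n^2)$, which yields the quadratic decoding bound. Third, the outer erasure decoder of $\Cout$ cleans up the remaining erased symbols. Correctness reduces to standard Chernoff-plus-union-bound claims: buffers survive and are recognized except with negligible probability, and each inner block is decoded correctly or erased except with probability at most $\delout/2$, so that the erasure fraction delivered to $\Cout$ stays below $\delout$ with high probability.

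The main obstacle is the quantitative optimization. To obtain $(1-p)/16$ one must balance the four parameters $\ssub$, $b$, $\Rin$, and $\delout$ under conflicting constraints: the buffer length $b$ must be at least a suitable multiple of $1/(1-p)$, so that each buffer survives the channel with a detectable footprint and no legitimate inner window contains a run that fools the segmentation step; the inner block length $\ssub$ must similarly scale with $1/(1-p)$, so that the typical number of surviving bits per block is large enough for reliable inner decoding; yet simultaneously $\ssub/(\ssub+b)$ must be close to $1$ and $\Rin$ close to a moderate constant, because the product $\Rout \cdot \Rin \cdot \ssub/(\ssub+b)$ must exceed $(1-p)/16$ uniformly in $p \in (0,1)$. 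A convenient technical route is to first prove the analogous rate bound for the PRC$_\lambda$, whose independent Poisson copy-counts per bit admit cleaner concentration estimates, and then to transfer the result to BDC$_p$ by a reduction in the spirit of Mitzenmacher and Drinea \cite{mitzenmacher2006simple}, losing only a small absolute constant in the final rate.
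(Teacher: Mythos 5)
Your high-level outline (concatenation with an outer code, inner codewords separated by buffers, segment-then-decode) is the right shape, and matches the strategy the paper inherits from Guruswami--Li. But several concrete choices you make would not work, and they hide exactly the ideas needed to reach $(1-p)/16$.

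First, the outer code cannot be an erasure code. After the buffer-detection scan, the number of windows the decoder produces is generally not equal to $n$: a buffer whose image is too short is missed, which \emph{merges} two adjacent inner blocks into one window (a deletion at the outer-symbol level), and a long interior run of $0$'s created by deleting many consecutive $1$-runs produces a \emph{spurious} buffer, which splits one inner block into two windows (an insertion at the outer-symbol level). Once the window count drifts from $n$, the decoder no longer knows which position each window corresponds to, so these are synchronization errors, not erasures or even substitutions. A Reed--Solomon erasure decoder (or any fixed-block-position code) has nothing to do with them. The paper's outer code is the Haeupler--Shahrasbi insertion/deletion code over a constant-size alphabet, and its insdel-correcting capability is essential; your own reference to \cite{guruswami2017efficiently} points there as well.

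Second, there is no blow-up step in your plan. If you simply take inner codewords of length $\ssub = \Theta(1/(1-p))$ with rate $\Rin$ bounded away from $0$, they will not survive BDC$_p$ for $p$ near $1$: a constant fraction of the runs in each inner block is erased outright and no constant-rate binary code of that length can compensate. The paper keeps the inner block length $m$ a \emph{constant}, restricts inner codewords to strings composed only of $1$-runs and $2$-runs with a fixed count of each, and then blows up every $1$-run to length $\lceil M_1/(1-p)\rceil$ and every $2$-run to length $\lceil M_2/(1-p)\rceil$. The decoder thresholds the surviving run length to decide whether it originated from a $1$-run or a $2$-run. The rate gain over Guruswami--Li comes precisely from (a) this run-dependent blow-up with $M_1$ much smaller than their uniform factor $60$, and (b) the tighter greedy inner-code bound enabled by fixing the number of $1$-runs and $2$-runs (Propositions~\ref{deletion_prop} and~\ref{inner_insertion_prop}). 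None of this appears in your proposal, and without it the constant $16$ is unreachable.

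Third, the suggestion to prove the PRC$_\lambda$ bound first and transfer to BDC$_p$ via the Mitzenmacher--Drinea reduction is backwards for the purpose at hand. That reduction replaces each bit of a PRC codeword by a Poisson-distributed number of copies; it is inherently randomized, so it converts an \emph{explicit} PRC code into a \emph{random} BDC code and loses the explicitness the theorem requires. The paper's order of argument is the opposite: the explicit BDC construction is built and analyzed directly (with the Poisson distribution appearing only as an analytical tool, via the limit of $\mathrm{Bin}(n,\lambda/n)$), and the PRC result is then derived as a variant.

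Finally, a minor point: the inner block length $m$ in the paper is $O(1)$, not $O(\log n)$; the quadratic decoding time comes from the Haeupler--Shahrasbi outer decoder, not from dynamic-programming over logarithmic-length windows.
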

	As mentioned earlier, we show that the same construction works for the PRC$_{\lambda}$ as well. In particular we prove,
	
	\begin{theorem} \label{PRC_theorem}
		Let $\lambda \leq 0.5$. There exist a family of binary error correcting codes $\left \lbrace C_i \right \rbrace_{i=1}^{\infty}$ for PRC$_{\lambda}$ 
		where the block length of $C_i$ goes to infinity as $i\rightarrow \infty$ and
		\begin{enumerate}
			\item $C_i$ can be constructed in time polynomial in its block length.
			\item $C_i$ has rate $\mathcal{R}_i > \lambda /17$.
			\item $C_i$ is decodable in quadratic time and encodable in linear time.
		\end{enumerate}
	\end{theorem}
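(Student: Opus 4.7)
The plan is to adapt the BDC construction and analysis from \Cref{BDC_theorem} essentially verbatim, only substituting the probabilistic behavior of PRC$_\lambda$ for that of BDC$_p$, and to verify that the resulting parameters yield rate exceeding $\lambda/17$ whenever $\lambda\le 0.5$. Concretely, I expect the BDC codes to be a concatenation of an outer large-alphabet code (Reed--Solomon-like, efficiently list/unique decodable from a constant fraction of symbol erasures/errors) with an inner code that repeats each bit a constant number of times and appends a buffer pattern that demarcates inner blocks. The decoder identifies block boundaries using the buffer, decodes each inner block independently (obtaining the outer symbol with constant success probability), and then runs outer decoding. I would first transport this entire template to PRC$_\lambda$ unchanged, and then reanalyze each of the two probabilistic ingredients.

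The first ingredient is concentration of the length of the received word for each inner block. Under PRC$_\lambda$, the number of output bits corresponding to an input block of length $n_{\textup{in}}$ is a sum of i.i.d.\ Poisson$(\lambda)$ variables, hence Poisson$(\lambda n_{\textup{in}})$; standard Chernoff bounds for Poisson sums give the same qualitative tail behavior as the Binomial bounds used for BDC, so block-boundary localization via the buffer goes through with only constant-factor changes. The second ingredient is that each inner block decodes correctly with probability at least some constant $\gamma>0$: here one must check that the buffer pattern survives identifiably (it is neither fully deleted nor expanded enough to be confused with a data run) and that a majority/plurality decoder on the repeated bits recovers the correct symbol. For $\lambda \le 0.5$ the Poisson mass on $\{0\}$ is at least $e^{-1/2}>0.6$, so the channel is "deletion-heavy" in a quantitative sense analogous to BDC$_p$ with $p\to 1$; this is precisely the regime where the BDC analysis is tightest and the same parameter choices transfer with only a small loss.

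Given that each inner block decodes correctly independently with constant probability $\gamma$, a Chernoff bound shows that a $1-\gamma/2$ fraction of outer symbols are correct with probability $1-o(1)$; the outer code then recovers the message. For the rate, the outer rate is a constant arbitrarily close to the capacity of the induced outer erasure/error channel, and the inner rate is $(\log_2 q)/n_{\textup{in}}$ where $q$ is the outer alphabet size and $n_{\textup{in}}$ is chosen as a function of $\lambda$ so that decoding succeeds with constant probability. Tracking the constants, the expected received length per input bit is $\lambda$ (rather than $1-p$), and the repetition overhead plus buffer overhead plus the gap from the outer code's rate to $1$ together degrade the BDC constant $16$ to at most $17$ in the PRC setting, giving the claimed rate $\mathcal{R}_i>\lambda/17$. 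Efficient encoding and decoding are inherited from the BDC construction, since the outer code is efficiently encodable/decodable and the inner code has constant block length, so the decoder runs in quadratic time and the encoder in linear time.

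The main obstacle I anticipate is the precise accounting for the buffer: under PRC the buffer not only may be partially deleted (as in BDC) but may also be \emph{lengthened} by duplication of surrounding data bits, so one must show that the decoder can still pinpoint block boundaries within an acceptable window. Establishing this requires a careful union bound over the possible shifts, using the Poisson tail both upward and downward, and is what makes the $\lambda \le 0.5$ hypothesis necessary: it ensures that the probability of long duplication runs is small enough to avoid swallowing the buffer. Once this combinatorial/probabilistic step is in place, the remaining arguments parallel the BDC proof step for step.
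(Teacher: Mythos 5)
Your high-level plan---transport the BDC construction and analysis to PRC$_\lambda$ by replacing Binomial counting with Poisson counting, and then verify the resulting constant---is indeed the paper's strategy. However, your proposal rests on a materially wrong picture of what the BDC construction actually is, and the argument built on that picture would not go through.

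First, the outer code is not ``Reed--Solomon-like, decodable from a constant fraction of symbol erasures/errors.'' Because the decoder can miss a buffer (merging two inner blocks into one) or hallucinate a spurious buffer (splitting one block into two), the number of decoding windows differs from $n$, and the induced corruption on the outer codeword consists of \emph{insertions and deletions} of outer symbols, not merely substitutions or erasures at known positions. This is why the paper uses the Haeupler--Shahrasbi synchronization code (Theorem~\ref{Haupler_thm}), which decodes from edit distance $\le \delout n$. Reed--Solomon would simply fail here, since it has no mechanism for re-synchronizing. Second, the inner code is not ``repeat each bit a constant number of times.'' It is the greedily constructed code $C\subseteq S_{m,\beta_1}$ of Proposition~\ref{inner_code_lem}, consisting of strings with a fixed number of $1$-runs and $2$-runs and pairwise edit distance $>2\delin m$; the per-bit blow-up you describe is a separate encoding layer applied \emph{after} inner encoding. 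A bare repetition inner code would have rate $\Theta(1/m)$, far below what is needed to get the $\lambda/17$ constant. Correspondingly, inner decoding is not a majority vote: it is threshold decoding (classify each surviving run as a $1$-run or $2$-run by comparing its length to $T$) followed by brute-force nearest-codeword decoding in edit distance.

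Third, ``each inner block decodes correctly with constant probability $\gamma>0$'' followed by ``a $1-\gamma/2$ fraction of outer symbols are correct'' is quantitatively too weak. The outer code can correct only edit distance $\delout n$ with $\delout$ a small fixed constant ($2^{-20}$ in the paper's instantiation), so one needs the per-block error probability to be below roughly $\delout/10$. The paper achieves this by noting the per-block error is $\exp(-\Omega(m))$ with constants depending only on $(M_1,T,M_2,M_B,\beta_1,\delin)$ (Remark~\ref{rem:m-ind} and the PRC analogues of Propositions~\ref{clm:dele_buffer}, \ref{clm:spuo_buffer}, \ref{clm:wrong_inner_dec}), and then picking $\epsout$, hence $m$, large enough. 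Your Chernoff-over-blocks step is fine, but it only helps once the per-block failure probability is already tiny. Finally, the ``careful union bound over possible shifts'' you worry about for buffer localization is not how the argument works: the decoder does not attempt to pinpoint boundary positions, it merely detects long runs of zeros and segments the stream between them. Buffer lengthening by insertions is harmless (it can only make detection easier), and the genuine new PRC phenomenon---a data $0$-run ballooning to buffer length---is a single Poisson large-deviation event of probability $\exp(-\Omega(m))$, handled exactly like the spurious-buffer case. The substance of the PRC proof is in computing the four probabilities $P^{(1)\to(2)},P^{(1)\to(0)},P^{(2)\to(1)},P^{(2)\to(0)}$ with Poisson laws (equations~\eqref{PRC_small_to_big_bound}--\eqref{PRC_2run_deletion_pro}), verifying $\gamma<\delin$ for the fixed parameter tuple $(M_1,M_2,\beta_1,T,\delin)=(5.49,24.2,0.532,13,0.00954)$, and plugging into the rate formula~\eqref{Rate_Poisson_eq}; your proposal does not attempt any of this computation, which is where the constant $17$ actually comes from.
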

	
	To the best of our knowledge, this is the first explicit construction of an error correcting code for the PRC$_{\lambda}$. 
	
	
	\subsection{Construction and Proof Overview}\label{sec:intro-const}
	
	Our construction follows the footsteps of the construction of Guruswami and Li \cite{guruswami2017efficiently} with some important modifications. We next describe the construction and then its analysis.

	\begin{figure}
		\def\nnode#1#2#3{
			\node[short, right=of #1] (#2) {#3}
		}
		
		\def\dotsnode#1#2{
			\node[long, right=of #1] (#2) {$\ldots$}
		}
		
		\def\downnode#1#2#3{
			\node[short, below=1cm of #1] (#2) {#3}
		}
		
		\begin{center}	
			\noindent\begin{tikzpicture}[node distance=-\pgflinewidth][decoration={brace}]
			
			\node[] (blind) {};
			\node[long, right=of blind] (msg) {message};	
			\downnode{blind}{out1}{$\sigma_1$};
			\nnode{out1}{out2}{$\sigma_2$};
			\dotsnode{out2}{dots};
			\nnode{dots}{outn}{$\sigma_n$};
			
			\draw[very thick,->] (msg.south west) -- (out1.north west);
			\draw[very thick,->] (msg.south east) -- (outn.north east);
			
			\node[above right=0.5cm of outn] (descon) {($1$) Outer encoding};
			
			\downnode{out1}{in1}{$c_{\sigma_1}^{(\sf in)}$};
			\downnode{out2}{in2}{$c_{\sigma_2}^{(\sf in)}$};
			\node[below right=0.13cm and 0.35cm of outn.north east, text width=4cm] (descon) {($2$) Concatenation: encoding $\sigma_i$ using inner encoding};
			\dotsnode{in2}{indots};
			\downnode{outn}{inn}{$c_{\sigma_n}^{(\sf in)}$};
			\draw[very thick,->] (out1) -- (in1);
			\draw[very thick,->] (out2) -- (in2);
			\draw[very thick,->] (outn) -- (inn);

			\def\leftdownnode#1#2#3{
				\node[short, below left=1.3cm and 0.5cm of #1] (#2) {#3}
			}
			
			\def\buffnode#1#2{
				\node[buff, right=of #1] (#2) {$0 \ldots 0$}
			}
			
			\leftdownnode{in1}{newc1}{$c_{\sigma_1}^{(\sf in)}$};
			\buffnode{newc1}{buf1};
			\nnode{buf1}{newc2}{$c_{\sigma_2}^{(\sf in)}$};
			\buffnode{newc2}{buf2};
			\dotsnode{buf2}{dotsbuf};
			\node[below right=0.5cm of inn] (descon) {($3$) Buffering};
			\buffnode{dotsbuf}{buf3};
			\nnode{buf3}{newcn}{$c_{\sigma_n}^{(\sf in)}$};
			
			\draw[very thick,->] (in1) -- (newc1);
			\draw[very thick,->] (in2) -- (newc2);
			\draw[very thick,->] (inn) -- (newcn);

			\def\onebitnode#1#2#3{
				\node[short, text width=0.2cm, right=of #1] (#2) {#3}
			}
			
			\def\twobitnode#1#2#3{
				\node[short, text width=0.4cm, right=of #1] (#2) {#3#3}
			}
			
			\def\smallrun#1#2#3#4{
				\node[short, text width=0.8cm, right=of #1] (#2) {#3#3#3#3};
				\draw (#4.south west) -- (#2.north west);
				\draw (#4.south east) -- (#2.north east)
			}

			\def\bigrun#1#2#3#4{
				\node[short, text width=2.4cm, right=of #1] (#2) {#3#3#3#3#3#3#3#3#3#3#3#3};
				\draw (#4.south west) -- (#2.north west);
				\draw (#4.south east) -- (#2.north east)
			}
			
			\node[below left=0.6cm and 0.1cm of newc2] (p3) { $\cdots$ };
			\onebitnode{p3}{b1}{1};
			\twobitnode{b1}{b2}{0};
			\onebitnode{b2}{b3}{1};
			\onebitnode{b3}{b4}{0};
			\twobitnode{b4}{b5}{1};
			\node[right=of b5] (b) { $\cdots$ };
			
			\node[below left=2cm and 0.5cm of p3] (c) {$\cdots$ };
			\smallrun{c}{rb1}{1}{b1};
			\bigrun{rb1}{rb2}{0}{b2};
			\smallrun{rb2}{rb3}{1}{b3};
			\smallrun{rb3}{rb4}{0}{b4};
			\bigrun{rb4}{rb5}{1}{b5};
			\node[right=of rb5] (d) {$\cdots$ };
			
			\node[below right=3.2cm and 0.35cm of inn] (descon) {($4$) Blow-up};
			
			\draw[decoration={brace,raise=1pt,amplitude=0.5cm},decorate,line width=1pt] (p3.north east) -- (b.north west);
			
			\end{tikzpicture}
			
		\end{center}
		\caption{The encoding process.} \label{fig:M1}
	\end{figure}

	\paragraph{Construction:}
	
	There are several layers to our construction as depicted in Figure~\ref{fig:M1} on page~\pageref{fig:M1}. The first two layers come from code concatenation while the third and fourth layers blow-up the code further by repeating symbols and inserting ``buffers'' between inner codewords. These four layers are similar to those in the construction of  \cite{guruswami2017efficiently} and the main difference between the constructions is that we use a different inner code and the blow-up in our construction is considerably smaller.
	We now describe each step in more detail.
	
	Recall that code concatenation is the operation of viewing the message as a shorter message over a larger alphabet, then applying an error correcting code over the large alphabet (the outer code) to the message and, finally, viewing each symbol of the encoded message as a short message over $\{0,1\}$, it is encoded using a binary error correcting code (the inner code).
	
	In our construction, we view the messages as strings of length $k$ over the alphabet $\Sigma=\{0,1\}^{m'}$ (where $m'$ is some constant that we later optimize). 
	As an outer code, we use the code from \cite{haeupler2017synchronization}, which is an efficient insertion-deletion code with rate close to $1$ over $\Sigma$. This code returns a word $(\sigma_1,\sigma_2, \ldots , \sigma_n) \in \Sigma^n$.
	
	
	We construct our inner code using a greedy algorithm. First we consider all binary strings of length $m$ which consist of exactly $\beta_1 m$ $1$-runs and $0.5(1-\beta_1)m$ $2$-runs (i.e. alternating blocks of $0$'s and $1$'s where each block length is $\leq 2$) where $\beta_1$ is a parameter that we will optimize later. Then, we add a codeword to our codebook if it does not contain a subsequence of length $\geq m - \delta m$ that is also a subsequence of any codeword that is already in our codebook. Note that even though the construction time is exponential in $m$, as $m=O(1)$ in our construction this does not affect the run time by more than a constant factor.
	
	The encoding process is thus as follows (see also figure \ref{fig:M1}). We first encode the message using the outer code to a codeword of length $n$ over $\Sigma$. Then, the concatenation process takes every symbol, $\sigma_i$ of the outer codeword and maps it to a codeword from the inner code, i.e., a concatenated codeword is of the form $c_1 \circ c_2 \circ \cdots \circ c_n$ where $c_i = \textup{ENC}_{\textup{in}}(\sigma_i)$, where $\textup{ENC}_{\textup{in}}$ is the encoding function of the inner code. 
	This is not the end of the story. In order for the concatenated code to overcome a large amount of deletions caused by the channel we add an additional layer of encoding:
	\begin{enumerate}
		\item We place long buffers of zeros (of length $B$) between inner codewords. This step helps the decoder identify where an inner codeword starts and where it ends.
		\item We replace each $1$-run with an $N_1$-run and each $2$-run will become an $N_2$-run (runs of length $N_1$ and $N_2$). This helps the decoder identify if the run in the inner codeword was a run of length $1$ or $2$.
	\end{enumerate}
	This step is also similar to the construction of \cite{guruswami2017efficiently}, however, perhaps surprisingly, since we restrict our inner codewords to have a fixed number of $1$-runs and $2$-runs, this enables us to have $N_1$ and $N_2$ considerably smaller than the blow-up parameter used in \cite{guruswami2017efficiently}. It is clear that the code construction is efficient as the outer code of \cite{haeupler2017synchronization} can be encoded efficiently and the inner code is of constant length and thus can also be encoded efficiently. The last step is clearly efficient (as $B,N_1$ and $N_2$ are constants). \\
	
	\paragraph{Decoding:}
	
	We now describe our decoding algorithm. First, we identify the buffers in order to divide the string into ``decoding windows'' that should ideally represent corrupted inner codewords. 
	Second, every decoding window is decoded in the following way:  Every run longer than some threshold $T$ is replaced with a $2$-run (of the same symbol) and every run of length $\leq T$ is replaced with a $1$-run.  
	The third step of the decoding is to use a brute force decoding algorithm on each decoded window to find the closest inner codeword. Since the inner code's block length is constant this step takes constant time for every such window and hence runs in linear time in the length of the word.
	The last step in the decoding algorithm is to run the decoding algorithm of the outer code as given in \cite{haeupler2017synchronization}. This algorithm runs in time quadratic in the outer code's block length. Hence the total run time of our decoder is quadratic in the length of the message.
	
	\paragraph{Analysis:}
	
	Our analysis classifies errors to three types: 
	\begin{enumerate}
		\item Buffer deletions: these are deletions that caused a buffer between inner codewords to completely disappear.
		\item Spurious buffers: these are deletions of many $1$'s that caused the algorithm to mistakenly identify a buffer inside an inner codeword.
		\item Wrong decoding of inner codewords: these occur when the algorithm fails to decode correctly a corrupted inner codeword. \end{enumerate}
	The first and second error types can happen in the first stage of the algorithm, i.e., when the decoder identifies the buffers between blown-up inner codewords. First, the decoder might not identify a buffer when a large portion of the buffer was deleted and second, the decoder might mistakenly think that there is a buffer inside an inner codeword if many consecutive runs of the symbol $1$ were deleted. We show by using simple concentration bounds that both error types happen with exponentially small probability in $m$, the inner code block length (as $m=O(1)$ this is a constant probability, but it is still small enough to allow our construction to work). 
	The third error type we consider is when the edit distance between the sent inner codeword and the corresponding string obtained from the second step of the decoding algorithm is greater than $\delta_{\textup{in}}m$, the inner code's decoding radius. In this case, the decoding algorithm of the inner code might output a wrong codeword. 
	While this can happen, we show that the \emph{expected} edit distance between the original inner codeword and the decoded inner codeword\footnote{In the proof we use the term decoded window as we are never really sure when a codeword started and ended, but this does not affect the intuition.} 
	is smaller than $\delta_{\textup{in}} m$, for a large enough $m$, and furthermore, the edit distance is concentrated around its mean. Hence, we expect to decode successfully most of the inner codewords. 
	Finally, we show that this reasoning implies that the decoding algorithm of the outer code, which is executed at the last step of our decoding algorithm, succeeds with probability $1-\exp(-\Omega(n))$.
	
	In terms of complexity, we show that even though the construction and decoding of the inner code are exponential in the inner code's block length, the overall complexity (construction, encoding, and decoding) is dominated by the complexity of the outer code which has efficient encoding and decoding algorithms thanks to  \cite{haeupler2017synchronization}.
	
	\paragraph{Comparison to \cite{guruswami2017efficiently}.} 
	We end this high-level summary by elaborating more on the main similarities and differences between our construction and the construction of Guruswami and Li \cite{guruswami2017efficiently}. Our scheme, as well as our decoding algorithm, follow closely the scheme and algorithm of Guruswami and Li. In particular, at a high level, the encoding layers are the same as in \cite{guruswami2017efficiently}, meaning that both constructions use concatenation with the outer code from \cite{haeupler2017synchronization}, place long buffers between inner codewords and blow-up the code. Since the encoding layers are similar, the decoding steps in both papers are also similar: first identify the buffers, then use a threshold to distinguish between $1$-runs and $2$-runs, then use brute force to decode the inner codewords, and finally use the decoder of \cite{haeupler2017synchronization}. 
	The main differences between our scheme and the scheme from \cite{guruswami2017efficiently} are in the inner code that is used, the blow-up process which is finer in our scheme and our analysis which is more fine-tuned:
	
	\begin{itemize}
		\item The inner code that was used in \cite{guruswami2017efficiently}  has the property that every codeword consists of $1$-runs and $2$-runs, but they do not have restriction on the number of $1$-runs and $2$-runs. In contrast, in our work, all inner codewords have the same number of $1$-runs and $2$-runs. This property allows us to increase the rate of the inner code compared to \cite{guruswami2017efficiently} (See Propositions~\ref{deletion_prop} and \ref{inner_insertion_prop} and the discussion following them), while maintaining its robustness against insertions and deletions.

		\item In \cite{guruswami2017efficiently} the authors blow-up the code by replacing every bit with $60/(1-p)$ copies of that bit. 
		Instead of blowing-up every single bit, we blow-up each $1$-run to an $N_1$-run and each $2$-run to an $N_2$-run where $N_1\neq N_2$ and both are significantly smaller than $60/(1-p)$ ($N_1 \approx 6/(1-p)$ for example). Thus, the effect of the blow-up on the rate of our code is significantly smaller than in \cite{guruswami2017efficiently}.
		
		
		\item We improve on the analysis in \cite{guruswami2017efficiently}, of the edit distance between decoded inner codewords and the original inner codewords, by better accounting the effect of decoding errors on the edit distance. 
		One more improvement lies in our analysis where instead of using the Chernoff bound to upper bound the probability of certain events, we use the fact that binomial distributions with fixed expectations converge to a Poisson distribution. This gives a better upper bound which eventually leads to some saving when optimizing parameters. Our analysis further highlights the tight connection between the BDC and the PRC via the convergence of the binomial distribution to the Poisson distribution.
		
	\end{itemize}
	
	These modifications, as well as a careful choice of parameters, is the reason for the great saving in the rate compared to  \cite{guruswami2017efficiently}.

	\subsection{Organization}
	
	The paper is organized as follows. In Section~\ref{sec:prelim} we introduce the basic notation as well as some well known facts from probability and from previous papers. Section~\ref{sec:inner} contains the construction of our inner code. In Section~\ref{sec:construction} we give our construction and in Section~\ref{sec:analyze} we give its analysis. We give slightly improved bounds for fixed values of $p$ in Section~\ref{sec:fixed}.
	Finally, Section~\ref{sec:poisson} explains how to carry our construction and analysis to the PRC.
	
	\section{Preliminaries}\label{sec:prelim}
	For an integer $k$, we denote $[k]=\{1,2,\ldots,k\}$. 
	Throughout this paper, $\log(x)$ refers to the base-$2$ logarithm and $h(x)$ denotes the binary entropy function, that is,
	$h(x) = -x\log(x) - (1-x)\log(1-x)$, for $0<x<1$. We use $\Sigma$ to denote an alphabet and $\Sigma^{*}$ to denote all the finite length strings over $\Sigma$. For $s\in \Sigma^*$ we denote by $\left| s \right|$ the length of $s$. 
	\begin{defi}
		Let $s \in \Sigma^{*}$. The operation in which we remove a symbol from $s$ is called a \emph{deletion} and the operation in which we place a new symbol from $\Sigma$ between two consecutive symbols in $s$ is called an \emph{insertion}. 
		
		A \emph{substring} of $s$ is a string obtained by taking consecutive symbols from $s$.
		A \emph{subsequence} of $s$ is a string obtained by removing some (possibly none) of the symbols in $s$. 
	\end{defi}
	\begin{defi}
		
		
		Let $s,s'\in \Sigma^{*}$. 
		A \emph{longest common subsequences} between $s$ and $s'$, is a subsequence $s_\textup{sub}$ of both $s$ and $s'$, of maximal length. We denote by $\left| \textup{LCS}(s,s') \right|$ the length of a longest common subsequence.\footnote{Note that a longest common subsequence may not be unique as there can be a number of subsequences of maximal length.} 
		
		The \emph{edit distance} between $s$ and $s'$, denoted by ED$(s,s')$, is the minimal number of insertions and deletions needed in order to turn $s$ into $s'$.
	\end{defi}
	
	\begin{lemma}[See e.g.  Lemma 12.1 in \cite{crochemore2003jewels}]\label{lem:lcs}
		It holds that $\textup{ED}(s,s') = \left|s\right| + \left|s' \right| - 2 \left| \textup{LCS}(s,s') \right|$.
	\end{lemma}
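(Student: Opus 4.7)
The plan is to prove the equality by two matching inequalities, a constructive upper bound on $\ed(s,s')$ and an information-theoretic lower bound. Write $\ell := |\textup{LCS}(s,s')|$ throughout.

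For the upper bound $\ed(s,s') \le |s| + |s'| - 2\ell$, I would start from a fixed longest common subsequence $\ssub$ of $s$ and $s'$ and exhibit an explicit edit script. First, delete from $s$ all symbols that do not lie in the chosen embedding of $\ssub$ into $s$; this costs $|s| - \ell$ deletions and leaves $\ssub$. Then insert, in the correct positions relative to $\ssub$, all symbols of $s'$ that do not lie in the chosen embedding of $\ssub$ into $s'$; this costs $|s'| - \ell$ insertions and produces $s'$. The total cost is $|s| + |s'| - 2\ell$, which gives the desired upper bound on the minimum edit distance.

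For the lower bound $\ed(s,s') \ge |s| + |s'| - 2\ell$, I would take any optimal edit script transforming $s$ into $s'$, say with $d$ deletions and $i$ insertions, so that $\ed(s,s') = d + i$. Because only insertions and deletions are used, the set of symbols of $s$ that are never deleted appears, in order, both in $s$ (by definition) and in $s'$ (since no symbol of $s$ that survives can be altered). Thus these surviving symbols form a common subsequence of length $|s| - d$, giving $|s| - d \le \ell$, i.e.\ $d \ge |s| - \ell$. Combined with the length accounting $|s| - d + i = |s'|$, which rearranges to $i = |s'| - |s| + d$, this yields
\begin{equation*}
\ed(s,s') \;=\; d + i \;=\; 2d + |s'| - |s| \;\ge\; 2(|s| - \ell) + |s'| - |s| \;=\; |s| + |s'| - 2\ell.
\end{equation*}
Putting the two inequalities together proves the claimed identity.

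I do not anticipate any real obstacle here; the only point that deserves care is the justification that ``surviving symbols of $s$ form a common subsequence of $s$ and $s'$,'' which relies on the fact that the script has no substitutions and that insertions cannot reorder surviving symbols. Apart from that, the argument is a short double inequality.
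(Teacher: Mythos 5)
Your proof is correct. The paper does not prove this lemma itself but cites it to Crochemore--Rytter (Lemma 12.1); your double-inequality argument (a constructive edit script from a fixed LCS embedding for the upper bound, and the observation that the non-deleted symbols of $s$ in an optimal insertion/deletion script form a common subsequence for the lower bound) is exactly the standard textbook proof of this identity and matches what the cited reference does.
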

	
	
	\begin{defi}
		Let $s$ be a string. A \emph{run} $r$ in $s$ is a single-symbol substring of $s$ such that the symbol before the run and the symbol after the run are different from the symbol of the run. A run of length $\ell$ will be denoted as $\ell$-run. 
	\end{defi}
	
	For example, consider the string $\langle 0111001\rangle$. It can be written as the (string) concatenation of the alternating runs $ 0 \circ 111 \circ 00 \circ 1$. Clearly, every binary string is a concatenation of runs of alternating symbols. The following lemma of Levenshtein will be useful in the analysis of the rate of our inner code.
	
	\begin{lemma} \cite{levenshtein1966binary} \label{Lev_lemma}
		Let $s$ be a string and let $r(s)$ be the number of runs in $s$. There are at most  
		\[
		\binom{r(s)+d-1}{d}
		\]
		different subsequences of $s$ of length $\left| s \right| - d$.
	\end{lemma}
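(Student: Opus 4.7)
The plan is to associate each distinct subsequence $t$ of $s$ of length $|s|-d$ with a canonical ``deletion profile'' $(d_1, d_2, \ldots, d_{r(s)})$, where $d_i$ records how many positions from the $i$-th run of $s$ are dropped when building $t$. I will then argue that this map is injective and that its image lies inside the set of weak compositions of $d$ into $r(s)$ non-negative parts, which has size $\binom{r(s)+d-1}{d}$.

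For a fixed subsequence $t$, define its canonical embedding into $s$ to be the \emph{leftmost} (greedy) embedding: process $t$ from left to right, and map each character of $t$ to the leftmost position of $s$ that both equals that character and lies strictly to the right of the previous match. This embedding is uniquely determined by $t$, and every subsequence has one. The positions of $s$ not used by the embedding are exactly the $d$ deletions; let $d_i$ be the number of such deleted positions that lie in the $i$-th run of $s$, so that $\sum_{i=1}^{r(s)} d_i = d$ and $0 \le d_i \le \ell_i$, where $\ell_i$ is the length of the $i$-th run.

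The key structural claim is that, in the leftmost embedding, the matched positions inside each run of $s$ form a (possibly empty) prefix of that run, so that the deleted positions form a suffix. This follows from the greedy rule: if positions $p < p'$ both lie in the same run $R$ (so $s[p]=s[p']$) and $p'$ is matched to some $t[j]$ while $p$ is unmatched, then at the moment $t[j]$ was processed, $p$ was free, lay to the right of the match of $t[j-1]$, and carried the symbol $t[j]$, contradicting the leftmost choice. Consequently, the profile $(d_1,\ldots,d_{r(s)})$ completely determines which positions of $s$ are deleted (they are the last $d_i$ positions of each run) and hence determines $t$, so the map $t \mapsto (d_1,\ldots,d_{r(s)})$ is injective.

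To finish, it suffices to count weak compositions of $d$ into $r(s)$ non-negative parts, which by stars-and-bars equals $\binom{r(s)+d-1}{d}$. The additional constraints $d_i\le \ell_i$ and the fact that not every profile need arise from a leftmost embedding only shrink the image, so the stated upper bound follows. The one step that requires care is the prefix/suffix structural claim about the leftmost embedding, since it underpins the injectivity of the canonical-form map; once that is in place, the rest is a definition and a standard count.
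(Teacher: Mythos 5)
The paper cites this as Levenshtein's lemma \cite{levenshtein1966binary} without giving a proof, so there is no in-paper argument to compare against; your proposal is evaluated on its own terms.

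Your approach --- canonical leftmost (greedy) embedding, per-run deletion profile, and a stars-and-bars count of weak compositions --- is the standard route to this bound, and the overall structure (well-defined profile, injectivity, cardinality of the ambient set $\binom{r(s)+d-1}{d}$) is sound. The one step that is not justified as written is the prefix/suffix structural claim. You take arbitrary $p<p'$ in the same run with $p$ unmatched and $p'$ matched to $t[j]$, and assert that ``$p$ lay to the right of the match of $t[j-1]$.'' For arbitrary such $p,p'$ this is not immediate: nothing you have stated rules out the match of $t[j-1]$ sitting strictly between $p$ and $p'$, hence inside the same run (which would force $t[j-1]=t[j]$), in which case $p$ lies to its \emph{left} and your appeal to the leftmost rule does not directly produce a contradiction. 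The repair is short: if a run has an unmatched position to the left of a matched one, it has an \emph{adjacent} such pair, so one may take $p'=p+1$. Then the match of $t[j-1]$ is $<p'$, i.e.\ $\le p$, and since $p$ is unmatched it is in fact $<p$; so $p$ does lie strictly to its right, carries $t[j]$, and was skipped, contradicting the greedy choice. (Equivalently, one can prove by induction on $j$ the invariant that every position at or before the match of $t[j]$ within its run is matched.) With either repair the injection into weak compositions is established and the stated bound follows.
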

	
	\subsection{Facts from Probability}
	We use two probability distributions in this paper. The \emph{binomial distribution} with parameters $n$ and $p$, denoted  $\textup{Bin} (n,p)$, is the discrete probability distribution of the number of successes in a sequence of $n$ independent trials, where the probability of success in each trial is $p$ and the probability of failure is $1-p$.
	The second distribution is the \emph{discrete Poisson distribution} with parameter $\lambda$, denoted as $\text{Poisson}(\lambda)$ which is defined with the following probability mass function
	\[
	\Pr\left[ X = k \right] = \frac{e^{-\lambda} \lambda^k}{k!}.
	\]
	A well known fact about Poisson distribution is 
	\begin{lemma} \cite[Lemma 5.2]{mitzenmacher2005probability} \label{lem:poisson-sum}
		Let $X$ and $Y$ be two independent Poisson random variables with parameters $\mu_1$ and $\mu_2$. I.e., $X\sim \textup{Poisson}(\mu_1)$ and $Y\sim \textup{Poisson}(\mu_2)$. Then $Z = X+Y$ is a Poisson random variable with parameter $\mu_1 + \mu_2$. 
	\end{lemma}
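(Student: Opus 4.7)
The plan is to establish the lemma by a direct convolution computation on the probability mass functions, since $X$ and $Y$ are independent and take nonnegative integer values. The nonnegativity of $X$ and $Y$ means that the event $\{X+Y=k\}$ decomposes as the disjoint union $\bigcup_{j=0}^{k}\{X=j,\,Y=k-j\}$, and independence lets us factor each joint probability into the product of the marginals.

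Concretely, I would start by writing
\[
\Pr[Z=k] \;=\; \sum_{j=0}^{k}\Pr[X=j]\,\Pr[Y=k-j]
\;=\; \sum_{j=0}^{k}\frac{e^{-\mu_1}\mu_1^{j}}{j!}\cdot\frac{e^{-\mu_2}\mu_2^{k-j}}{(k-j)!}.
\]
Then I would pull the exponential factor $e^{-(\mu_1+\mu_2)}$ and a $1/k!$ out of the sum, which leaves
\[
\Pr[Z=k] \;=\; \frac{e^{-(\mu_1+\mu_2)}}{k!}\sum_{j=0}^{k}\binom{k}{j}\mu_1^{j}\mu_2^{k-j}.
\]
The sum is exactly the binomial expansion of $(\mu_1+\mu_2)^{k}$, so it collapses to $(\mu_1+\mu_2)^{k}$, giving
\[
\Pr[Z=k] \;=\; \frac{e^{-(\mu_1+\mu_2)}(\mu_1+\mu_2)^{k}}{k!},
\]
which is the probability mass function of a $\text{Poisson}(\mu_1+\mu_2)$ random variable, as required.

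There is no real obstacle here; the only things to be careful about are the bookkeeping of the summation index when multiplying the two Poisson PMFs and the recognition of $\sum_{j=0}^{k}\binom{k}{j}\mu_1^{j}\mu_2^{k-j}$ as a binomial expansion. As an alternative (which I would not need to write out), one can obtain the same conclusion in one line by multiplying probability generating functions: $G_X(z)=e^{\mu_1(z-1)}$ and $G_Y(z)=e^{\mu_2(z-1)}$, so by independence $G_Z(z)=G_X(z)G_Y(z)=e^{(\mu_1+\mu_2)(z-1)}$, which is the PGF of $\text{Poisson}(\mu_1+\mu_2)$, and PGFs determine the distribution on $\mathbb{Z}_{\geq 0}$.
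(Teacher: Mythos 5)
Your proof is correct and is exactly the standard convolution argument used in the cited reference \cite[Lemma 5.2]{mitzenmacher2005probability}; the paper itself does not reprove this lemma but simply quotes it. Both your main computation and your one-line alternative via probability generating functions are sound.
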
	
	We shall use the following simple lemma in our analysis:
	\begin{lemma} \label{lem:poisson_mono}
		Fix $T$ to be a non negative integer and let $Y(\lambda) \sim \textup{Poisson}(\lambda)$. Then the function 
		\[
		f(\lambda) := \Pr [Y(\lambda) \leq T] = e^{-\lambda} \sum_{i=0}^{T} \frac{(\lambda)^i}{i!}  
		\]
		is monotonically decreasing in $\lambda$.
	\end{lemma}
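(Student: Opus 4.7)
My plan is to prove the lemma by direct computation of the derivative $f'(\lambda)$ and showing it is non-positive on $[0,\infty)$. Since $f$ is a finite sum of smooth functions of $\lambda$, it is differentiable, and showing $f'(\lambda)\le 0$ with equality only at isolated points suffices for monotonic decrease.

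Concretely, write $f(\lambda)=e^{-\lambda}\sum_{i=0}^{T}\lambda^{i}/i!$ and apply the product rule. The first term contributes $-e^{-\lambda}\sum_{i=0}^{T}\lambda^{i}/i!$ and the derivative of the polynomial part gives $e^{-\lambda}\sum_{i=1}^{T}\lambda^{i-1}/(i-1)!=e^{-\lambda}\sum_{j=0}^{T-1}\lambda^{j}/j!$ after re-indexing. The two sums telescope: all terms with $i\le T-1$ cancel, leaving only $-e^{-\lambda}\lambda^{T}/T!$. Since $\lambda\ge 0$ and $T!>0$, this is $\le 0$, with strict inequality for $\lambda>0$, proving the claim.

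The main (and only) obstacle is purely bookkeeping: making sure the re-indexing is done carefully so that the telescoping is transparent, and handling the boundary case $T=0$ separately if needed, where the sum has a single term and the derivative is simply $-e^{-\lambda}$. A brief probabilistic remark that this matches the intuition $\tfrac{d}{d\lambda}\Pr[Y(\lambda)\le T]=-\Pr[Y(\lambda)=T]$ may be included to sanity-check the algebra, since the derivative of the Poisson CDF at a fixed integer cutoff is exactly minus the PMF at that cutoff.
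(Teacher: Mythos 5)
Your proposal is correct and is essentially identical to the paper's proof: both differentiate $f$ via the product rule and observe that the two resulting sums telescope to $-e^{-\lambda}\lambda^{T}/T!\le 0$. The extra remark interpreting the derivative as $-\Pr[Y(\lambda)=T]$ is a nice sanity check but adds nothing beyond what the paper already does.
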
 
	\begin{proof}
		It holds that 
		\begin{align*}
		\frac{\text{d} f}{\text{d} \lambda}(\lambda) = -e^{-\lambda} \sum_{i=0}^{T}\frac{(\lambda)^i}{i!} + e^{-\lambda} \sum_{i=0}^{T-1} \frac{(\lambda)^i}{i!} = -e^{-\lambda} \frac{(\lambda)^T}{T!} < 0 \;.
		\end{align*}
	\end{proof}
	
	The next theorem shows that if we let $n$ tend to infinity and $p$ tend to zero under the restriction that $p\cdot  n=\lambda$, then the binomial distribution converges to the Poisson distribution with parameter $\lambda$.
	\begin{theorem} \cite[Theorem 5.5]{mitzenmacher2005probability} \label{bin_p_mitz} 
		Let $\lambda>0$ be fixed. Let $\{X_n\}$ be a sequence of binomial random variables such that $X_n \sim B(n,p)$, and $\lim_{n\rightarrow \infty}np = \lambda$. Then, for any fixed $k$,
		\[
		\lim_{n\rightarrow \infty} \Pr[X_n = k] = \frac{e^{-\lambda} \lambda ^k}{k!} \;.
		\]
	\end{theorem}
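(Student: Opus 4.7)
The plan is to expand the binomial mass function and take limits factor by factor. Writing $\lambda_n := np$ (where $p = p_n$ is the parameter of $X_n$), so that $\lambda_n \to \lambda$ by hypothesis and in particular $p \to 0$, I regroup
\[
\Pr[X_n = k] = \binom{n}{k} p^k (1-p)^{n-k} = \frac{\lambda_n^k}{k!} \cdot \frac{n(n-1)\cdots(n-k+1)}{n^k} \cdot (1-p)^n \cdot (1-p)^{-k},
\]
where $k$ is held fixed as $n \to \infty$. The four factors will be handled separately.

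The first factor tends to $\lambda^k/k!$ directly from the hypothesis $\lambda_n \to \lambda$. The second is a product of $k$ terms $\prod_{j=0}^{k-1}(1-j/n)$, each tending to $1$ since $k$ is fixed. The fourth tends to $1$ since $p \to 0$ and $k$ is fixed. Thus the only substantive step is showing that the third factor $(1-p)^n$ tends to $e^{-\lambda}$.

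For this step, take natural logarithms and use the expansion $\ln(1-p) = -p - p^2/2 - p^3/3 - \cdots$, valid for $p \in [0,1)$. This yields
\[
n \ln(1-p) = -np - n \sum_{j \geq 2} \frac{p^j}{j} = -\lambda_n - O(n p^2),
\]
and $n p^2 = \lambda_n \cdot p \to \lambda \cdot 0 = 0$, so $n \ln(1-p) \to -\lambda$ and hence $(1-p)^n \to e^{-\lambda}$. Multiplying the four limits yields $\Pr[X_n = k] \to (\lambda^k/k!) \cdot 1 \cdot e^{-\lambda} \cdot 1 = e^{-\lambda}\lambda^k/k!$, as claimed.

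The only real subtlety is making the tail of the log expansion rigorous; a clean alternative I would use if pressed is the sandwich $e^{-p/(1-p)} \leq 1-p \leq e^{-p}$ for $p \in [0,1)$, raising to the $n$th power and invoking $np/(1-p) \to \lambda$ and $np \to \lambda$ to pin $(1-p)^n$ to $e^{-\lambda}$ by the squeeze theorem. Since the result is a standard textbook fact (and the paper cites it from Mitzenmacher--Upfal), I would present the derivation above and refer to the reference rather than spell out every estimate.
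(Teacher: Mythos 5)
Your proof is correct, but note that the paper itself does not prove this statement---it is quoted verbatim from Mitzenmacher and Upfal (their Theorem~5.5) and used as a black box, so there is nothing in the paper to compare against. Your argument is the standard textbook derivation: factor $\binom{n}{k}p^k(1-p)^{n-k}$ into $\frac{\lambda_n^k}{k!}\cdot\prod_{j=0}^{k-1}(1-j/n)\cdot(1-p)^n\cdot(1-p)^{-k}$, send the middle products to $1$ using that $k$ is fixed and $p\to 0$ (which indeed follows from $np\to\lambda>0$ with $n\to\infty$), and establish $(1-p)^n\to e^{-\lambda}$ via the log expansion with the tail controlled by $np^2=\lambda_n p\to 0$. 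The sandwich $e^{-p/(1-p)}\leq 1-p\leq e^{-p}$ you offer as an alternative is a clean way to avoid estimating the series remainder and is also correct. Since the cited source proves it essentially the same way, there is no substantive divergence worth flagging; citing the reference, as the paper does, would also have been acceptable.
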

	
	The next theorem provides more information about the binomial distribution in the regime where $n p = \lambda$. Specifically, it tells us when is $\Pr[X\leq T]$ an increasing function of $n$.
	\begin{theorem} \cite{anderson1965some} \label{cum_anderson}
		Let $\{X_n\}$ be a sequence of binomial random variables with parameters $n$ and $p=\lambda/ n$. Let $T$ be some parameter. Set  $f(n):= \Pr[X_n \leq T]$.  
		\begin{enumerate}
			\item If $T \leq \lambda - 1$ then for every $n  \geq \lambda$, $f(n)$ is monotonically increasing in $n$.
			\item If $\lambda \leq T$ then for every $n \geq T$,  $f(n)$ is monotonically decreasing in $n$.
		\end{enumerate}
	\end{theorem}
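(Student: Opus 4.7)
The plan is to derive an exact closed-form expression for $f(n+1)-f(n)$ whose sign can be read off by a short analysis of a single integrand. Set $\phi_T(n,p):=\Pr[\textup{Bin}(n,p)\leq T]$, so that $f(n)=\phi_T(n,\lambda/n)$. Two elementary identities provide the scaffolding: applying Pascal's identity to $\binom{n+1}{k}$ and simplifying yields
\[
\phi_T(n+1,p)-\phi_T(n,p)=-p\,\Pr[\textup{Bin}(n,p)=T],
\]
while term-by-term differentiation of the partial sum followed by telescoping yields
\[
\tfrac{\partial}{\partial p}\phi_T(n,p)=-n\,\Pr[\textup{Bin}(n-1,p)=T].
\]

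With these in hand I decompose
\[
f(n+1)-f(n)=\bigl[\phi_T(n+1,\lambda/(n+1))-\phi_T(n+1,\lambda/n)\bigr]+\bigl[\phi_T(n+1,\lambda/n)-\phi_T(n,\lambda/n)\bigr],
\]
apply the $p$-derivative identity (integrated over $[\lambda/(n+1),\lambda/n]$) to the first bracket, and the discrete difference identity to the second. Substituting $p=\lambda/n-u$ and then rescaling $v=un/\lambda$ carries the integration interval to $[0,1/(n+1)]$; after extracting the common factor $(\lambda/n)\Pr[X_n=T]$ the cancellations collapse into
\[
f(n+1)-f(n)=\tfrac{\lambda}{n}\,\Pr[X_n=T]\,(n+1)\int_{0}^{1/(n+1)}\bigl(h(v)-1\bigr)\,dv,
\]
where $h(v):=(1-v)^T\bigl(1+\tfrac{\lambda v}{n-\lambda}\bigr)^{n-T}$ and $h(0)=1$.

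The sign of $f(n+1)-f(n)$ is therefore the sign of the integral. Logarithmic differentiation gives
\[
\frac{h'(v)}{h(v)}=\frac{n\bigl(\lambda(1-v)-T\bigr)}{(1-v)(n-\lambda+\lambda v)},
\]
so $h$ has a unique critical point at $v^{*}=1-T/\lambda$, and is increasing on $[0,v^{*})$ and decreasing on $(v^{*},1)$. In Case~1 ($T\leq\lambda-1$ and $n\geq\lambda$) one checks $v^{*}\geq 1/\lambda\geq 1/(n+1)$, so $h\geq 1$ throughout the integration interval and $f(n+1)\geq f(n)$. In Case~2 ($T\geq\lambda$ and $n\geq T$) we have $v^{*}\leq 0$, so $h$ is monotonically decreasing on $[0,1)$ and stays at most $1$ throughout, giving $f(n+1)\leq f(n)$. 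The boundary $n=\lambda$ in Case~1 (only possible for integer $\lambda$) is handled separately by noting that $p=1$ forces $f(\lambda)=0$.

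The main obstacle is the algebraic bookkeeping that collapses the two bracketed contributions into a single clean integral: the cancellation between the continuous $p$-variation and the discrete jump in $n$ is exact, but the substitutions must be executed carefully so that the $\Pr[X_n=T]$ factor emerges cleanly. Once the integral representation is in hand, the sign analysis is short, and it is precisely the gap $\lambda-1<T<\lambda$ excluded by the hypotheses in which $v^{*}=1-T/\lambda$ could fall inside $[0,1/(n+1)]$, letting $h$ cross $1$ and spoiling a unidirectional sign argument.
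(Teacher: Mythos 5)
The paper cites this monotonicity result from Anderson and Samuels (1965) without proof, so there is no in-paper argument to compare against. Your blind proof is, as far as I can tell, correct and self-contained. The two auxiliary identities are both right: Pascal's rule gives the recursion $\phi_T(n+1,p)=(1-p)\phi_T(n,p)+p\,\phi_{T-1}(n,p)$, hence the discrete difference $\phi_T(n+1,p)-\phi_T(n,p)=-p\,\Pr[\textup{Bin}(n,p)=T]$; and the telescoping computation $a_k-b_k=n\bigl(\Pr[\textup{Bin}(n-1,p)=k-1]-\Pr[\textup{Bin}(n-1,p)=k]\bigr)$ gives $\partial_p\phi_T(n,p)=-n\,\Pr[\textup{Bin}(n-1,p)=T]$. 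I reproduced the decomposition, the two substitutions, and the extraction of the common factor $\Pr[X_n=T]$, and the closed form
\[
f(n+1)-f(n)=\tfrac{\lambda}{n}(n+1)\,\Pr[X_n=T]\int_0^{1/(n+1)}\bigl(h(v)-1\bigr)\,dv,
\qquad
h(v)=(1-v)^T\Bigl(1+\tfrac{\lambda v}{n-\lambda}\Bigr)^{n-T},
\]
is correct, as is the logarithmic derivative $h'/h = n(\lambda(1-v)-T)/\bigl((1-v)(n-\lambda+\lambda v)\bigr)$, so the sign analysis at $v^{*}=1-T/\lambda$ is right in both cases. Two tiny points of care: (i) the integral representation requires $n>\lambda$ so that $p_n<1$ and $\Pr[X_n=T]>0$; you handle the $n=\lambda$ boundary for Case~1, and the analogous Case~2 boundary ($T=\lambda=n$, giving $f(\lambda)=1\geq f(\lambda+1)$) is equally trivial but worth a word; (ii) the phrasing ``increasing on $[0,v^{*})$ and decreasing on $(v^{*},1)$'' should be read with the understanding that $v^{*}$ may lie outside $(0,1)$, in which case $h$ is monotone on the whole interval — which is exactly the situation exploited in Case~2. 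Neither affects the argument. Your approach is more elementary than the original (which works through a comparison of binomial and Poisson tails), and the integral representation makes the role of the hypothesis $T\notin(\lambda-1,\lambda)$ transparent: it is exactly what keeps the critical point $v^{*}$ out of the integration window.
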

	
	For concentration bounds, we will use the following versions of the Chernoff bounds. 
	
	\begin{lemma} \cite [Theorems 4.4 and 4.5]{mitzenmacher2005probability}\label{lem:chernoff}
		Suppose $X_1,\ldots,X_n$ are independent identically distributed random variables taking values in $\lbrace 0, 1\rbrace$. Let $X = \sum_{i=1}^{n}X_i$ and $\mu = \mathbb{E}\left[ X_i \right]$. 
		Then, for any $0 < \alpha < 1$:
		\[\Pr\left[X > (1+\alpha)n\mu \right] < e^{-\frac{\mu n\alpha ^2}{3}}\]
		and
		\[\Pr\left[X < (1-\alpha)n\mu \right] <e^{-\frac{\mu n\alpha ^2}{2}} \;.\]
	\end{lemma}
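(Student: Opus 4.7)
The plan is to follow the standard moment-generating-function argument that is used to prove both inequalities; the statement is cited from \cite{mitzenmacher2005probability} precisely because this derivation is by now textbook. I would treat the upper tail and the lower tail separately, since they need slightly different optimizations of the free parameter, but the skeleton of both arguments is identical.

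For the upper tail, I would fix any $t>0$ and apply Markov's inequality to the non-negative random variable $e^{tX}$, getting
\[
\Pr\bigl[X>(1+\alpha)n\mu\bigr]=\Pr\bigl[e^{tX}>e^{t(1+\alpha)n\mu}\bigr]\le \frac{\mathbb{E}[e^{tX}]}{e^{t(1+\alpha)n\mu}}.
\]
Independence factorizes the numerator as $\prod_{i=1}^n \mathbb{E}[e^{tX_i}]$. Since each $X_i\in\{0,1\}$ with $\mathbb{E}[X_i]=\mu$, a direct computation gives $\mathbb{E}[e^{tX_i}]=1+\mu(e^t-1)\le \exp\bigl(\mu(e^t-1)\bigr)$, using $1+x\le e^x$. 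Taking the product over $i$ yields $\mathbb{E}[e^{tX}]\le \exp\bigl(n\mu(e^t-1)\bigr)$, so the bound becomes $\exp\bigl(n\mu(e^t-1)-t(1+\alpha)n\mu\bigr)$. Choosing $t=\ln(1+\alpha)$ gives the classical form $\bigl(e^{\alpha}/(1+\alpha)^{1+\alpha}\bigr)^{n\mu}$, which I would then bound using the elementary inequality $\ln(1+\alpha)\ge \alpha-\alpha^2/2+\alpha^3/3-\cdots$ (equivalently, $(1+\alpha)\ln(1+\alpha)-\alpha\ge \alpha^2/3$ for $0<\alpha<1$) to obtain $e^{-\mu n \alpha^2/3}$.

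The lower-tail argument is symmetric: apply Markov to $e^{-tX}$ for $t>0$, repeat the factorization, and bound $\mathbb{E}[e^{-tX_i}]\le \exp\bigl(\mu(e^{-t}-1)\bigr)$. Setting $t=-\ln(1-\alpha)$ leads to $\bigl(e^{-\alpha}/(1-\alpha)^{1-\alpha}\bigr)^{n\mu}$, and the standard inequality $(1-\alpha)\ln(1-\alpha)+\alpha\ge \alpha^2/2$ for $0<\alpha<1$ yields $e^{-\mu n\alpha^2/2}$. This explains the asymmetric constants $1/3$ versus $1/2$ in the two bounds.

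I do not foresee a real obstacle, since every step is either an application of Markov, independence, or an elementary calculus inequality; the only place where minor care is needed is verifying the two numeric inequalities $(1+\alpha)\ln(1+\alpha)-\alpha\ge \alpha^2/3$ and $(1-\alpha)\ln(1-\alpha)+\alpha\ge \alpha^2/2$ on $(0,1)$, which is a one-variable exercise (differentiate twice and check the boundary behavior at $\alpha\to 0$). Since the lemma is quoted verbatim from \cite[Theorems 4.4 and 4.5]{mitzenmacher2005probability}, one could alternatively just cite that reference and skip the derivation entirely.
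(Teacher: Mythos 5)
Your proof is correct and is the standard moment-generating-function derivation that appears in the cited reference \cite[Theorems 4.4 and 4.5]{mitzenmacher2005probability}; the paper itself does not prove this lemma but simply imports it from that textbook. The one place worth double-checking is the elementary inequality $(1+\alpha)\ln(1+\alpha)-\alpha\ge\alpha^2/3$ on $(0,1)$, which is correct and is precisely how the textbook passes from the sharp form $\bigl(e^{\alpha}/(1+\alpha)^{1+\alpha}\bigr)^{n\mu}$ to the simpler $e^{-n\mu\alpha^2/3}$, so your argument matches the source exactly.
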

	When we have a Poisson random variable we shall use the following Chernoff bound
	
	\begin{lemma} \cite [Theorem 5.4]{mitzenmacher2005probability} \label{lem:chernoff-poisson}
		Let $X$ be a Poisson random variable with parameter $\mu$.
		\begin{enumerate}
			\item If $x> \mu$, then
			\[
			\Pr (X \geq x) \leq \frac{e^{-\mu} (e\mu)^{x}}{x^x} \;.
			\]
			\item If $x < \mu$,
			\[
			\Pr (X \leq x) \leq \frac{e^{-\mu} (e\mu)^{x}}{x^x} \;.
			\]
		\end{enumerate}
	\end{lemma}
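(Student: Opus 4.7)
The statement to prove is the pair of Chernoff-type tail bounds for a Poisson random variable $X \sim \textup{Poisson}(\mu)$. My plan is the standard moment-generating-function argument, carried out symmetrically for the two cases.

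First I would compute the moment generating function of $X$. Using the probability mass function $\Pr[X=k]=e^{-\mu}\mu^k/k!$, for any real $t$ one gets
\[
\mathbb{E}[e^{tX}] = \sum_{k=0}^{\infty} e^{tk}\,\frac{e^{-\mu}\mu^k}{k!} = e^{-\mu}\sum_{k=0}^{\infty}\frac{(\mu e^t)^k}{k!} = e^{\mu(e^t-1)}.
\]
This is a routine Taylor-series calculation I would include but not dwell on.

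Second, for the upper tail ($x>\mu$), I would apply Markov's inequality to the non-negative random variable $e^{tX}$ with $t>0$:
\[
\Pr[X\geq x] = \Pr[e^{tX}\geq e^{tx}] \leq \frac{\mathbb{E}[e^{tX}]}{e^{tx}} = e^{\mu(e^t-1)-tx}.
\]
Now optimize the exponent over $t>0$. Differentiating $\mu(e^t-1)-tx$ gives $\mu e^t - x$, so the minimizer is $e^t = x/\mu$, i.e.\ $t=\ln(x/\mu)$, which is indeed positive precisely because $x>\mu$. Substituting back, the exponent becomes $x-\mu - x\ln(x/\mu)$, which rearranges to
\[
\Pr[X\geq x]\leq e^{x-\mu}\Bigl(\tfrac{\mu}{x}\Bigr)^{x} = \frac{e^{-\mu}(e\mu)^x}{x^x},
\]
as desired.

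Third, for the lower tail ($x<\mu$), the same computation works with $t<0$: apply Markov to $e^{tX}$ using $\Pr[X\leq x]=\Pr[e^{tX}\geq e^{tx}]$, get the same expression $e^{\mu(e^t-1)-tx}$, and optimize. The stationary point is again $e^t = x/\mu$, and now $t=\ln(x/\mu)<0$ because $x<\mu$, so it lies in the admissible range. Plugging in yields exactly the same closed form $e^{-\mu}(e\mu)^x/x^x$, proving the second bound. There is no serious obstacle here; the only thing to be careful about is checking that the optimizer lies on the correct side of zero in each case (so that Markov's inequality is being applied to a genuine tail event), which is precisely guaranteed by the hypotheses $x>\mu$ and $x<\mu$ respectively.
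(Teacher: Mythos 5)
Your proof is correct, and it is the standard moment-generating-function/Markov argument; note that the paper itself does not supply a proof of this lemma but simply cites it as Theorem~5.4 of Mitzenmacher and Upfal, whose textbook proof is exactly the one you give.
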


	Another concentration bound we use is Hoeffding's inequality
	\begin{theorem} \cite[Theorem 2]{hoeffding1994probability} \label{thm:hoeffding}
		If $X_1, X_2, \ldots, X_n$ are independent random variables with finite first and second moment and $a_i \leq X_i \leq b_i$ for $1\leq i \leq n$. Let $X = \sum_{i=1}^{n}X_n$ and $\mu = \mathbb{E}[X]$ then for $t>0$
		\[
		\Pr[X - \mu > t] < \exp\left(-\frac{2t^2}{\sum_{i=1}^{n}(b_i - a_i)^2}\right) \;.
		\] 
	\end{theorem}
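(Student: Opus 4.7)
The plan is to prove this via the standard exponential moment (Chernoff) method. For any $s > 0$, apply Markov's inequality to the nonnegative random variable $e^{s(X-\mu)}$:
\[
\Pr[X - \mu > t] = \Pr\bigl[e^{s(X-\mu)} > e^{st}\bigr] \leq e^{-st}\,\mathbb{E}\bigl[e^{s(X-\mu)}\bigr].
\]
Writing $X - \mu = \sum_{i=1}^{n} (X_i - \mathbb{E}[X_i])$ and using independence, the moment generating function factors as $\prod_{i=1}^{n} \mathbb{E}[e^{s(X_i - \mathbb{E}[X_i])}]$, so the task reduces to bounding the MGF of a single bounded, mean-zero random variable.

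The key ingredient is Hoeffding's lemma: if $Y$ is a random variable with $\mathbb{E}[Y] = 0$ and $a \leq Y \leq b$ almost surely, then $\mathbb{E}[e^{sY}] \leq e^{s^{2}(b-a)^{2}/8}$. I would prove this by first writing $Y = \tfrac{b-Y}{b-a}\cdot a + \tfrac{Y-a}{b-a}\cdot b$ as a convex combination and invoking convexity of the exponential to get $e^{sY} \leq \tfrac{b-Y}{b-a}e^{sa} + \tfrac{Y-a}{b-a}e^{sb}$. Taking expectations and using $\mathbb{E}[Y]=0$, then substituting $q = -a/(b-a)$ and $h = s(b-a)$, reduces the claim to showing that $\phi(h) := -hq + \log(1-q+qe^{h})$ satisfies $\phi(h) \leq h^{2}/8$. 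This follows from $\phi(0) = \phi'(0) = 0$ together with the uniform bound $\phi''(h) = u(1-u) \leq 1/4$, where $u = qe^{h}/(1-q+qe^{h}) \in [0,1]$, and a second-order Taylor expansion around $0$.

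Plugging this back in gives $\mathbb{E}[e^{s(X-\mu)}] \leq \exp\bigl(s^{2}\sum_{i=1}^{n}(b_i-a_i)^{2}/8\bigr)$, so
\[
\Pr[X - \mu > t] \leq \exp\!\left(-st + \frac{s^{2}}{8}\sum_{i=1}^{n}(b_i - a_i)^{2}\right).
\]
Optimizing over $s > 0$ by choosing $s = 4t / \sum_{i=1}^{n}(b_i - a_i)^{2}$ makes the exponent equal to $-2t^{2}/\sum_{i=1}^{n}(b_i-a_i)^{2}$, which is the claimed bound.

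The only nontrivial step, if one can call it that, is Hoeffding's lemma itself; specifically getting the sharp constant $1/8$ in the MGF bound. The rest (Markov, factorization via independence, minimizing a quadratic in $s$) is mechanical. One could alternatively avoid the Taylor argument via a symmetrization trick --- introducing an independent copy $Y'$ and comparing $Y-Y'$ to a Rademacher-weighted variable --- but the convexity-plus-Taylor route sketched above is the most self-contained and yields the tight constant directly.
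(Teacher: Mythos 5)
Your proof is correct and is the standard Chernoff/Hoeffding argument: exponentiate, apply Markov, factor the MGF by independence, invoke Hoeffding's lemma with the sharp constant $1/8$ obtained via the convexity-plus-Taylor computation on $\phi(h)=-hq+\log(1-q+qe^h)$, and optimize the free parameter $s$. The paper does not prove this theorem at all --- it cites it verbatim as Theorem~2 of Hoeffding --- so there is no in-paper proof to compare against; your argument is a faithful, self-contained reconstruction of the classical one. Two tiny remarks: your chain of inequalities yields $\leq$ rather than the strict $<$ written in the paper's statement (Hoeffding's original theorem in fact reads $\leq$, so the strictness in the paper is a transcription slip, as is its $\sum_{i=1}^{n}X_n$ in place of $\sum_{i=1}^{n}X_i$); and when invoking Taylor around $0$ you should state that $\phi$ is twice continuously differentiable on $\mathbb{R}$ (true here, since $1-q+qe^h>0$) so that the Lagrange-remainder form applies --- but these are presentation points, not gaps.
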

	To approximate binomial coefficients we shall use the following lemma
	\begin{lemma} \label{lemma:stirling}
		For any $n, k \in \mathbb{N}$ such that $k/n \leq 1/2$ we have,
		\[
		2^{nh\left(\frac{k}{n}\right) - O(\log n)} \leq \binom{n}{k} \leq 2^{n h\left(\frac{k}{n}\right)} \;.
		\]
	\end{lemma}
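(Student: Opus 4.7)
The plan is to prove the two inequalities separately, using two different and standard tools: the binomial expansion for the upper bound, and Stirling's approximation for the lower bound.

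For the upper bound, I would observe that with $p = k/n$ we have $p^k(1-p)^{n-k} = 2^{-nh(k/n)}$, directly from the definition of the binary entropy function. Then, expanding $1 = (p + (1-p))^n = \sum_{i=0}^{n} \binom{n}{i} p^i (1-p)^{n-i}$ and keeping only the $i=k$ term gives $\binom{n}{k} p^k (1-p)^{n-k} \leq 1$, which rearranges to $\binom{n}{k} \leq 2^{nh(k/n)}$. This part is clean and requires no hypothesis on $k/n$.

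For the lower bound, I would invoke Stirling's formula in the form $m! = \sqrt{2\pi m}\,(m/e)^m (1+O(1/m))$ for each of the three factorials in $\binom{n}{k} = n!/(k!(n-k)!)$. The $(\cdot/e)^{(\cdot)}$ parts combine to give exactly $n^n/(k^k (n-k)^{n-k}) = 2^{nh(k/n)}$, and the $\sqrt{2\pi \cdot}$ factors combine into a prefactor $\sqrt{n/(2\pi k(n-k))}$. Since $k/n \leq 1/2$, we have $k(n-k) \leq n^2/4$, and in the worst case $k \geq 1$ gives $k(n-k) \geq n-1$; in both extremes the prefactor is at least $1/\sqrt{n}$ up to constants, so it contributes at least $-O(\log n)$ to the exponent of $2$. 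The $(1+O(1/m))$ Stirling error terms also contribute only $O(1/n)$ to the log. (The edge case $k=0$ gives $\binom{n}{0}=1=2^{nh(0)}$ trivially, and $k=n$ is excluded by $k/n \leq 1/2$ except when $n=0$.)

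There is really no main obstacle here: the lemma is a textbook-level entropy approximation of binomial coefficients, and the only thing to be slightly careful about is that when $k$ is very small (say $k=1$) the prefactor from Stirling becomes small, but it is still bounded below by $n^{-O(1)}$, which only costs $O(\log n)$ in the exponent as allowed by the statement. One could alternatively derive the lower bound from the same binomial-expansion identity by noting that $\binom{n}{k}$ is the largest of the $n+1$ terms (for $k/n \leq 1/2$, or at least within a constant factor of the largest), which loses a factor of at most $n+1$ and hence at most $\log(n+1) = O(\log n)$ in the exponent; this avoids Stirling entirely and is perhaps the cleanest route.
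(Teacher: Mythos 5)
Your proof is correct and matches the paper's approach: the paper gives no argument of its own but cites that the bounds ``follow from Stirling's formula,'' which is precisely what you invoke for the lower bound, and the $\binom{n}{k}p^k(1-p)^{n-k}\le 1$ identity you use for the upper bound is the standard companion step. Your Stirling-free alternative for the lower bound (that $\binom{n}{k}\big(\tfrac{k}{n}\big)^k\big(1-\tfrac{k}{n}\big)^{n-k}$ is the largest of the $n+1$ summands when $k/n\le 1/2$, hence at least $\tfrac{1}{n+1}$) is a clean bonus, though the $O(\log n)$ slack in the statement is loose enough that either route works.
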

	The proofs of the bounds follow from Stirling's formula, e.g., see \cite[Section 3.2]{guruswami2012essential}

	\subsection{The Code of Haeupler and  Shahrasbi \cite{haeupler2017synchronization}}\label{sec:HS}

	Our construction relies on the following code of Haeupler and  Shahrasbi \cite{haeupler2017synchronization}.
	
	
	\begin{theorem} [{\cite[Theorem 1.1]{haeupler2017synchronization}}] \label{Haupler_thm}
		For every $\epsout >0$ and $\delout\in (0,1)$ there exists $n_0$ so that for every $n>n_0$ there is an integer $k$ satisfying $ k/n > 1 - \delout -\epsout$, an alphabet $\Sigma$ of size $ O_{\epsout}(1)$ and
		an encoding and decoding maps $E:\Sigma^k \mapsto \Sigma^n$, $D :\Sigma^* \mapsto \Sigma^k$, respectively,  such that if $\ed(E(x),y) \leq \delout n$ then $D(y) = x$. Further
		$E$ and $D$ are explicit and can be computed in linear and quadratic 
		time in $n$, respectively.
	\end{theorem}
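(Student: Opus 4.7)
The plan is to follow the synchronization-string framework of Haeupler and Shahrasbi, which converts any near-Singleton-bound Hamming-metric code into an insertion-deletion code of essentially the same rate at only a constant-alphabet overhead. The construction has two ingredients. First, a large-alphabet Hamming code $C_H$ of block length $n$ -- for instance a Reed--Solomon code over some alphabet $\Sigma_H$ -- with rate at least $1-\delout-\epsout/2$ and hence minimum distance at least $(\delout+\epsout/2)n$; such a code corrects up to $(\delout+\epsout/2)n$ \emph{half-errors}, where a symbol erasure counts as one half-error and a symbol substitution counts as two. Second, a \emph{synchronization string} $S=s_1 s_2 \cdots s_n$ over an auxiliary alphabet $\Sigma_s$ of size $O_{\epsout}(1)$, chosen so that for every pair of non-overlapping substrings of $S$ their edit distance is at least a $(1-\epsout)$ fraction of the sum of their lengths. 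Such strings can be built explicitly in polynomial time by a greedy/derandomized argument. The final code has alphabet $\Sigma = \Sigma_H \times \Sigma_s$, which remains $O_{\epsout}(1)$ once $|\Sigma_H|$ is fixed as a constant depending only on $\epsout$.

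The encoder $E:\Sigma^k \to \Sigma^n$ would map $x$ to $((c_1,s_1),\ldots,(c_n,s_n))$ where $(c_1,\ldots,c_n)=C_H(x)$; its rate equals the rate of $C_H$, which is at least $1-\delout-\epsout$ after absorbing lower-order terms. The decoder $D$, given a received $y$ with $\ed(E(x),y)\le \delout n$, proceeds in two phases. It first projects $y$ onto the synchronization coordinate and aligns the projected string to the fixed global synchronization string $S$ via an $O(n^2)$ dynamic program, producing for each received symbol a best guess of the Hamming index $i\in[n]$ it belongs to. Using this alignment it assembles a word in $(\Sigma_H \cup \{?\})^n$, placing each Hamming payload at its guessed index with $?$ in the positions left uncovered, and then runs the Hamming decoder of $C_H$ on this word. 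Both phases are polynomial, giving linear encoding and quadratic decoding.

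Correctness is a half-error accounting argument: the synchronization-string property guarantees that any realization of the channel's $\le\delout n$ insdel operations can be charged so that each deletion contributes one half-error (an erasure at the right Hamming index) and each insertion contributes at most one half-error (either a spurious-symbol erasure or a correctly-placed substitution), with an additional $O(\epsout)n$ additive slack for boundary mismatches. Hence at most $(\delout+O(\epsout))n$ half-errors reach the Hamming decoder, within its correction radius after rescaling $\epsout$ up front. The main obstacle, and the heart of the Haeupler--Shahrasbi paper, is the construction and analysis of synchronization strings themselves: producing an explicit string of length $n$ over a constant-size alphabet with the strong disjoint-substring edit-distance property, and proving the reduction from insdel errors to half-errors via a careful alignment/edit-distance argument, is where essentially all of the technical work lies.
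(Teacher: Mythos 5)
This statement is a black-box citation of Theorem~1.1 of Haeupler and Shahrasbi (\cite{haeupler2017synchronization}); the present paper supplies no proof of its own, so there is no in-paper argument to compare against and your task is really to reconstruct the cited proof. Your reconstruction captures the correct high-level machinery --- attach an explicit synchronization string coordinate to each symbol of a near-Singleton Hamming-metric code, realign the received word to the fixed synchronization string by an $O(n^2)$ dynamic program, and feed the reassembled word (with erasures at uncovered positions) to the Hamming-metric decoder --- but one ingredient is wrong in a way that matters for the theorem as stated. Reed--Solomon codes of block length $n$ require alphabet size at least $n$, so they cannot give a fixed alphabet of size $O_{\epsout}(1)$. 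Haeupler and Shahrasbi instead instantiate this step with an explicit near-MDS family over a genuinely constant alphabet (linear-time encodable/decodable codes meeting the Singleton bound up to $\epsilon$, obtained e.g.\ via concatenation or the Guruswami--Indyk construction); this choice is precisely what produces the $O_{\epsout}(1)$ alphabet and the linear/quadratic running times, and it cannot be swapped out for RS.

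Two smaller inaccuracies are worth flagging. First, the synchronization-string property is $\ed(S[i,j),S[j,k)) > (1-\epsilon)(k-i)$ for all $i<j<k$, i.e.\ it constrains \emph{adjacent} intervals sharing an endpoint, not arbitrary disjoint substrings; the latter requirement is unachievable over a fixed alphabet, since two far-apart short substrings can coincide. Second, your half-error charge is internally inconsistent: a ``correctly-placed substitution'' costs two half-errors, not one, so the claim that each insertion contributes at most one half-error does not stand as written. The intended conclusion --- roughly $\delta n$ half-errors plus $O(\epsilon n)$ slack after alignment --- is indeed what Haeupler--Shahrasbi prove, but it rests on their quantitative misdecoding analysis rather than the per-operation charge you sketch.
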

	
	We shall denote with $\Rout := k/n$ the rate of this code, which will be used as the outer code in our construction.

	\section{The Inner Code}\label{sec:inner}
	
	In this section we describe the construction of our inner code. Before giving the construction we define a set of strings from which we shall pick our codewords.
	
	\begin{defi}\label{def:S}
		We denote with $S \subset \{0,1\}^*$  the set containing all binary strings $s$ that start and end with the symbol $1$ and that contain only $1$-runs and $2$-runs. 
		
		Let $\beta_1 \in [0,1]$. Define $S_{m,\beta_1}\subset S$ to be the set of all $s\in S$ of length $m$, such that the number of $1$-runs in $s$ is exactly $\beta_1 m$ and the number of $2$-runs in $s$ is exactly $\beta_2m = (1-\beta_1)m/2 $. Denote $\beta:= \beta_1 + \beta_2$. 
	\end{defi}
	
	\begin{remark}\label{rem:odd}
		We observe that as every string in $S$ begins and ends with the same symbol, the number of runs in it is odd. 
		Similarly,  $\beta m$, in the definition of  $S_{m,\beta_1}$,  is an odd integer.
	\end{remark}
	
	
	Our goal in this section is to construct a code $C \subset S_{m,\beta_1}$ 
	such that the length of a longest common subsequence of any two different codewords is $< m -\delta m$.
	
	We construct this code using the natural greedy algorithm:  We consider all strings in $S_{m,\beta_1}\subset S$ and greedily choose strings that are far from each other. To reason about the parameters of the code, we need the following propositions. 
	
	The first proposition gives an upper bound on the size of the ``deletion ball'', i.e., given a string $s\in S_{m, \beta_1}$ it upper bounds the number of different subsequences of $s$ of length $m - \delta m $, that belong to $S$.

	\begin{prop} \label{deletion_prop}
		Let $s\in S_{m,\beta_1}$. It holds that 
		\[
		\#\left\{ s'\in S \mid s' \text{ is a subsequence of } s \text{ and } |s'| =m-\delta m \right\} \leq  \binom{(\beta +\delta)m }{\delta m} \;.
		\]
	\end{prop}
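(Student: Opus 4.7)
The plan is to obtain this bound as essentially a direct consequence of Levenshtein's Lemma (Lemma~\ref{Lev_lemma}), together with the observation that the set we are counting is a subset of the set of \emph{all} subsequences of $s$ of the prescribed length.

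First I would record that for any $s \in S_{m,\beta_1}$, the number of runs is exactly $r(s) = \beta_1 m + \beta_2 m = \beta m$, since $s$ consists of $\beta_1 m$ many $1$-runs and $\beta_2 m$ many $2$-runs (and nothing else). The length of each target subsequence is $m - \delta m$, so the number of deleted symbols is $d = \delta m$.

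Next, I would apply Lemma~\ref{Lev_lemma} with these parameters. It gives that the total number of distinct subsequences of $s$ of length $m - \delta m$ is at most
\[
\binom{r(s) + d - 1}{d} \;=\; \binom{\beta m + \delta m - 1}{\delta m} \;\le\; \binom{(\beta + \delta)\, m}{\delta m},
\]
where the last inequality uses monotonicity of $\binom{n}{k}$ in $n$ for fixed $k$. Since $\{s' \in S : s' \text{ is a subsequence of } s,\ |s'| = m - \delta m\}$ is contained in the set of all subsequences of $s$ of length $m - \delta m$, the same upper bound applies to it, which is exactly the statement of the proposition.

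I do not expect any obstacle: the restriction $s' \in S$ only shrinks the set being counted, so no extra work is needed to exploit it, and the slight slack of replacing $\beta m + \delta m - 1$ by $(\beta + \delta)m$ in the top of the binomial is harmless and makes the bound cleaner for the later optimization of parameters.
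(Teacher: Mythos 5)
Your proof is correct and follows exactly the same approach as the paper: apply Lemma~\ref{Lev_lemma} with $r(s) = \beta m$ runs and $d = \delta m$ deletions, relax $\binom{\beta m + \delta m - 1}{\delta m}$ to $\binom{(\beta+\delta)m}{\delta m}$, and observe that restricting to $s' \in S$ only shrinks the set being counted. No differences worth noting.
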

	
	\begin{proof}
		
		By definition, $s$ is a binary string that contains exactly $\beta m$ runs. Let $H$ be the set of all the subsequences obtained from $s$ by applying $\delta m$ deletions. According to Lemma~\ref{Lev_lemma}, the size of $H$ is at most $ \binom{\beta m + \delta m-1}{\delta m}< \binom{\beta m + \delta m}{\delta m}$. Clearly if we restrict further and consider only those strings in $H\cap S$ we can only decrease the size of the set.
	\end{proof}
	
	The second proposition upper bounds the size of the ``insertion ball'', i.e., given a string $s'\in S$ of length $m-\delta m$, it gives an upper bound on the number of strings $s\in S_{m,\beta_1}$ that can be obtained from $s'$ by performing $\delta m$ insertions. The proof of this proposition is considerably more elaborated.

	\begin{prop} \label{inner_insertion_prop}
		Fix $\ssub \in S$ such that $\left| \ssub \right| = m - \delta m$. The number of binary strings in $S_{m,\beta_1}$ that contain $\ssub$ as a subsequence is at most
		\[
		O(\delta m)  \cdot \binom{\beta_1 m + \beta_2 m}{\delta m} \;.
		\]
	\end{prop}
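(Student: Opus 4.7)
The plan is to count by associating to each such $s$ a canonical combinatorial object and upper-bounding the number of such objects. The key tool is the leftmost greedy embedding of $\ssub$ into $s$ together with a classification of the runs of $s$.

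Fix the leftmost greedy embedding of $\ssub$ into $s$: the $j$-th character of $\ssub$ is matched to the leftmost unused position of $s$ whose symbol equals $\ssub_j$. This partitions the $m$ positions of $s$ into $m-\delta m$ matched positions and $\delta m$ inserted positions. Since each run of $s$ has length at most $2$, every run falls into one of five types according to its matched/inserted profile: $(A)$ a $1$-run fully matched; $(B)$ a $2$-run fully matched, which must correspond to a $2$-run of $\ssub$; $(C)$ a $2$-run with one matched character and one inserted (under the greedy convention the matched character is necessarily the leftmost of the two, since otherwise the leftmost could have been chosen first); $(D)$ a $1$-run entirely inserted; $(E)$ a $2$-run entirely inserted. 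Letting $a, b, c, d, e$ denote the number of runs of each type, we have $a+d=\beta_1 m$, $b+c+e=\beta_2 m$, and the total insertion count $c+d+2e=\delta m$.

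Next I argue that the type-assignment function $\tau : [\beta m] \to \{A, B, C, D, E\}$ together with $\ssub$ determines $s$ uniquely: the contributing runs (types $A, B, C$), read in order with their matched characters, must spell out $\ssub$, and this forces each run's length and symbol and hence $s$ itself. So it suffices to count valid $\tau$'s. I parameterize by the subset $F = \tau^{-1}(\{C, D, E\}) \subseteq [\beta m]$ of \emph{insertion-affected} runs: each element of $F$ contributes $1$ or $2$ to the total of $\delta m$ insertions, so $\lceil \delta m / 2 \rceil \leq |F| \leq \delta m$, yielding at most $(\delta m / 2 + 1)$ choices for $|F|$ and at most $\binom{\beta m}{|F|} \leq \binom{\beta_1 m + \beta_2 m}{\delta m}$ choices for $F$ itself. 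Within a fixed $F$, the refinement of its elements into the three types $C, D, E$, together with the choice of whether each $2$-run of $\ssub$ is kept inside a single type-$B$ run or split across two type-$A/C$ runs, is tightly constrained by the counts $a + d = \beta_1 m$, $b + c + e = \beta_2 m$ and by the requirement that the contributing runs spell out $\ssub$. A Vandermonde-identity argument on pairs $\binom{g_1}{c}\binom{\beta m - g}{e}$ (where $g_1$ and $g$ track the number of size-$1$ groups and total groups induced by the split pattern of $\ssub$'s $2$-runs) collapses the remaining choices to a constant multiplicative factor.

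The main obstacle is exactly this last combinatorial bookkeeping: one must verify that the compatibility constraints collapse cleanly to $O(\delta m)\binom{\beta_1 m + \beta_2 m}{\delta m}$, as opposed to producing an additional exponential factor from the $2^{a_2}$ possible split-versus-keep decisions on $\ssub$'s $2$-runs. The delicate point is that once $F$ and the split pattern are fixed, the type counts $a, b, c, d, e$ are rigidly determined by the three linear constraints above, so the only genuine freedom is the ordered placement of the groups into $[\beta m]$ with matching parities (since $s$'s runs alternate symbols). Bounding this placement freedom by $\binom{\beta_1 m + \beta_2 m}{\delta m}$ and summing the $O(\delta m)$ worth of auxiliary parameters gives the claimed bound.
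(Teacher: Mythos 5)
Your framework is sensible and the observations in it are correct: the five-way classification of $s$'s runs under the leftmost greedy embedding is legitimate (and you correctly argue that in a type-$C$ run the matched bit must be the leftmost), the map $\tau \mapsto s$ is injective so counting valid $\tau$'s suffices, and the bound $\lceil \delta m/2\rceil \le |F| \le \delta m$ holds since every element of $F$ contributes one or two of the $\delta m$ insertions. But the proof has a genuine gap, and you name it yourself: after fixing $F = \tau^{-1}(\{C,D,E\})$, you need to bound the number of ways to complete $\tau$ (the $C/D/E$ refinement of $F$ and the $A/B$ refinement of $[\beta m]\setminus F$) so that the matched bits read off in order equal $\ssub$. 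You assert that ``a Vandermonde-identity argument \ldots collapses the remaining choices to a constant multiplicative factor'' and that ``once $F$ and the split pattern are fixed, the type counts are rigidly determined,'' but determining the \emph{counts} $a,b,c,d,e$ from linear constraints does not bound the number of \emph{assignments}, and the danger you flag --- an exponential $2^{\#\{2\text{-runs of } \ssub\}}$ worth of split-versus-keep decisions --- is exactly the thing that has to be ruled out and is not. As written, the argument stops precisely where the work begins.

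The paper sidesteps this by parameterizing from the $\ssub$ side rather than the $s$ side. Its Algorithm~1 first chooses an integer $x$ and a subset of $x$ of the $r_2$ $2$-runs of $\ssub$ to split (operation~\ref{second_type_ins_0}), appends the required number of fresh $1$-runs deterministically at the end, and then chooses a subset of the non-frozen $1$-runs to lengthen (operation~\ref{first_type_ins}); the other run-embedding cases are shown to be compositions of these two primitives, and frozen $1$-runs are excluded exactly so that the composite is still a \emph{leftmost} embedding and no $s$ is generated twice. That structure yields, for each $x$, a clean product $\binom{r_2}{x}\binom{\beta m - r_2}{\,\delta m - (\beta m - r_1 - r_2 - x)\,}$ with bottoms summing to at most $\delta m$, so the Vandermonde-type inequality $\binom{a}{j}\binom{b}{k} \le \binom{a+b}{j+k}$ applies directly and gives $\binom{\beta m}{\delta m}$ per $x$, with the $O(\delta m)$ coming from summing over $x$. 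If you want to rescue your $s$-side parameterization you would need to prove a uniqueness lemma of comparable strength --- for instance, that the ordered positions of the $D/E$ runs within $F$ together with the alternation of symbols in $s$ \emph{force} which $2$-runs of $\ssub$ were split --- and that is precisely the bookkeeping your sketch postpones.
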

	
	We note that the equivalent propositions from \cite{guruswami2017efficiently} gave upper bounds of $\binom{m}{\delta m}$ and $O(\delta m)  \cdot \binom{m}{\delta m}$ respectively. The main reason for our saving is that we restrict our codewords to have exactly $\beta_1 m$ $1$-runs and $\beta_2 m$ $2$-runs. This saving is one of the places where we improve upon \cite{guruswami2017efficiently}. This improvement affects the rate of the inner code as we shall later see.
	
	The proof of the proposition relies on an algorithm for generating all strings $s \in S_{m,\beta_1}$ such that $\ssub$ is a subsequence of $s$. As in \cite[Lemma 2.3]{guruswami2017deletion}, in order to avoid over counting, we will generate all such $s$ by finding the lexicography first occurrence of $\ssub$ in $s$.  We first explain the idea behind the algorithm and then prove Proposition~\ref{inner_insertion_prop}.
	
	Denote  $\ssub= \langle b_1 b_2 \ldots  b_{m-\delta m} \rangle$, where $b_i\in\{0,1\}$. In order to obtain a string $s\in S_{m,\beta_1}$ from $\ssub$  we need to choose indices $1 \leq n_1 < n_2 < \ldots < n_{m-\delta m} \leq m$ for the bits of $\ssub$ in $s$. Moreover, to make sure that the locations chosen are indeed the lexicography first occurrence of $\ssub$ in $s$, the entries between $n_i$ and $n_{i+1}$ (for $1\leq i \leq m-\delta m -1$) must contain the opposite bit of the symbol in location $n_{i+1}$.
	
	Since both $\ssub$ and $s$ consist of just $1$-runs and $2$-runs, this puts some restrictions on the embedding of $\ssub$ in $s$, e.g., we cannot have $n_{i+1} - n_i \geq 3$ (all locations between them (and maybe longer) are identical and hence give a too long run). In particular, and more formally, we have the following restrictions
	\begin{enumerate}
		\item The first bit in $\ssub$ must be located as the first bit in $s$. This is because the first bit in $s$ must be a $1$ bit. I.e., $n_1 = 1$.
		\item Let $b_i$ be a $1$-run in $\ssub$ and assume w.l.o.g. that it is a $0$ bit. Its location, $n_i$, must be chosen such that the location of the next bit in $\ssub$, $b_{i+1}$, is either 
		\begin{enumerate}
			\item $n_{i+1} = n_i + 1$. I.e., $\langle b_i,b_{i+1}\rangle =\langle 01 \rangle$ in $\ssub$ is mapped to $\langle 01 \rangle$ in $s$, or
			\item \label{first_type_ins} $n_{i+1} = n_i + 2$. I.e., $\langle 01 \rangle$ in $\ssub$ is mapped to $\langle 001 \rangle$ in $s$.
		\end{enumerate}
		The case where $b_i= 1 $ is completely analogous. 
		\item Let $b_i, b_{i+1}$ be a $2$-run in $\ssub$ (i.e., the symbols of $b_i$ and $b_{i+1}$ are the same) and assume w.l.o.g. that both symbols are $0$. The locations $n_i, n_{i+1}, n_{i+2}$ of $b_i, b_{i+1}, b_{i+2}$ ($b_{i+2}$ is a $1$ bit) in $s$ must be chosen in accordance with one of the following cases:
		\begin{enumerate}
			\item $n_{i+1} = n_{i}+1$ and $n_{i+2} = n_i + 2$. I.e., $\langle 001 \rangle$ in $\ssub$ is mapped  to $\langle 001 \rangle$ is $s$.
			\item \label{second_type_ins_0}$n_{i+1} = n_{i}+2$ and $n_{i+2} = n_i + 3 $. I.e., $\langle 001 \rangle$ in $\ssub$ is mapped  to $\langle 0101 \rangle$ in $s$.
			\item \label{second_type_ins_1} $n_{i+1} = n_{i}+2$ and $n_{i+2} = n_i + 4 $. I.e., $\langle 001 \rangle$ in $\ssub$ is mapped  to $\langle 01001 \rangle$ in $s$. 
			\item  \label{second_type_ins_3} $n_{i+1} = n_{i}+3$ and $n_{i+2} = n_i + 4 $. I.e., $\langle 001 \rangle$ in $\ssub$ maps to $\langle 01101 \rangle$ in $s$.
			\item \label{second_type_ins_4}$n_{i+1} = n_{i}+3$ and $n_{i+2} = n_i + 5 $. I.e., $\langle 001 \rangle$ in $\ssub$ is mapped  to $\langle 011001 \rangle$ in $s$.
		\end{enumerate}
		The case where $b_i=1$ is completely analogous. 
		\item If $n_{m-\delta m} < m$, then the remaining bits of $s$ must be filled with $1$-runs and $2$-runs such that the total number of $1$-runs and $2$-runs is exactly $\beta_1 m$ and $\beta_2 m $, respectively.
	\end{enumerate}

	It is not hard to verify that any arrangement that does not follow the restrictions above will either contain a run of length $3$ or more, will not have the right number of $1$-runs, or will not correspond to the lexicographically first embedding of $\ssub$ in $s$.
	
	We shall think of the cases above as describing operations that can be performed on a string $s'$. E.g. if $s'= \langle 101001\rangle$ and we apply \ref{second_type_ins_4} to the last three bits in $s'$ then we will get the string $\langle  1010{\color{blue}11}0{\color{blue}0}1\rangle$, where the blue symbols are the symbols that were added from the application of \ref{second_type_ins_4} (in other words, the symbols colored black are the embedding of the original string). If we then apply, say,  \ref{first_type_ins} to the second and third bits of the new string then we will get the string 
	$\langle  10{\color{blue}0}10{\color{blue}11}0{\color{blue}0}1\rangle$
	etc.
	
	To simplify matters note that if we consider a $2$-run in $s'$, say  $\langle 001\rangle$ and we wish to apply \ref{second_type_ins_1} on it, i.e. map it to $\langle 01001\rangle$ in $s$, then we can think about this as first applying \ref{second_type_ins_0} to $\langle 001\rangle$, obtaining the string $\langle 0101\rangle$ and then applying to the last two bits \ref{first_type_ins}, getting the string $\langle 01001\rangle$. I.e. we can simulate \ref{second_type_ins_1} by first applying \ref{second_type_ins_0} and then applying \ref{first_type_ins}. 
	Similarly, we can simulate each of the operations  \ref{second_type_ins_3}  and  \ref{second_type_ins_4} using   \ref{second_type_ins_0} and then applying \ref{first_type_ins}  to the appropriate bits (for \ref{second_type_ins_4} we need to apply \ref{second_type_ins_0} and then \ref{first_type_ins}  to two different locations).

	
	Using the above terminology, we next describe an algorithm that given a string $\ssub$ generates $s\in S_{m,\beta_1}$ such that $\ssub$ is a subsequence of $s$. The algorithm will first select a subset of the $2$-runs in $\ssub$ and apply  \ref{second_type_ins_0} to them. Then it will add more $1$-runs to the resulting string, locating them to the right of the last bit. Finally, it will apply  \ref{first_type_ins} to several $1$-runs. 
	
	There is a delicate point that we wish to stress before giving the algorithm. In this last step we restrict the $1$-runs to which we can apply \ref{first_type_ins}. To illustrate why the restriction is needed, consider the following example: 
	Consider the string $\langle 001 \rangle$ and  apply \ref{second_type_ins_0} to it. This generates the string  $ \langle 0{\color{blue} 1}01\rangle $, where, as before,  the blue symbols represent the symbols that were added in the embedding. If we now apply \ref{first_type_ins} to the first two bits then we would get  $ \langle 0{\color{blue} 01}01 \rangle$. This however, is not the first lexicographical embedding of $\langle 001 \rangle$ in $\langle 00101 \rangle$ (which is $\langle 001 {\color{blue} 01} \rangle$). Thus, if we wish to construct a lexicographically first embedding of $\ssub$ in the resulting string $s$ then in the last step, where we apply  \ref{first_type_ins} to several runs, we should never apply \ref{first_type_ins} to the first bits resulting from the application of  \ref{second_type_ins_0} in the first step.
	
	In view of the above discussion we say that a $1$-run is \emph{frozen} if it is the first bit of a substring that resulted from applying  \ref{second_type_ins_0}. In other words, a $1$-run is not frozen if it is either an original $1$-run of $\ssub$, a $1$-run that was added in the second step, or if it is the $2$nd or $3$rd bits  generated by applying \ref{second_type_ins_0}  (i.e. if we had $\langle 001 \rangle \rightarrow \langle 0101\rangle $ then the  non-frozen $1$-runs are the blue-colored bits $ \langle 0{\color{blue} 10}{1} \rangle$, and the last bit may also be non-frozen).
	

	Let $r_1$ and $r_2$ be the number of $1$-runs and $2$-runs in $\ssub$ and let $x$ be an integer such that $0\leq x \leq \delta m$.\\
	
	\begin{algorithm}[H] \label{alg:1}
		\SetAlgoLined
		\DontPrintSemicolon
		
		\SetKwInOut{Input}{input}
		\SetKwInOut{Output}{output}
		\SetNlSty{large}{[}{]}
		\Input{$\ssub \in S$ such that $\left|\ssub\right| = m - \delta m$, \newline
			and $0\leq x\leq \delta m$}
		\Output{A string $s\in S_{m,\beta_1}$ such that $\ssub$ is a subsequence of $s$}
		Select $x$ $2$-runs in $\ssub$ and apply \ref{second_type_ins_0} to them. \label{step:1} \;
		\tcc{total number of $1$-runs is $r_1+3x$ and of $2$-runs is $r_2-x$} \tcc{total number of non-frozen $1$-runs is $r_1+2x$} 
		Add  $\beta  m - r_1 - r_2 - 2x$ many $1$-runs to the right of the string  \label{step:2}\;
		\tcc{total number of runs is $\beta m$ and number of $1$-runs is $ \beta m - r_2+x$} 
		\tcc{total number of non-frozen $1$-runs is}
		\tcc{$\beta  m - r_1 - r_2 - 2x+ r_1+2x = \beta m -r_2$}
		Select $\delta m - (\beta m - r_1 - r_2 - x)$ non-frozen $1$-runs and apply \ref{first_type_ins} to each of them \;
		\tcc{length of resulting string is exactly $m$ \label{step:3}}
		\caption{Embed}\end{algorithm}
	\begin{claim}
		Algorithm \ref{alg:1} returns a string in $S_{m,\beta_1}$.
	\end{claim}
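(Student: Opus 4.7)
The plan is to verify directly the four defining properties of $S_{m,\beta_1}$ for the output string $s$: length exactly $m$; starts and ends with the symbol $1$; contains only $1$-runs and $2$-runs; and has exactly $\beta_1 m$ $1$-runs and $\beta_2 m$ $2$-runs. I will track, after each of the three algorithmic steps, three quantities---current length, number of $1$-runs, and number of $2$-runs---using the fact that each of the two local operations \ref{first_type_ins} and \ref{second_type_ins_0} has a context-independent effect on all three.

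Concretely, operation \ref{second_type_ins_0} always adds $1$ bit, eliminates one $2$-run, and creates three new $1$-runs, since it replaces a substring like $00$ by the alternating block $010$ while leaving the adjacent runs in $\ssub$ untouched. Operation \ref{first_type_ins} always adds $1$ bit, eliminates one $1$-run, and creates one $2$-run, since it extends a single $1$-run bit into a $2$-run without disturbing its neighbors. Setting $y := \delta m - (\beta m - r_1 - r_2 - x)$ and combining these per-operation changes with the identities $r_1 + 2r_2 = |\ssub| = m - \delta m$ and $\beta_1 + 2\beta_2 = 1$ (hence $2\beta = 1+\beta_1$), a short computation gives final length $(m-\delta m)+x+(\beta m - r_1 - r_2 - 2x)+y = m$, final $1$-run count $(\beta m - r_2 + x) - y = (2\beta - 1)m = \beta_1 m$, and final $2$-run count $(r_2 - x) + y = (1-\beta)m = \beta_2 m$, as required.

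For the remaining structural conditions I will argue locally that no operation can create a run of length $\ge 3$: \ref{second_type_ins_0} produces a strictly alternating four-symbol block whose outer symbols differ from those of the adjacent runs in $\ssub$, and \ref{first_type_ins} extends a single bit by exactly one while leaving its neighbor of the opposite symbol in place. Since $\ssub$ starts and ends with $1$ and neither operation alters the leftmost or rightmost symbol (at worst extending a boundary $1$-run to a boundary $2$-run of the same symbol), $s$ also starts and ends with $1$. The one genuine subtlety is step 2, where the claim that $s$ still ends in $1$ requires the number of appended alternating bits, $\beta m - r_1 - r_2 - 2x$, to be even: by Remark \ref{rem:odd} we have $\beta m$ odd, and since $\ssub \in S$ also has an odd number of runs, $r_1 + r_2$ is odd, so the difference is even. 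The main obstacle I anticipate is precisely this parity/boundary bookkeeping, combined with the finite case analysis verifying that \ref{second_type_ins_0} and \ref{first_type_ins} never merge with an adjacent run regardless of whether that neighbor is a $1$-run or a $2$-run; once these local checks are in place, the arithmetic above closes the proof.
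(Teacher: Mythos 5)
Your proposal is correct and follows essentially the same bookkeeping as the paper: track length and $1$-/$2$-run counts through the three steps using the per-operation effects of \ref{second_type_ins_0} and \ref{first_type_ins}, then close with the identities $r_1+2r_2=m-\delta m$ and $2\beta=1+\beta_1$. The only cosmetic difference is your parity check (that $\beta m - r_1 - r_2 - 2x$ is even, since both $\beta m$ and $r_1+r_2$ are odd), where the paper instead observes directly that the final string has $\beta m$ runs, an odd number, so starting with $1$ forces it to end with $1$; both arguments are valid and equivalent.
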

	
	\begin{proof}
		Step~\ref{step:1} turns each of the chosen $x$ $2$-runs into three $1$-runs, only two of which are non-frozen. Hence, the number of $2$-runs is $r_2-x$, the number of  $1$-runs is $r_1+3x$ and the number of non-frozen $1$-runs is $r_1+2x$.\\
		
		Step~\ref{step:2} completes the number of runs to $\beta m$ by introducing $\beta  m - r_1 - r_2 - 2x$ new $1$-runs. The total number of $2$-runs did not change, the total number of $1$-runs is now
		$$(r_1+3x)+(\beta  m - r_1 - r_2 - 2x) = \beta m -r_2+x$$
		and the number of non-frozen $1$-runs is
		$$(r_1+2x)+(\beta  m - r_1 - r_2 - 2x) = \beta m -r_2 \;.$$
		
		Step~\ref{step:3}  turns $\delta m - (\beta m - r_1 - r_2 - x)$ $1$-runs into $2$-runs. We now show that this gives a string in $S_{m,\beta_1}$. For this we need to show that it has only $1$-runs and $2$-runs and the correct number of runs of each type. The fact that we only get $1$- and $2$-runs follows from the definition of our operations. Now, the resulting number of $1$-runs is
		\begin{eqnarray*}
			(\beta m -r_2+x) - (\delta m - (\beta m - r_1 - r_2 - x)) &=& 2\beta m -\delta m - 2r_2 - r_1 \\ 
			&=& 2\beta m -\delta m - (m-\delta m) \\ &=& 2\beta m -m \\&=& \beta_1 m \;,
		\end{eqnarray*}
		where we have used the facts that $m-\delta m= |\ssub| = 2r_2+r_1$, that $\beta = \beta_1+\beta_2$ and that $m = \beta_1 m + 2\beta_2 m$. Similarly, the number of $2$-runs is
		\begin{eqnarray*}
			(r_2 - x) + (\delta m - (\beta m - r_1 - r_2 - x)) &=& r_1 + 2r_2 + \delta m - \beta  m \\ &=& m-\delta m + \delta m - (\beta_1 + \beta_2)m \\ &=& \beta_2 m \;.
		\end{eqnarray*}
		Note that by our construction, the string begins with a $1$ and it also ends with a $1$ as the total number of runs is odd (recall Remark~\ref{rem:odd}).
		Thus, the resulting string is in $S_{m,\beta_1}$ as claimed. 
	\end{proof}
	
	The next claim shows that any $s\in S_{m,\beta_1}$ that contains $\ssub$ as a subsequence can be obtained from the algorithm for an appropriate choice of $0\leq x \leq \delta m$.
	
	\begin{claim}
		For any $s\in S_{m,\beta_1}$ that contains $\ssub$ as a subsequence, there exists an $0\leq x \leq \delta m$ and appropriate choices  for the different steps of the algorithm so that the resulting string is $s$.
	\end{claim}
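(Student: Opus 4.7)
The plan is to reverse-engineer the algorithm's three choices from the lexicographically first embedding of $\ssub$ in $s$. Fix such an embedding $1 = n_1 < n_2 < \cdots < n_{m-\delta m} \leq m$ of $\ssub = \langle b_1 \cdots b_{m-\delta m} \rangle$ in $s$. Since $s \in S_{m,\beta_1}$ contains only $1$- and $2$-runs, and the embedding is lex-first, restrictions 1--4 preceding the algorithm apply: for each $1$-run of $\ssub$ exactly one of cases 2a, \ref{first_type_ins} holds, and for each $2$-run of $\ssub$ exactly one of cases 3a, \ref{second_type_ins_0}, \ref{second_type_ins_1}, \ref{second_type_ins_3}, \ref{second_type_ins_4} holds. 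Define $x$ as the number of $2$-runs $(b_i,b_{i+1})$ of $\ssub$ for which $n_{i+1} > n_i + 1$ (i.e., cases \ref{second_type_ins_0}--\ref{second_type_ins_4}); since $0 \leq x \leq r_2 \leq \delta m$, this is a legal choice of parameter.

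In Step~\ref{step:1} I apply \ref{second_type_ins_0} to exactly those $x$ $2$-runs. Using the observation (made before the algorithm) that \ref{second_type_ins_1}, \ref{second_type_ins_3}, and \ref{second_type_ins_4} can each be simulated by \ref{second_type_ins_0} followed by one or two applications of \ref{first_type_ins} to the non-frozen bits of the resulting $\langle 0101\rangle$ block, the positions of the bits produced so far embed into $s$ in a lex-first manner. In Step~\ref{step:2} I append $\beta m - r_1 - r_2 - 2x$ new $1$-runs; this number is forced because $s$ has $\beta m$ runs in total (as $s\in S_{m,\beta_1}$) and the intermediate string after Step~\ref{step:1} has exactly $r_1 + r_2 + 2x$ runs, so $\beta m - r_1 - r_2 - 2x$ further runs must appear to the right of $n_{m-\delta m}$ in $s$, and by construction these are all of length $1$ at this stage. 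In Step~\ref{step:3} I select precisely those non-frozen $1$-runs of the intermediate that correspond to $2$-runs in $s$: namely, $1$-runs of $\ssub$ extended by case \ref{first_type_ins}; the $2$nd or $3$rd non-frozen bit of each split $2$-run (accounting for cases \ref{second_type_ins_1}, \ref{second_type_ins_3}, \ref{second_type_ins_4}); and appended $1$-runs from Step~\ref{step:2} that are $2$-runs in $s$. The count of extensions equals $\delta m - (\beta m - r_1 - r_2 - x)$ by the exact same arithmetic carried out in the preceding claim.

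The only delicate step, and the main obstacle, is to justify that no frozen $1$-run is ever among the selection in Step~\ref{step:3}. I will argue this by contradiction: suppose a frozen $1$-run — i.e., the first bit of a $\langle 0101\rangle$-block produced by applying \ref{second_type_ins_0} to some $2$-run $(b_i, b_{i+1})$ of $\ssub$ — were extended in $s$. Then in $s$ this block would read $\langle 00101\rangle$, and the two bits $b_i, b_{i+1}$ of the $2$-run could be re-embedded into the first two positions of this block, strictly decreasing $n_{i+1}$ while keeping all earlier $n_j$ unchanged; this contradicts the lex-first property of the embedding. Hence every extension in Step~\ref{step:3} lands on a non-frozen $1$-run, the algorithm's rules are respected, and by construction the output coincides bit-for-bit with $s$, completing the proof.
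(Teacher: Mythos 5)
Your proof takes a genuinely different (and far more explicit) route than the paper's, which simply asserts the claim follows by ``a simple induction on the length of $\ssub$'' and defers all details to the reader. You reverse-engineer the algorithm's three choices from the lexicographically first embedding of $\ssub$ in $s$, and you prove the key correctness fact---that every extension in Step~\ref{step:3} lands on a non-frozen $1$-run---by a short contradiction argument exploiting lex-firstness. That argument is exactly right: if a frozen bit were doubled in $s$, the $2$-run of $\ssub$ that spawned it could be re-embedded two positions earlier in the resulting block, contradicting the assumption that the chosen embedding was lexicographically first.

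However, the justification that $x$ is a legal input to the algorithm is wrong as stated. You write ``$0 \leq x \leq r_2 \leq \delta m$,'' but the inequality $r_2 \leq \delta m$ does not hold in general: the only constraint on $\ssub$ is $r_1 + 2r_2 = m - \delta m$, and if $\delta$ is small and $\ssub$ consists mostly of $2$-runs then $r_2$ can be on the order of $(m - \delta m)/2$, which vastly exceeds $\delta m$. The correct justification is different: each of the $x$ $2$-runs of $\ssub$ falling into cases~\ref{second_type_ins_0}--\ref{second_type_ins_4} forces at least one inserted bit between its embedded endpoints in $s$ (the length-$3$ piece $\langle 001 \rangle$ of $\ssub$ lands on a block of length at least $4$ in $s$), and since $|s| - |\ssub| = \delta m$ is the total number of insertions, this gives $x \leq \delta m$ directly. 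With that fix in place, your argument is sound and, I would say, more convincing than the paper's one-line sketch.
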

	
	\begin{proof}
		The claim can be proven by a simple induction on the length of $\ssub$ by e.g., considering the way that the first run (or second run in case that the length of the first run is one) in $\ssub$ is embedded in $s$, in the first lexicographical embedding of $\ssub$ in $s$. As the proof is simple we leave the details to the reader.
	\end{proof}

	To conclude, if we consider all possible values $x$ can take, and all the possibilities to perform the choices in the algorithm we get an upper bound on the number of strings $s\in S_{m,\beta_1}$ which contain $\ssub$ as a subsequence. We are now ready to prove Proposition~\ref{inner_insertion_prop}.

	\begin{proof}[Proof of Proposition~\ref{inner_insertion_prop}]	
		By the argument above it is enough to count the number of possibilities for $x$ and the number of possible choices made by the algorithm. For any choice of $x$, there are exactly ${r_2 \choose x}$ ways of selecting $x$ many $2$-runs in Step~\ref{step:1} of Algorithm~\ref{alg:1}. In Step~\ref{step:2} of the algorithm we have no freedom since we add the new $1$-runs at the end of the string. Finally, in Step~\ref{step:3} we have 
		${\beta m -r_2 \choose \delta m - (\beta m - r_1 - r_2 - x)}$ many ways to choose  $\delta m - (\beta m - r_1 - r_2 - x)$ many $1$-runs among the non-frozen $1$-runs. 
		
		Hence, the total number of strings that can be obtained from the algorithm is upper bounded by 
		%
		\begin{eqnarray*}
			\sum_{x=0}^{\delta m} \binom{r_2}{x} \binom{ \beta m - r_2}{\delta m - (\beta m - r_1 - r_2 - x)} 
			&\leq & \sum_{x=0}^{\delta m} \binom{\beta_1 m + \beta_2 m}{\delta m} \\
			&= & O (\delta m)  \cdot \binom{\beta_1 m + \beta_2 m}{\delta m} \;.
		\end{eqnarray*}
	\end{proof}
	
	Armed with Propositions~\ref{deletion_prop} and~\ref{inner_insertion_prop} we now show the existence of an appropriate inner code.
	
	\begin{prop} \label{inner_code_lem}
		Let $0\leq \beta_1 ,\delta\leq 1$  be parameters. Let $\beta = \frac{1+\beta_1}{2}$. For every $\varepsilon>0$ there is $M_\varepsilon$ so that for every  $m>M_\varepsilon$ there is a set $C \subseteq S_{m,\beta_1}$ of size $\left| C \right| = 2^{m\Rin}$ where
		\[
		\Rin = \beta  h\left( \frac{\beta_1}{\beta } \right) - (\delta + \beta ) h\left( \frac{\delta }{\delta + \beta } \right) - \beta  h\left( \frac{\delta }{\beta } \right) - \varepsilon \;,
		\]
		such that for every $c\neq c' \in C$ it holds that any string $\ssub \in S$ that is a subsequence of both  $c$ and $c'$ is of length $\left| \ssub \right| < m - \delta m$.	
	\end{prop}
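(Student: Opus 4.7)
The plan is a standard greedy packing argument over $S_{m,\beta_1}$. Since the starting symbol of every $s \in S$ is fixed to $1$ and runs alternate, a string in $S_{m,\beta_1}$ is determined by choosing which of the $\beta m$ runs are $1$-runs, so $|S_{m,\beta_1}| = \binom{\beta m}{\beta_1 m}$, which by Lemma~\ref{lemma:stirling} is at least $2^{\beta m\, h(\beta_1/\beta) - O(\log m)}$. The construction is then the natural greedy procedure: initialize $C := \emptyset$ and a pool $P := S_{m,\beta_1}$; while $P \neq \emptyset$, pick any $c \in P$, add $c$ to $C$, and delete from $P$ every $c' \in S_{m,\beta_1}$ for which some $\ssub \in S$ of length $|\ssub| \geq m - \delta m$ is a common subsequence of $c$ and $c'$. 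The resulting $C$ satisfies the pairwise condition by construction, and $|C| \geq |S_{m,\beta_1}|/N$, where $N$ uniformly upper bounds the number of strings removed per step.

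To bound $N$, I would partition the witnessing subsequence by its length. For every $k \in \{0,1,\ldots,\delta m\}$, the argument in the proof of Proposition~\ref{deletion_prop}, applied via Lemma~\ref{Lev_lemma} with $d = k$ (recall $c$ has exactly $\beta m$ runs), shows that $c$ has at most $\binom{\beta m + k - 1}{k}$ subsequences in $S$ of length $m - k$; and for each such $\ssub$, rerunning the algorithmic proof of Proposition~\ref{inner_insertion_prop} with $k$ in place of $\delta m$ shows that at most $O(k) \binom{\beta m}{k}$ strings in $S_{m,\beta_1}$ contain $\ssub$ as a subsequence. Summing over $k$ and noting that (in the relevant regime $\delta \leq \beta/2$) both binomials are non-decreasing in $k$ on $[0,\delta m]$, I obtain
\[
N \;\leq\; \mathrm{poly}(m) \cdot \binom{(\beta+\delta)m}{\delta m} \cdot \binom{\beta m}{\delta m}.
\]

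Dividing $|S_{m,\beta_1}|$ by $N$ and applying Lemma~\ref{lemma:stirling} to each binomial converts $\log_2|C|/m$ into exactly the entropic expression claimed, up to an additive slack of $O(\log m)/m$, which becomes smaller than $\varepsilon$ once $m \geq M_\varepsilon$. The only subtle point is the reduction to the two propositions: as stated, they bound only witnesses of length exactly $m - \delta m$, so one must verify that their proofs generalize transparently to every $k \leq \delta m$, and that the resulting binomial products are dominated (up to polynomial factors) by the $k = \delta m$ term so that the union bound over $k$ incurs only a $\mathrm{poly}(m)$ cost. Apart from this bookkeeping, no new ideas beyond the propositions already proved and standard Stirling estimates are needed.
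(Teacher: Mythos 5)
Your proof is correct and follows essentially the same greedy-packing approach as the paper: go over $S_{m,\beta_1}$ greedily, use Propositions~\ref{deletion_prop} and~\ref{inner_insertion_prop} to bound the number of strings excluded per step, and apply Lemma~\ref{lemma:stirling} to turn $\binom{\beta m}{\beta_1 m}$ divided by the ball bound into the stated entropic rate. The only (minor) difference is that you sum over all witness lengths $m-k$ with $0\leq k\leq \delta m$, whereas the paper accounts only for witnesses of length exactly $m-\delta m$ (implicitly using that a longer common subsequence in $S$ can be shortened within $S$ to one of that length); both incur only a $\mathrm{poly}(m)$ loss, absorbed into $\varepsilon$ for $m\ge M_\varepsilon$.
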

	
	\begin{proof}
		We first note that the number of binary strings in $S_{m, \beta_1}$ is exactly $\binom{\beta m}{\beta_1m}$ as we have $\binom{\beta m}{\beta_1m}$ ways to arrange the $\beta_1 m $ $1$-runs and the $\beta_2 m$ $2$-runs. 
		
		The construction of $C$ is done greedily. We go over all strings $s\in S_{m,\beta_1}$ and add them to $C$ one by one as long as they do not share a too long common subsequence (from $S$) with any string that is already in $C$. Propositions~\ref{deletion_prop} and~\ref{inner_insertion_prop} imply that any $s\in S_{m,\beta_1}$ contains at most $\binom{\beta m + \delta m}{\delta m}$ many subsequences of length $m-\delta m$ from $S$, and each such string is a subsequence of at most  $O(\delta m)\binom{\beta  m } {\delta m}$  strings in $S_{m,\beta_1}$. Thus, whenever we add a string to $C$ we exclude at most 
		$$ \binom{\beta m + \delta m}{\delta m} \cdot O(\delta m)\binom{\beta  m } {\delta m}$$
		other strings from being in $C$. Therefore, our codebook contains at least 
		\[
		|C| \geq 
		\frac{\binom{\beta  m}{\beta_1 m}}{O(\delta m) \binom{\beta m + \delta m}{\delta m} \binom{\beta  m  } {\delta m}} 
		\geq 2^{m \left( \beta  h\left( \frac{\beta_1}{\beta } \right) - (\delta + \beta ) h\left( \frac{\delta }{\delta + \beta } \right) - \beta  h\left( \frac{\delta }{\beta } \right) \right)-O(\log m)}
		\]
		codewords, where the inequality follows by Lemma \ref{lemma:stirling}.
		Thus, for every $\varepsilon > 0$ there exists large enough $m>0$ such that the constructed set $C \subset S_{m, \beta_1}$ is of size  $2^{m \Rin}$ where $\Rin = \beta  h\left( \frac{\beta_1}{\beta } \right) - (\delta + \beta ) h\left( \frac{\delta }{\delta + \beta } \right) - \beta  h\left( \frac{\delta }{\beta } \right) - \varepsilon$.
	\end{proof}
	
	By construction, the code $C$ can handle an adversary that, given a codeword $c\in C$, returns a subsequence $\ssub \in S$ of $c$ where $\left| \ssub \right| \geq m - \delta m$. That is, we can uniquely identify the original codeword $c$ from $\ssub$.
	Our next goal is to show that our code can handle the usual edit distance adversary. In other words, it can handle an adversary that performs any $\delta m$ insertion and deletion (and hence it is not bound to return a string in $S$).
	The key observation is that if look at two different codewords $c,c'\in C$ and denote by $s$ a longest common subsequence between $c$ and $c'$ then it must be that there exists $s'\in S$ that is also a subsequence of $c$ and $c'$ and $\left|s\right| = \left|s'\right|$.
	
	
	\begin{prop}
		Let $C$ be the code constructed in Proposition~\ref{inner_code_lem}. 
		For any two codewords $c,c' \in C$ it holds that $\ed(c,c') > 2\delta m$.
	\end{prop}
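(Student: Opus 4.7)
The plan is to combine the key observation sketched in the preamble with Lemma~\ref{lem:lcs}. By that lemma, $\ed(c,c') = 2m - 2|\textup{LCS}(c,c')|$, so proving $\ed(c,c') > 2\delta m$ is equivalent to showing $|\textup{LCS}(c,c')| < m - \delta m$. Proposition~\ref{inner_code_lem} already guarantees that every $s' \in S$ that is a common subsequence of $c$ and $c'$ has length $< m - \delta m$, so it suffices to exhibit some LCS of $c,c'$ that itself lies in $S$.

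Let $t$ be an arbitrary LCS of $c,c'$. I would first argue that $t$ starts and ends with $1$: if $t[1]=0$, then since $c[1]=c'[1]=1$, the embedding of $t$ in each codeword must begin at some position $\ge 2$, so prepending a $1$ to $t$ and embedding it at position $1$ in both codewords gives a strictly longer common subsequence, contradicting the maximality of $t$; a symmetric argument handles the last symbol.

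Next I would show that $t$ contains no run of length $\ge 3$. Suppose it has a $k$-run of (say) $1$'s with $k \ge 3$, embedded at $p_1<\cdots<p_k$ in $c$ and $q_1<\cdots<q_k$ in $c'$. Since $c \in S_{m,\beta_1}$ has only $1$- and $2$-runs, the $p_i$'s span at least $\lceil k/2 \rceil$ distinct runs of $c$, so the set of ``loose gaps'' $L_c := \{j : p_{j+1} > p_j+1\}$ has size $\ge \lceil k/2 \rceil - 1 \ge 1$, and similarly for $L_{c'}$. If some $j \in L_c \cap L_{c'}$, then inserting a $0$ into $t$ between the $j$-th and $(j+1)$-th $1$ of the run, embedded at those loose gaps of $c$ and $c'$, produces a common subsequence of length $|t|+1$, contradicting maximality. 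Otherwise $L_c$ and $L_{c'}$ are disjoint subsets of $\{1,\ldots,k-1\}$; in that case, a case analysis on the two compositions of $k$ induced by the embeddings (which are tightly constrained by disjointness together with $|L_c|,|L_{c'}| \ge \lceil k/2 \rceil - 1$) produces a length-$k$ replacement pattern $w \in \{0,1\}^k$ with only $1$- and $2$-runs, beginning and ending with $1$, that is a common subsequence of the local windows around the $k$-run in $c$ and $c'$. For instance when $k=3$, disjointness forces the compositions to be $(1,2)$ for $c$ and $(2,1)$ for $c'$ (or vice versa), and $w=101$ works. Substituting $w$ for the $k$-run in $t$ produces an LCS of the same length with strictly fewer $\ge 3$-runs, and iterating eventually gives an LCS lying in $S$.

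The technical heart of the argument is the ``disjoint loose gaps'' case: since $L_c$ and $L_{c'}$ are disjoint, the $0$'s in $w$ must be realized in $c$ at loose gaps of $c$ and \emph{simultaneously} in $c'$ at loose gaps of $c'$, which lie at different positions of the run in the two codewords. Fortunately, the combinatorial constraints on $|L_c|$ and $|L_{c'}|$ pin the possible composition pairs down to a short explicit list, and in each such case the natural pattern $w$ -- obtained by using the ``size-$2$'' groups of whichever codeword contains them at each position of $w$ -- can be verified by inspection to be a common subsequence of the local windows.
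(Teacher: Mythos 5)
Your high-level structure matches the paper's: both reduce via Lemma~\ref{lem:lcs} to bounding $|\textup{LCS}(c,c')|$, then show that some LCS lies in $S$ and invoke the defining property of $C$ from Proposition~\ref{inner_code_lem}, and the starts/ends-with-$1$ argument is essentially the same. However, your argument that an LCS can be taken to contain no run of length $\geq 3$ has a genuine gap. The disjointness observation ($L_c \cap L_{c'} = \emptyset$, else a $0$ can be inserted at a common loose gap to lengthen the LCS) is correct, but the ensuing case analysis --- producing a replacement pattern $w$ for every pair of disjoint gap structures on a $k$-run --- is only carried out for $k=3$. The claim that the ``natural pattern $w$ \ldots can be verified by inspection'' for general $k$ is unsubstantiated: $k$ can be as large as $|t|$, the set of admissible disjoint gap pairs grows with $k$, and you give neither an enumeration nor a closed-form construction of $w$. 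As written this is a hole, not a routine omission.

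The paper avoids all of this with a uniform one-step modification that needs no casework. Take any three consecutive equal bits $s[i]=s[i+1]=s[i+2]=b$ of the LCS, embedded in $c$ at $p_i<p_{i+1}<p_{i+2}$. Since $p_{i+2}-p_i\geq 2$ and $c$ has no run of length $\geq 3$, some position $q$ with $p_i<q<p_{i+2}$ satisfies $c[q]=\bar b$, and likewise in $c'$; so flipping $s[i+1]$ to $\bar b$ and re-embedding it at $q$ (resp.\ the analogous position in $c'$) yields a common subsequence of the same length. Iterating strictly decreases $\sum_{\text{runs }r:\,|r|\geq 3}(|r|-2)$ until the string lies in $S$. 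Crucially, the flip handles the two embeddings completely independently, which is exactly what your coupled insertion approach cannot do once the loose-gap sets are disjoint. I would drop the loose-gap analysis and use the flip.
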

	
	\begin{proof}
		Let $c \neq c' \in C$. Lemma~\ref{lem:lcs} gives
		$$\textup{ED}(c,c') = \left|c\right| + \left|c'\right| - 2\left| \textup{LCS}(c,c') \right| = 2m - 2\left| \textup{LCS}(c,c') \right| \;.$$
		Observe that if $s$ is a longest common subsequence of $c$ and $c'$ then there is a string $\ssub \in S$, such that $|\ssub|=|s|$ and $\ssub$ is also a common subsequence of $c$ and $c'$. Indeed, let $s$ be a longest common subsequence of $c$ and $c'$. Since both $c$ and $c'$ start and end with $1$, $s$ also starts and ends with $1$. Moreover, for every three consecutive, equal bits in $s$, we can flip the second bit and the resulting string will still be a common subsequence of maximal length (as neither $c$ nor $c'$ contain a run of length $3$ or more). Repeating this we will get a string only containing $1$-runs and $2$-runs, i.e. a string in $S$.
		
		As $C$ was constructed so that not two codewords in $C$ share a common subsequence (from $S$) of length larger or equal to $m-\delta m$, it follows that 
		$$	\textup{ED}(c, c') = 2m - 2\left| \textup{LCS}(c,c')\right| > 2m - 2(m-\delta m) = 2\delta m \;.$$
	\end{proof}
	
	
	\begin{remark}\label{rem:inner}
		Note that $C$ can be constructed in time at most $O\left(2^{2m}\cdot m^2\right)$ as in the worst case we compute the edit distance between any two possible strings.
	\end{remark}
	
	\section{Construction}
	\label{sec:construction}
	In this section we give a construction of a code for the BDC$_p$. Throughout this section we fix $p$.
	
	We repeat the high level description of the construction from Section \ref{sec:intro-const} (and as depicted in Figure \ref{fig:M1} on page \pageref{fig:M1}). We first do code concatenation. As outer code we use the one given in \cite[Theorem 1.1]{haeupler2017synchronization} (restated as Theorem \ref{Haupler_thm} here). As the inner code we use the code constructed in Proposition~\ref{inner_code_lem}. Then, in order to protect the concatenated codeword from a large number of deletions, we first place a \emph{buffer} of zeros between every two consecutive inner codewords. Since the decoder first looks for the buffers in order to identify where an inner code starts and where it ends, this step helps to reduce the amount of synchronization errors in the outer code.
	Secondly, we \emph{blow-up} the inner codewords by replacing every run of length $1$ with a run of length $N_1$ and every run of length $2$ with a run of length $N_2$, where the symbols of the runs are preserved. For example, $\langle 11 \rangle$ turns into $\langle  1^{N_2}\rangle$ and $ \langle 0 \rangle$ is replaced with $\langle 0 ^{N_1}\rangle$. If we choose $N_1$ and $N_2$ appropriately, then (with high probability) the decoder will identify the original run length.\\
	
	We now give a formal description of our construction. 
	
	\paragraph{The parameters:}
	At this point, we do not specify the parameters explicitly. We prefer to first present the scheme and analyze it before optimizing the parameters. However, the order by which we choose the parameters is important as there are some dependencies among them.
	First, we choose $M_1, M_2, \beta_1, M_B, \delout$ to be fixed constants. One should have in mind that $M_1  < M_2$ are the quantities by which we blow-up the different types of runs. 
	Then we choose $\delin$ to be larger than some quantity $\gamma=\gamma(M_1, T, M_2, \beta_1)$ that we later define (see Proposition~\ref{prop:decode}). At this point, we can compute the value of $\Rin$, the rate of the inner code, using Proposition \ref{inner_code_lem}. 
	Then, we choose a small enough $\epsout$ that determines the alphabet size of the outer code $\Cout$. 
	Denote with $\Rout$ the rate of  $\Cout$ and with $n$ its block length.
	Finally, we pick $m$, the block length of the inner code, to satisfy $ \Sigma   = \{0,1\}^{m \cdot \Rin}$.\footnote{When we choose parameters we make sure that $m \cdot \Rin$ is an integer.}
	
	While this may seem a bit confusing the main thing to remember is that $\epsout$ that was picked at the end, can be taken to be as small a constant as we wish, or, in other words, we can pick $m$ to be as large a constant as we wish. This is important as we will bound the probabilities of several bad events by expressions of the form $\exp(-\Omega(m))$ and it will be important for us to be able to pick $m$ large enough as to make all our estimates small.

	\paragraph{Encoding:}
	The process of encoding starts with the outer code. Given as input a message $x\in \Sigma^{\Rout n}$, we encode it with the code given in Theorem~\ref{Haupler_thm} to obtain an outer codeword $c^{(\textup{out})} = (\sigma_1, \ldots, \sigma_n) \in C_{\textup{out}} \subset \Sigma^n$. Then, every symbol in $c^{(\textup{out})}$, $\sigma_i \in \Sigma = \{0,1\}^{m \cdot \Rin}$, is encoded using the inner code to a codeword that we denote  $c^{(\textup{in})}_{\sigma_i}$. We thus get a codeword in the concatenated code 
	\[ 
	\left( c^{(\textup{in})}_{\sigma_1},\ldots, c^{(\textup{in})}_{\sigma_n} \right) \in C_{\textup{out}} \circ C_{\textup{in}} \;.
	\]
	Now that we have a codeword in the concatenated code we add additional layers of encoding that are crucial for the decoding algorithm to succeed.
	
	\begin{enumerate}
		\item Every two adjacent inner codewords are separated by a buffer of zeros of length $\ceil{M_B\cdot m/(1-p)}$.
		\item In every inner codeword, we replace every $1$-run with a run of length $\ceil{M_1/(1-p)}$ where the symbol of the run is preserved.
		\item In every inner codeword, we replace every $2$-run with a run of length $\ceil{M_2/(1-p)}$ where the symbol of the run is preserved.
	\end{enumerate}
	
	\sloppy After the buffering and blow-up process we have three different run lengths $\ceil{M_B\cdot m/(1-p)}, \ceil{M_1/(1-p)}$ and $\ceil{M_2/(1-p)}$. 
	Note that the buffer's length is much larger than $\ceil{M_1/(1-p)}$ and $\ceil{M_2/(1-p)}$ since it grows with $m$.

	\paragraph{Block length and rate:}
	
	Note that as $\Cin$ contains strings in $S_{m,\beta_1}$, each of the $n$ inner codewords becomes of length 
	$ \ceil{ M_1/(1-p) } \cdot \beta_1 m  + \ceil{ M_2/(1-p)} \cdot \beta_2 m $. As we have $n-1$ buffers between codewords the total block length is 
	$$\left(\ceil{ M_1/(1-p) } \cdot \beta_1 m  + \ceil{ M_2/(1-p)} \cdot \beta_2 m\right)\cdot n + \ceil{M_B\cdot m/(1-p)} \cdot (n-1) \;.$$
	Since the input to the encoding is a string  in $\Sigma^{ \Rout n}$ the rate $\mathcal{R}$ of the construction is given by 
	\begin{eqnarray} 
	\mathcal{R} &=& \frac{\log\left(\left|\Sigma\right|^{ \Rout n}\right)}{\beta_1 \ceil{ M_1/(1-p) } m n + \beta_2\ceil{ M_2/(1-p)} m n + \lceil M_B m/(1-p)\rceil (n-1)} \nonumber \\
	& \geq & \frac{\Rin \Rout}{\beta_1 \ceil{M_1/(1-p)} + \beta_2 \ceil{M_2/(1-p)} + M_B/(1-p) + 1/m} \nonumber \\
	& \geq & \frac{\Rin \Rout}{\beta_1 M_1/(1-p) + \beta_2 M_2/(1-p) + \beta + M_B/(1-p) + 1/m} \label{Rate_BDC_eq} \\
	& = & \frac{\Rin \Rout (1-p)}{\beta_1 M_1 + \beta_2 M_2 + \beta (1-p)+ M_B + (1-p)/m} \;. \nonumber
	\end{eqnarray}
	
	We can avoid the ceilings if we consider values of $p$ such that $\ceil{M_1/(1-p)},\ceil{M_2/(1-p)}$ and $\lceil M_B m/(1-p) \rceil$ are integers. 
	In this case, the rate is 
	\begin{equation} \label{Rate_BDC_eq_no_ceil}
	\mathcal{R} \geq \frac{\Rin \Rout (1-p)}{\beta_1 M_1 + \beta_2 M_2 + M_B} \;.
	\end{equation}
	
	
	\paragraph{Run time analysis:}
	
	By Theorem~\ref{Haupler_thm}, the outer code can be constructed in linear time. 
	Constructing the inner code requires time at most $O\left( 2^{2m}\cdot m^2 \right)$ (see Remark~\ref{rem:inner}).
	As $m = \log\left| \Sigma \right| / \Rin$, we get that constructing the inner code takes time $O(m^2 \cdot 2^{2 m}) = \left| \Sigma \right|^{O(1)}  = O_{\epsout}(1)$, which is constant.
	Thus, as all encoding steps are done in linear time, the encoding time complexity is $O(n)$.
	
	%

	\section{Correctness and Analysis}\label{sec:analyze}
	We first present the decoding algorithm and then prove its correctness. 
	After that, we show how to choose the parameters to obtain Theorem~\ref{BDC_theorem}.
	
	Let $y$ be the binary string received after transmitting Enc$(x)$. The decoding procedure is given in Algorithm~\ref{alg:decode} in page \pageref{alg:decode}. Observe that the algorithm depends on some integral parameter $T$. When analyzing the algorithm we will see what $T$ has to satisfy in order for the algorithm to decode successfully with high probability. For the time being it is enough to remember that $M_1 < T < M_2$.\\

	Before proving the correctness of the algorithm we give its run time analysis.

	\paragraph{Run time analysis of Algorithm~\ref{alg:decode}.}
	It is clear that Steps~\ref{decode:buffer} and \ref{decode:blow-up:end} take linear time. Step~\ref{decode:inner} runs the inner decoding algorithm $n$ times. As the inner decoding algorithm is a brute force that is run on strings of constant length it takes constant time. Thus, the first three steps of the decoding algorithm require linear time. In Step~\ref{decode:step:outer} we run the decoding algorithm of  \cite{haeupler2017synchronization} (recall Theorem~\ref{Haupler_thm}), that requires $O(n^2)$ time. 
	Therefore, the entire decoding procedure is dominated by the last step which runs in time $O(n^2) $.

	

	\begin{figure*}
		\begin{algorithm}[H]\label{alg:decode}
			\SetKwInOut{Input}{input}
			\SetKwInOut{Output}{output}
			\SetNlSty{textbf}{[}{]}
			\SetAlgoNlRelativeSize{-1}
			\LinesNumberedHidden
			\DontPrintSemicolon
			\Input{ Binary string $y$ which is the output of the BDC$_p$ on $\text{ENC}(x)$}
			\Output{A message $\tilde{x}\in \Sigma^k$}
			\SetAlgoLined
			
			\nlset{1} \tcc{Identifying buffers Step:}
			Every run of zeros of length longer than $M_B\cdot  m /2$ is identified as a buffer.  \;
			\label{decode:buffer}
			\tcc{Denote by $s_1,\ldots,s_t$ the strings between the identified buffers.}
			\nlset{2} \tcc{Threshold decoding step:}
			\For{every $s_i$}{
				\For {every run in $s_i$ \label{decode:blow-up}}{
					\eIf{the length of the run is longer than $T$}
					{Decode it to a run of length $2$}
					{Decode it to a run of length $1$}
			}}\label{decode:blow-up:end}
			\tcc{Let  $\tilde{c}_1, \ldots, \tilde{c}_t$ be the strings obtained in this step.}
			
			\nlset{3} \tcc{Inner code decoding step:} 
			Use brute-force decoding to decode each $\tilde{c}_i$ to get $\tilde{\sigma}_i$. Denote $\tilde{\sigma}^{\sf out}=(\tilde{\sigma}_1,\ldots,\tilde{\sigma}_n)\in \Sigma^n$ \label{decode:inner} \; 
			
			\nlset{4} \tcc{Outer code decoding step:}
			Run the decoding algorithm of the outer code on $\left(\tilde{\sigma}_1, \ldots, \tilde{\sigma}_t\right)$ to obtain $\tilde x$ \label{decode:step:outer} \;
			Output $\tilde x$ \;
			\caption{Decode with threshold $T$}
		\end{algorithm}
		\caption{Algorithm for decoding our code over BDC$_p$. The algorithm is assumed to know the parameters $k,n,m,M_B,T$ as well as $\Cin$ and $\Cout$.}
	\end{figure*}
	
	\subsection{Correctness of Decoding Algorithm}\label{sec:decode-analyze}
	
	In this section, we prove that Algorithm~\ref{alg:decode} succeeds with high probability.
	\begin{prop}\label{prop:decode}
	Given $M_1, T, M_2, M_B, \beta_1, \delin, \epsin, \delout$ (as described in Section \ref{sec:construction}) let
	$Z_1 \sim \textup{Bin}(\ceil{M_1/(1-p)}, 1-p)$ and $Z_2 \sim \textup{Bin}(\ceil{M_2/(1-p)}, 1-p)$. 
	Denote 
	\begin{align*}
	&P^{(1) \rightarrow (2)} := \Pr[Z_1 \geq T + 1], \\
	&P^{(1)\rightarrow (0)} := \Pr[Z_1 = 0], \\ 
	&P^{(2) \rightarrow (1)} := \Pr[Z_2 \leq T], \\
	&P^{(2)\rightarrow (0)} := \Pr[Z_2 = 0] \;,
	\end{align*}
	
		 and define
		\begin{equation} \label{eq:gamma-bdc}
		\gamma := \beta_1 \cdot P^{(1) \rightarrow (2)} + \beta_2 \cdot P^{(2) \rightarrow (1)} + \left(2\beta_1+ \beta_2 \right) P^{(1)\rightarrow (0)}+ 4 \beta_2 P^{(2)\rightarrow (0)} 
		\end{equation}
		(the reason for this definition of $\gamma$ is revealed later). Let $x\in\Sigma^{\Rout n}$ be a message and let $y$ be the string obtained after encoding $x$ using our code and transmitting it through the BDC$_p$. 
		If $\gamma < \delin$, then there exists $\epsilon_0=\epsilon_0(M_1, T, M_2, M_B,\beta_1, \delin, \delout)$ such that for every $\epsout < \epsilon_0$ it holds that Algorithm~\ref{alg:decode} returns $x$ with probability $1-\exp\left(-\Omega(n)\right)$.
	\end{prop}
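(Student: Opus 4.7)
The plan is to upper-bound the probability that the decoded outer word $\tilde{\sigma}^{\sf out}$ is at edit distance more than $\delta_{\textup{out}} n$ from the transmitted outer codeword $c^{(\textup{out})}$, so that Theorem~\ref{Haupler_thm} guarantees $\tilde{x}=x$. I partition contributing events into the three error types listed in Section~\ref{sec:intro-const}. For a single buffer to be missed, at most $M_B m/2$ of its $\lceil M_B m/(1-p)\rceil$ zeros may survive, which by Lemma~\ref{lem:chernoff} has probability $\exp(-\Omega(m))$. For a spurious buffer to appear inside a blown-up inner codeword, $\Omega(m)$ consecutive blown-up $1$-runs must all vanish, again a $\exp(-\Omega(m))$-probability event. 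Each such buffer-level mistake contributes at most $O(1)$ to $\textup{ED}(\tilde{\sigma}^{\sf out}, c^{(\textup{out})})$ (merging or splitting a constant number of $\sigma_i$'s), and a Chernoff bound over the $n$ buffers and codewords shows that the total buffer contribution is below $\delta_{\textup{out}} n / 2$ except with probability $\exp(-\Omega(n))$, provided we take $m$ large enough so that $\exp(-\Omega(m))$ is much smaller than $\delta_{\textup{out}}$.

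Conditioning on the buffer step succeeding, each decoding window $\tilde{c}_i$ corresponds to one blown-up inner codeword $c_i^{(\textup{in})}$, and the exact form of $\gamma$ enters here. For each of the $\beta_1 m$ $1$-runs and $\beta_2 m$ $2$-runs, the blown-up run's surviving length is distributed as $Z_1$ or $Z_2$ respectively, and the threshold decoder classifies it as length $0$, $1$, or $2$. A mis-thresholded run (length $1\to 2$ or $2\to 1$) contributes $1$ to $\textup{ED}(c_i^{(\textup{in})}, \tilde{c}_i)$, producing the $\beta_1 P^{(1)\to(2)}$ and $\beta_2 P^{(2)\to(1)}$ terms of $\gamma$. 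A vanished run forces its two neighbouring runs (of the opposite symbol) to merge into a single surviving run of decoded length $1$ or $2$. Aligning the original local fragment $X^a Y^1 X^b$ (respectively $X^a Y^2 X^b$, with $a,b\in\{1,2\}$) against this single merged run and doing a worst-case local comparison, the amortized contributions over all vanished runs add up to at most $(2\beta_1+\beta_2) P^{(1)\to(0)} \cdot m$ and $4\beta_2 P^{(2)\to(0)} \cdot m$, matching the remaining terms of $\gamma$. Summing, $\mathbb{E}[\textup{ED}(c_i^{(\textup{in})},\tilde{c}_i)] \leq \gamma m < \delta_{\textup{in}} m$.

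Since $\textup{ED}(c_i^{(\textup{in})}, \tilde{c}_i)$ is a sum of bounded contributions from $\beta m$ independent runs, with each contribution depending only on a run and its two neighbours, Theorem~\ref{thm:hoeffding} (after a standard bounded-differences reduction to handle the neighbour coupling) yields $\Pr[\textup{ED}(c_i^{(\textup{in})}, \tilde{c}_i) > \delta_{\textup{in}} m] \leq \exp(-\Omega(m))$. Hence the inner decoder fails on any fixed window with probability $\exp(-\Omega(m))$, and a final Chernoff bound over the $n$ independent windows shows that the fraction of failed inner decodings exceeds $\delta_{\textup{out}}/2$ only with probability $\exp(-\Omega(n))$, where we again pick $m$ large (equivalently $\epsilon_{\textup{out}}$ small) so that $\exp(-\Omega(m)) \ll \delta_{\textup{out}}$. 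Combining the buffer and inner analyses gives $\textup{ED}(\tilde{\sigma}^{\sf out}, c^{(\textup{out})}) \leq \delta_{\textup{out}} n$ except with probability $\exp(-\Omega(n))$, and Theorem~\ref{Haupler_thm} recovers $x$. The main obstacle is the amortized run-level accounting behind the $(2\beta_1+\beta_2)$ and $4\beta_2$ coefficients, since a vanishing run couples with both neighbours and can interact non-trivially with nearby flipped or deleted runs; getting this amortization to reproduce $\gamma$ exactly as defined is the technical heart of the proof.
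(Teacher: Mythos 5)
Your overall plan matches the paper's: classify the failure modes as deleted buffers, spurious buffers, and wrong inner decodings, bound each per-window probability by $\exp(-\Omega(m))$, aggregate across $n$ windows by Chernoff, and invoke Theorem~\ref{Haupler_thm}. You also correctly see that the coefficients $(2\beta_1+\beta_2)$ and $4\beta_2$ arise from deleted runs merging with their neighbours, and you correctly flag this accounting as the technical heart. So the high-level decomposition is right.

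The gap is in the concentration step for wrong inner decodings, where your proposal is both internally inconsistent and misses the key trick. You say the per-run contributions "depend only on a run and its two neighbours" and then invoke Theorem~\ref{thm:hoeffding} "after a standard bounded-differences reduction to handle the neighbour coupling." But Theorem~\ref{thm:hoeffding} is Hoeffding's inequality for a sum of independent random variables; it says nothing about functions with bounded differences. If the summands really were coupled across neighbours, you would need McDiarmid's inequality (not available in the paper), and you would also need to check that the whole edit distance has $O(1)$ influence per $Z_j$, which is not automatic because a single deleted $Z_j$ can create a merged run spanning several indices. The paper sidesteps all of this by a careful choice of surrogate variables: it defines $X_j$ to equal $0$, $1$, or $\lvert r_j\rvert + \lvert r_{j+1}\rvert$ depending only on whether $Z_j > T$, $0 < Z_j \le T$, or $Z_j = 0$. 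Crucially, $\lvert r_{j+1}\rvert$ is the length of the original (pre-channel) run, a deterministic quantity fixed by the codeword, not the decoded one. Hence $X_j$ is a function of $Z_j$ alone, the $X_j$'s are genuinely independent, and Theorem~\ref{thm:hoeffding} applies directly to $X = \sum_j X_j$ with $0 \le X_j \le 4$. The remaining work is the deterministic Claim~\ref{claim:ed_upper_bound} showing $\textup{ED}(c^{(\sf in)}_{\sigma_i}, \tilde c_j) \le \sum X_j$ (a run that vanishes costs at most itself plus its right neighbour, by the maximal-merged-set bookkeeping in Claim~\ref{deleted_runs_prop}), and the computation of $\mathbb{E}[X] \le \gamma m + P^{(1)\to(0)}$ in Claim~\ref{clm:x_expected_lower}, which handles the boundary term and shows the worst-case run arrangement gives exactly the coefficients in $\gamma$. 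Your "amortized" intuition points in the right direction, but without the surrogate $X_j$ and the upper bound $\textup{ED} \le \sum X_j$, the argument does not close.

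One smaller point: the paper observes that each buffer-level or inner-decoding error increases $\textup{ED}(\sigma^{(\sf out)},\tilde\sigma^{(\sf out)})$ by at most $3$ (two deletions plus one insertion, or one deletion plus two insertions), and then uses the threshold $\delta_{\textup{out}}n/9$ for each of the three error types; your $\delta_{\textup{out}}n/2$ partition would also work with an explicit per-error $O(1)$ bound, so this is a stylistic difference, not a gap.
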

	
	Observe that as $\epsilon_0=\epsilon_0(M_1, T, M_2, M_B,\beta_1, \delin, \delout)$, it does not depend on $m$ and $n$. 
	The rest of Section~\ref{sec:decode-analyze} is devoted to proving Proposition~\ref{prop:decode}. We first discuss the structure of the proof and prove relevant lemmas. The actual proof is given at the end of this section.\\
	
	
	Let $\sigma^{\sf out}=(\sigma_1,\ldots,\sigma_n)\in \Sigma^n$ be the result of encoding $x$ with the outer code. I.e. the first step before concatenating with our inner code. Let $c^{(\sf in)}_{\sigma_i}$ be the result of encoding $\sigma_i$ with the inner code.
	
	The decoding algorithm succeeds if the decoding procedure of the  outer code, which is executed in Step~\ref{decode:step:outer} of the algorithm, outputs the correct message. This happens if ED$(\sigma^{(\sf out)}, \tilde{\sigma}^{(\sf out)}) \leq \delout n$. To prove that this holds with high probability, we classify the errors that can be introduced at each step of the algorithm and bound the probability that we get too many of them.
	
	There are three error types that increase the edit distance between $\sigma^{(\sf out)}$ and $ \tilde{\sigma}^{(\sf out)}$:
	\begin{enumerate}
		\item \emph{Deleted buffer:} This happens when the channel deleted too many bits from a buffer so that less than  $M_B m/2$ bits survived the channel, and we did not identify this buffer in Step~\ref{decode:buffer} of the algorithm.
		
		\item \emph{Spurious buffer:}  In this case the algorithm mistakenly identifies a buffer inside an inner codeword. This might happen if there are many consecutive runs of the symbol $1$ that were deleted by the channel. As a result, a long run of the symbol $0$ is created and the algorithm will mistakenly identify it as a buffer in Step~\ref{decode:buffer}.
		
		\item \emph{Wrong inner decoding:} Here the decoding of the inner code returns a different inner codeword. This error happens if the edit distance between an inner codeword $c^{(\sf in)}_{\sigma_i}$ and the corresponding $\tilde{c}_j$ is larger than $ \delin m$.\footnote{Note that it may be the case that due to decoding errors, the $i$th inner codeword was interpreted as the $j$th codeword by the decoder (e.g. if a buffer was deleted or a spurious buffer was introduced).}
	\end{enumerate}
	
	In the following subsections, we analyze each error type separately and show that each happens with probability $\exp(-\Omega( m))$ per inner codeword. 
	Our analysis of the first two error types is similar to \cite{guruswami2017efficiently}, but our analysis of the third case is different. 
	
	\subsubsection{Deleted Buffer}
	\begin{prop} \label{clm:dele_buffer}
		Let $r_B$ be a buffer in $\text{Enc}(x)$. The probability that the decoding algorithm fails to identify it as a buffer in Step \ref{decode:buffer} is at most $\exp(-\Omega(m))$.
	\end{prop}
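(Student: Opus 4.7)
The plan is to argue that the event ``the decoder fails to identify the buffer $r_B$'' is contained in the event ``fewer than $M_B m / 2$ of the zero bits that make up $r_B$ survive transmission'', and then apply a Chernoff bound on the number of surviving zeros.

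First I would recall the setup: $r_B$ is a run of exactly $\lceil M_B m / (1-p) \rceil$ zeros, and under the BDC$_p$ each of these zeros is independently retained with probability $1-p$ and deleted with probability $p$. Because deletions preserve the relative order of the surviving symbols, the surviving bits of $r_B$ appear as a contiguous block of zeros in the received word (in fact, possibly as part of an even longer zero-run if adjacent inner-codeword bits are also zero). Consequently, if at least $M_B m / 2 + 1$ zeros of $r_B$ survive, this block is detected as a buffer in Step~\ref{decode:buffer} of Algorithm~\ref{alg:decode}. Hence the failure event is contained in the event $\{X \le M_B m / 2\}$, where $X \sim \textup{Bin}\bigl(\lceil M_B m/(1-p)\rceil,\, 1-p\bigr)$.

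Next I would estimate the mean: $\mathbb{E}[X] = \lceil M_B m / (1-p)\rceil \cdot (1-p) \geq M_B m$. Applying the lower-tail Chernoff bound of Lemma~\ref{lem:chernoff} with $\alpha = 1/2$ (and viewing $X$ as a sum of $\lceil M_B m/(1-p)\rceil$ i.i.d. Bernoulli$(1-p)$ variables) gives
\[
\Pr\bigl[X \leq M_B m / 2\bigr] \;\leq\; \Pr\!\left[X \leq \tfrac{1}{2}\mathbb{E}[X]\right] \;\leq\; \exp\!\left(-\tfrac{\mathbb{E}[X]}{8}\right) \;\leq\; \exp(-M_B m / 8),
\]
which is $\exp(-\Omega(m))$ since $M_B$ is a fixed constant. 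This completes the bound.

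I do not expect any real obstacle here: the only thing to be slightly careful about is making the implication ``enough zeros survive $\Rightarrow$ buffer is identified'' airtight, since the threshold in Step~\ref{decode:buffer} is on the length of a \emph{run} of zeros in the received string, not on the number of surviving zeros from $r_B$ per se. But since deletions cannot split a block of consecutive identical symbols into multiple runs, the surviving zeros of $r_B$ form a single maximal (or sub-maximal) run in the output, and its length is at least $X$, so the implication goes through directly.
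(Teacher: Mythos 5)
Your proof is correct and follows essentially the same argument as the paper: model the number of surviving buffer bits as a binomial random variable with mean at least $M_B m$, then apply the lower-tail Chernoff bound with $\alpha = 1/2$ to bound $\Pr[X \le M_B m/2]$ by $\exp(-M_B m/8)$. Your extra remark that the surviving zeros necessarily form a contiguous run in the output (so that ``enough zeros survive'' really does imply ``buffer detected'') is a valid point of care that the paper leaves implicit.
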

	\begin{proof}
		
		Recall that the length of a buffer is $\lceil M_B m/(1-p) \rceil$. Therefore the expected number of bits that survive the transmission trough the BDC$_p$ is at least $M_B m$. The decoder misses a buffer if the number of buffer bits that survived the transmission is smaller than $M_B  m/2$. 
		Let $Z$ denote the random variable that corresponds to the number of bits that survived the transmission of $r_B$  through the BDC$_p$. 
		Clearly, $Z\sim \text{Bin} (\lceil M_B m/(1-p) \rceil, (1-p))$.
		By using the Chernoff bound, we get that this error happens with probability  
		\[
		\Pr\left[Z < \frac{M_B m}{2}\right] = \Pr\left[Z < \left(1 - \frac{1}{2}\right)M_B m\right] < \exp\left( - \frac{1}{8} M_B m\right) \;.
		\]
	\end{proof} 
	
	\subsubsection{Spurious Buffer}
	Recall that this can happen if many consecutive runs of the symbol $1$ were deleted by the channel, so a long run of the symbol $0$ is created. If the length of this long run is longer than $M_B m/2$ then the decoder mistakenly identifies it as a buffer.
	
	\begin{prop} \label{clm:spuo_buffer}
		Let $c_{\sigma_i}^{(\sf in)}$ be an inner codeword. Denote by $\textup{Blow}(c_{\sigma_i}^{(\sf in)})$ the string obtained by blowing up the runs in $c^{(\sf in)}_{\sigma_i}$ according to the encoding procedure. The probability that the decoder in Step \ref{decode:buffer} identifies a buffer inside the string obtained by transmitting $\text{Blow}(c_{\sigma_i}^{(\sf in)})$ through the BDC$_p$ is at most $\exp(-\Omega(m))$.
	\end{prop}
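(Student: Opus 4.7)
The plan is to argue that for a spurious buffer of length at least $M_B m / 2$ to appear inside $\mathrm{Blow}(c_{\sigma_i}^{(\sf in)})$ after transmission, many (in fact $\Omega(m)$) consecutive blown-up $1$-runs must be \emph{completely} erased by the channel; since each such blown-up $1$-run is completely deleted only with a constant probability bounded away from $1$, this event has probability $\exp(-\Omega(m))$.

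The first step I would carry out is a structural one. Every blown-up $0$-run in $\mathrm{Blow}(c_{\sigma_i}^{(\sf in)})$ has length at most $L_2 := \lceil M_2/(1-p)\rceil$, which is a constant independent of $m$. Therefore, for $m$ larger than some constant, even if no bit of a single $0$-run is deleted, that run cannot on its own account for $M_B m /2$ zeros in the received word. Hence a spurious buffer must arise by the merging of the surviving bits from $\ell \geq 2$ consecutive $0$-runs, and such a merging requires every bit of each of the $\ell - 1$ intervening blown-up $1$-runs to be deleted by the channel. Since the total number of $0$-bits available from $\ell$ consecutive blown-up $0$-runs is at most $\ell \cdot L_2$, a spurious buffer of length at least $M_B m /2$ forces
\[
\ell \;\geq\; \frac{M_B\, m}{2 L_2} \;\geq\; \frac{M_B (1-p)\, m}{2(M_2 + (1-p))} \;=\; \Omega(m).
\]

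The second step is a probabilistic estimate on this rare event. Each blown-up $1$-run has length $L_1 = \lceil M_1/(1-p)\rceil \geq M_1/(1-p)$, and using $\ln p \leq -(1-p)$ for $p\in(0,1)$ one sees that the probability of deleting every single bit of a $1$-run is at most
\[
p^{L_1} \;\leq\; p^{M_1/(1-p)} \;\leq\; e^{-M_1} \;=:\; q \;<\; 1.
\]
Because the channel acts independently on disjoint runs, the probability that some specified block of $\ell - 1$ consecutive $1$-runs is entirely deleted is at most $q^{\ell-1} = \exp(-M_1(\ell-1)) = \exp(-\Omega(m))$. A union bound over the at most $\beta_1 m$ possible starting positions of such a bad block within the codeword absorbs an extra polynomial factor and still yields $\exp(-\Omega(m))$, as desired.

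The main point requiring care is the structural first step: one must verify that a maximal $0$-run of length at least $M_B m/2$ in the received word really does force the corresponding stretch of the sent word to satisfy the ``all intermediate $1$-bits are deleted'' condition, and that the surviving zeros from $\ell$ consecutive $0$-runs is bounded by $\ell\cdot L_2$ (so one must handle boundary effects, partial $0$-runs at the ends, and the possibility that the spurious buffer begins or ends in the interior of a $0$-run). Once this bookkeeping is made precise, the probabilistic estimate above is routine and delivers the stated $\exp(-\Omega(m))$ bound.
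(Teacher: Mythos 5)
Your proof is correct but takes a genuinely different and cleaner route than the paper. The paper splits into two cases on the number $\ell$ of consecutive $1$-runs deleted by the channel: for large $\ell$ the deletion probability itself is directly $\exp(-\Omega(m))$, while for small $\ell$ it invokes a Chernoff bound on the number of surviving zeros in the merged run, showing its expectation is below $\frac{2}{5}M_B m$. You dispense with the Chernoff step entirely via the deterministic observation that $\ell$ consecutive blown-up $0$-runs contain at most $\ell L_2$ zeros even before any deletion, so a spurious buffer of length $\geq M_B m/2$ already forces $\ell \geq M_B m/(2L_2)=\Omega(m)$; the rest is a single geometric estimate plus a union bound over $O(m)$ starting positions (the paper's $O(m^2)$ is unnecessarily conservative and your $O(m)$ is fine). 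One caveat worth flagging: your implied constant in the $\Omega(m)$ is roughly $M_1 M_B(1-p)/(2M_2)$, which degrades to $0$ as $p\to 1$ since $L_2=\lceil M_2/(1-p)\rceil$ grows, whereas the paper's constant (roughly $M_B/240$) is uniform in $p$. This $p$-uniformity is used later: Proposition~\ref{prop:decode} asserts $\epsilon_0=\epsilon_0(M_1,T,M_2,M_B,\beta_1,\delin,\delout)$ with no dependence on $p$, and plugging your version of Proposition~\ref{clm:spuo_buffer} into that proof would make $\epsilon_0$ depend on $p$ as well. For the proposition as stated, with $p$ fixed, your argument suffices; for a drop-in replacement in the paper's pipeline one should keep the paper's two-case split (or otherwise make the constant $p$-free).
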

	\begin{proof}
		We first compute the probability that a run is deleted. Recall that after encoding the message we transmit runs of length $\ceil{M_1/(1-p)}$ or $\ceil{M_2/(1-p)}$.
		The probability that all the bits from a run of length $\ceil{M_1/(1-p)}$ are deleted by the BDC$_p$ is
		\[
		p^{\ceil{M_1/(1-p)}} \leq p^{M_1/(1-p)} \leq e^{-M_1}.
		\]
		Equivalently, the probability that all the bits from a run of length $\ceil{M_2/(1-p)}$ are deleted by the BDC$_p$ is
		\[
		p^{\ceil{M_2/(1-p)}} \leq p^{M_2/(1-p)} \leq e^{-M_2}.
		\]
		Suppose that $\ell$ consecutive runs of the bit $1$ are deleted. We consider two cases.
		
		First, consider the case where $\ell >  m M_B/ 4M_2$. The probability that exactly $\ell$ runs of the symbol $1$ are deleted is at most (the highest probability is obtained when all the $\ell$ runs are $\ceil{M_1/(1-p)}$-runs)
		\[
		p^{\ceil{M_1/(1-p)}\ell} \leq \exp\left( -M_1 \ell \right) \leq \exp\left(- M_1M_Bm/4M_2\right) = \exp \left( - \Omega \left( m \right)\right).
		\] 
		The probability that there exist $\geq  m M_B/ 4M_2$ consecutive runs of the symbol $1$ that are deleted in a word of length $m$ is at most $O(m^2) \cdot \exp \left( -\Omega \left(m\right)\right) = \exp(-\Omega(m))$ (we just need to pick the start and end point of the consecutive runs).
		
		Now, if $\ell \leq m M_B/ 4M_2$ consecutive runs of $1$'s are deleted, then there are $\ell+1$ runs of zeros that are merged to a single run. 
		Suppose that all the merged runs were $2$-runs (so that the length of the run of the symbol $0$ that was created is maximized). Denote by $Z$ the random variable that corresponds to the number of bits that survived the transmission of these $\ell+1$ runs. It holds that $Z \sim \textup{Bin} \left( \left( \ell + 1 \right) \ceil{M_2 / \left(1 - p\right)}, 1 - p\right)$ and
		
		\begin{equation*} \label{Spurious_BDC_cons}
		\begin{split}
		\mathbb{E}\left[Z\right] &= \left( \ell + 1 \right)\ceil{M_2/(1-p)}(1-p)\\
		&\leq (\ell+1)(M_2 + 1)\\
		&\leq (m M_B/ 4M_2+1)(M_2 + 1)\\
		&\leq^{(*)} \frac{M_Bm+4M_2}{3} \\
		& < \frac{2}{5} M_B m
		\end{split}
		\end{equation*}
		where  inequality $(*)$ holds for $ M_2 \geq 3$ and the last inequality holds for large enough $m$.\footnote{Recall that by the way that we choose our parameters we pick $m$ at the end so that we can make it as large a constant as we wish.} Thus, we get by the Chernoff bound that the probability that $Z\geq M_Bm/2$ is
		\[
		\Pr\left[Z \geq \frac{M_B m}{2}\right] = \Pr\left[Z \geq \left(1 + \frac{1}{4}\right)\frac{2}{5}M_B m\right] \leq \exp\left( - \frac{1}{120} M_B m\right) \;.
		\]
		Hence, the probability that specific $\ell \leq mM_B/4M_2$ consecutive runs of the symbol $1$ were deleted and a spurious buffer was created is at most $ \exp( - M_B m/120 )$. Therefore, the probability that there exists a spurious buffer in an inner codeword of length $m$ is at most $m^2 \cdot \exp(- M_B m/120) \leq \exp(- M_B m/240)$, for large enough $m$.
	\end{proof}
	
	\subsubsection{Wrong Inner Decoding} \label{Inner_Dec_Analysis}
	This is the most difficult case to analyze. 
	The inner decoding procedure might output a wrong codeword when the edit distance between an inner codeword $c^{(\sf in)}_{\sigma_i}$ and the corresponding word that was obtained at Step~\ref{decode:blow-up} (and the two simple correction steps) of the algorithm, $\tilde{c}_j$, is larger than $\delta m$. The next proposition shows that the probability of this event is exponentially small in $m$.
	
	\begin{prop} \label{clm:wrong_inner_dec}
		Assume the setting of Proposition~\ref{prop:decode}.
		Let $c_{\sigma_i}^{(\sf in)}$ be an inner codeword. Assume that the buffers before and after $c^{(\sf in)}_{\sigma_i}$ were detected correctly and  that there were no spurious buffers in between. Suppose that $\tilde{c}_j$ is the corresponding string obtained at Step~\ref{decode:blow-up} of the decoding algorithm on $s_j$.\footnote{Recall that we use a different index $j$ to indicate that it may be the case that spurious buffers were found earlier, in some other inner codeword, or that some earlier buffers were mistakenly deleted.} Then,
		\[
		\Pr \left[ \textup{ED}\left(c^{(\sf in)}_{\sigma_i}, \tilde{c}_j\right)  > \delin m\right] \leq \exp(-\Omega (m)) \;.
		\]
	\end{prop}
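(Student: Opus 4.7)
The plan is to bound the (random) edit distance $\ed(c^{(\sf in)}_{\sigma_i}, \tilde c_j)$ by an explicit sum of bounded, mutually independent random variables whose expectation is at most $\gamma m$, and then apply Hoeffding's inequality (Theorem~\ref{thm:hoeffding}) to conclude concentration. The independence is the key structural fact we exploit: since the blown-up codeword $\text{Blow}(c^{(\sf in)}_{\sigma_i})$ partitions its $\beta m$ runs into disjoint bit-segments and BDC$_p$ deletes each bit independently, the surviving lengths $L_1,\ldots,L_{\beta m}$ of the runs are independent, with $L_k\sim \textup{Bin}(\lceil M_1/(1-p)\rceil,1-p)$ for a $1$-run and $L_k\sim \textup{Bin}(\lceil M_2/(1-p)\rceil,1-p)$ for a $2$-run. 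Because the buffers surrounding $c^{(\sf in)}_{\sigma_i}$ were detected and $c^{(\sf in)}_{\sigma_i}$ begins and ends with $1$, no bits from adjacent buffers get absorbed into $s_j$, so $\tilde c_j$ is determined by Step~\ref{decode:blow-up}--\ref{decode:blow-up:end} applied solely to $s_j$, and this output is a deterministic function of $L_1,\ldots,L_{\beta m}$.

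Next I would construct an explicit alignment between $c^{(\sf in)}_{\sigma_i}$ and $\tilde c_j$ by processing the runs of $c^{(\sf in)}_{\sigma_i}$ in order. For each run $r_k$ that survives the channel and gets correctly decoded to its true length (i.e., a $1$-run with $1\le L_k\le T$ or a $2$-run with $L_k\ge T+1$), the alignment contributes $0$ to the edit distance. For a $1$-run misdecoded as a $2$-run ($L_k\ge T+1$) or a $2$-run misdecoded as a $1$-run ($1\le L_k\le T$), it contributes at most $1$ (one insertion or deletion). For a fully deleted run ($L_k=0$) the alignment is more delicate because the two adjacent runs (which are of the opposite symbol) merge in the received string and are subsequently remapped by the threshold to a single $1$- or $2$-run; I would charge to this event the entire cost of rewriting the block of three original runs $r_{k-1}r_k r_{k+1}$ into whatever the decoder outputs, which is bounded by a constant depending only on the types of $r_{k-1},r_k,r_{k+1}$. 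Averaging these worst-case local costs over the codeword using the fact that the fraction of $1$-runs is $\beta_1$ and of $2$-runs is $\beta_2$ produces exactly the coefficients $2\beta_1+\beta_2$ (for $1$-run deletions, counting the deleted run plus the expected contribution of its two opposite-symbol neighbors) and $4\beta_2$ (for $2$-run deletions), giving
\begin{equation*}
\ed\bigl(c^{(\sf in)}_{\sigma_i},\tilde c_j\bigr)\;\le\;\sum_{k}W_k,
\end{equation*}
where each $W_k$ is a bounded function of $L_k$ alone (with a small $O(1)$ correction along the boundary when a deletion occurs, absorbed into a slack in the constants) and $\mathbb{E}[\sum_k W_k]\le \gamma m$ by the definitions of $P^{(1)\to(2)}, P^{(2)\to(1)}, P^{(1)\to(0)}, P^{(2)\to(0)}$.

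With this in hand the concentration step is routine: since the $L_k$ are independent and each $W_k$ lies in an interval of width $O(1)$, Theorem~\ref{thm:hoeffding} applied to $\sum_k W_k$ with $t=(\delin-\gamma)m>0$ gives
\begin{equation*}
\Pr\bigl[\ed(c^{(\sf in)}_{\sigma_i},\tilde c_j)>\delin m\bigr]\;\le\;\Pr\!\left[\sum_k W_k-\mathbb{E}\sum_k W_k>(\delin-\gamma)m\right]\;\le\;\exp\!\Bigl(-\Omega(m)\Bigr),
\end{equation*}
which is the claimed bound.

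The main obstacle will be Step~2, namely rigorously verifying the worst-case edit-distance charging scheme that produces the coefficients $2\beta_1+\beta_2$ and $4\beta_2$ in $\gamma$. The subtle point is that a deletion event at run $r_k$ can interact with other error events at $r_{k-1}$ or $r_{k+1}$ (for example, the merged neighbors might themselves have been destined for misdecoding), so the per-run random variables $W_k$ are not quite locally defined unless one is careful to disjointly apportion the edit-distance cost of the block $r_{k-1}r_k r_{k+1}$ to a single ``charged'' run. I would handle this by always charging the cost of a merged block to the central deleted run $r_k$ and using the fact that two adjacent $L_k$'s cannot both be zero without producing an even larger charge that is already dominated by the constant bound. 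Once this accounting is set up, linearity of expectation and the independence of the $L_k$'s do the rest.
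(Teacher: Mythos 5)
Your high-level strategy matches the paper's: bound the edit distance by a sum of mutually independent bounded random variables indexed by the runs, show the expectation is close to $\gamma m$, and apply Hoeffding's inequality. However, the key step you flag as ``the main obstacle'' — the charging scheme — is where the proposal genuinely breaks, and not in the way you anticipate.

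You propose to charge to a fully-deleted run $r_k$ ``the entire cost of rewriting the block $r_{k-1}r_k r_{k+1}$ into whatever the decoder outputs.'' This quantity is \emph{not} a function of $L_k$ alone: the decoder's output for that merged block depends on the surviving lengths $L_{k-1}$ and $L_{k+1}$ (and possibly further, if $Z_{k-1}=0$ or $Z_{k+1}=0$ extend the merged set). So your $W_k$'s are not independent, and Hoeffding as stated does not apply. You acknowledge the problem of ``interaction with error events at $r_{k-1}$ or $r_{k+1}$'' but your proposed fix — that two adjacent $L_k$'s ``cannot both be zero without producing an even larger charge that is already dominated by the constant bound'' — is incorrect as phrased (adjacent $Z_k$'s can certainly both be zero), and even if it were true, it would not restore independence. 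Moreover, charging the full 3-run block does not reproduce the stated coefficients $2\beta_1+\beta_2$ and $4\beta_2$ in $\gamma$; a 3-run block charge would yield larger coefficients, so the asserted bound $\mathbb{E}[\sum_k W_k]\le\gamma m$ does not follow from your charging.

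The paper avoids both issues with a sharper choice of local variable: for a deleted run it sets $X_j = |r_j| + |r_{j+1}|$ where $|r_{j+1}|$ is the \emph{original, deterministic} length of the next run (not its surviving length). This makes $X_j$ a genuine function of $Z_j$ alone, giving exact independence, while the combinatorial work in Claims~\ref{deleted_in_merged_set}, \ref{merge_claim}, and \ref{deleted_runs_prop} (maximal merged sets) justifies that $\sum_j X_j$ upper bounds the edit distance even when deletions cascade. The coefficient computation in Claim~\ref{clm:x_expected_lower} then comes from maximizing $\sum_j X_j$'s expectation over the worst-case arrangement of run types, not from averaging. To repair your proposal you would need to replace the 3-run block charge with this one-sided deterministic charge and explicitly prove the merged-set bound; otherwise the decomposition is neither local nor matched to the definition of $\gamma$.
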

	
	We prove this claim in the remainder of this subsection, but first we give some intuition and introduce some important notions.
	Recall that a run $r_j$ in an inner codeword is replaced with a run of length $ N_1 = \ceil{M_1/(1-p)}$ or $N_2 = \ceil{M_2/(1-p)}$. Let $Z_j$ be the random variable  corresponding to the number of bits from this blown-up run that survived the transmission through the BDC$_p$. If $\left| r_j \right| = 1$ then, $Z_j \sim \textup{Bin} \left( \ceil{ M_1/(1-p)}, 1-p\right)$. If $\left| r_j \right| = 2$ then, $Z_j \sim \textup{Bin} \left( \ceil{M_2/(1-p)}, 1-p\right)$. 
	Intuitively, in Step~\ref{decode:blow-up} the algorithm reads every $Z_j$ and decides according to the threshold $T$ if $Z_j$ corresponds to a run of length $1$ or $2$. However, it may be the case that, say, $Z_{j+1}=0$ and then the algorithm will mistakenly base its decision according to the value of $Z_j + Z_{j+2}$, etc. For example, consider an initial string $\langle 00100 \rangle$. After the blow-up, we transmit the string $\langle 0^{N_2} 1^{N_1}0^{N_2} \rangle$. Suppose that the middle run (the run consisting of the symbol $1$) was deleted by the channel. The decoder then faces a long run of $0$'s and treats it as a single run and in particular,  it will decode it as $\langle 0 \rangle$ or $\langle 00 \rangle$, or even as a spurious buffer. This motivates the following definitions.

	
	\begin{defi}
		When $Z_j = 0$ we say that $r_j$ was \emph{deleted by the channel}.
	\end{defi}
	
	\begin{remark}
		We shall make a distinction between runs that were deleted by the channel and those that our algorithm ``deleted'' so whenever we refer to a deleted bit we will stress which process caused the deletion.
	\end{remark}
	
	\begin{defi}\label{def:r'_j}
		For every $j\in [\beta  m]$, let $b_j$ be the bit appearing in $r_j$. We denote
		$$r_{j}' =  \left\{ \begin{array}{ll}
		\langle b_j b_j \rangle & \textrm{if $Z_j > T$}\\
		\langle  b_j \rangle & \textrm{if $0<Z_j \leq T$}\\
		\langle \rangle & \textrm{if $Z_j =0$}
		\end{array} \right. \;.
		$$  
	\end{defi}
	In other words, $r'_j$ is what Step~\ref{decode:blow-up} of our decoding algorithm would output when given $Z_j$ as input. In particular, $\left|r_{j}'\right|$ can be $0,1,2$, depending on $Z_j$. Note that if $\left|r'_j\right| = 0$ then it means that the channel deleted the run.

	
	For the next definition, we remind the reader that in our setting the total number of runs in an inner codeword (and hence also in a blown-up word) is $\beta_1 m + \beta_2 m = \beta m$.

	\begin{defi}
		A set $I \subset [\beta  m], \left|I\right| \geq 2$, is called a \emph{maximal merged set} if the following conditions hold:
		\begin{enumerate}
			\item For every $i\in I$ it holds that $Z_i>0$.
			\item All the bits from $I$ are merged into one run.
			\item There is no set $J$ such that $I \subsetneq J$ and the bits from $J$ are merged into one run. 
		\end{enumerate}
		
	\end{defi}
	For example, consider the following consecutive runs that were sent through the channel $\langle 0^{N_2}1^{N_1} 0^{N_1} 1^{N_2} 0^{N_1} 1^{N_1} 0^{N_2}\rangle$. Suppose that the third run and the fifth run were deleted by the channel and the rest of the runs were not deleted by the channel. The maximal merged set corresponding to this deletion pattern is $I = \lbrace 2,4,6\rbrace$. 
	
	\begin{claim} \label{deleted_in_merged_set}
		Let $I \subset [\beta  m]$ be a maximal merged set. Denote $j = \min I$ and $k = \max I$. Then, all the runs $r_{j+1}, r_{j+3}, \ldots ,r_{k-1}$ were \emph{deleted by the channel}.
	\end{claim}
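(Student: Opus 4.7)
The plan is to argue by parity of the run indices and then to use the order-preserving nature of the deletion channel to rule out any survivor among $r_{j+1}, r_{j+3}, \dots, r_{k-1}$.

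First, I would record the basic structural observation that in a binary codeword the runs alternate in symbol: if $r_t$ carries the bit $b$, then $r_{t+1}$ carries $1-b$, and more generally $r_t$ and $r_{t'}$ share the same symbol if and only if $t\equiv t'\pmod 2$. After blow-up, the runs $\mathrm{Blow}(r_t)$ inherit this alternating property.

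Second, I would use the merged-run condition in the definition of a maximal merged set. Because all surviving bits coming from indices in $I$ end up inside a single run of the received string, they must all share the same symbol. By the alternation observation, every element of $I$ has the same parity; in particular $j$ and $k$ have the same parity, so the indices $j+1,j+3,\dots,k-1$ all have the opposite parity, meaning $r_{j+1},r_{j+3},\dots,r_{k-1}$ carry the opposite bit to $r_j$ and $r_k$.

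Third, I would argue by contradiction. Suppose some $r_{j+2\ell+1}$ with $0\le \ell<(k-j)/2$ were not deleted, i.e.\ $Z_{j+2\ell+1}>0$. Because the deletion channel preserves the relative order of surviving symbols, the surviving bits of $\mathrm{Blow}(r_j),\mathrm{Blow}(r_{j+1}),\dots,\mathrm{Blow}(r_k)$ appear in the received string in that order. Hence at least one bit of the opposite symbol would appear strictly between surviving bits of $r_j$ and $r_k$, forcing them into different runs of the received string and contradicting the hypothesis that $j,k\in I$ are in the same merged run. Therefore $Z_{j+2\ell+1}=0$ for every such $\ell$, which by Definition is exactly the statement that $r_{j+1},r_{j+3},\dots,r_{k-1}$ were deleted by the channel.

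The only mildly delicate step is making step three rigorous, i.e.\ invoking the fact that the BDC preserves order of surviving bits so that an intervening surviving opposite-symbol run really does split the received run; everything else is essentially a bookkeeping argument on parities and maximality.
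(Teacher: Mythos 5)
Your proof is correct and takes essentially the same route as the paper's: the paper simply notes that the intermediate odd-offset runs carry the opposite symbol and must be deleted, "otherwise $I$ will not be a merged set." You have just spelled out the parity bookkeeping and the order-preserving property of the channel that make that one-liner go through.
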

	\begin{proof}
		Assume w.l.o.g. that $r_j$ and $r_k$ are runs of the symbol $0$. For every $i\in \lbrace j + 1, j+3, \ldots, k-1 \rbrace$, $r_i$ is a run of  symbol $1$ and must be deleted by the channel. Otherwise, $I$ will not be a merged set.
	\end{proof}

	\begin{defi}
		Let $I \subset [\beta  m]$ be a maximal merged set and set $j = \min I$. 
		We denote with $\tilde{r}_j$ be the result of Step~\ref{decode:blow-up} of our decoding algorithm on this merged run.  
	\end{defi}
	
	\begin{remark}\label{remark:different-r}
		It is important to remember that $r_j$ is the original run, $r'_j$ is what the algorithm would return when given $Z_j$ as input, and $\tilde{r}_j$ is what the algorithm actually returns when reading the bits of the merged run.
	\end{remark}
	
	We can now see that some bits that survived the channel were deleted by our algorithm as it failed to realize that they came from different runs. This is captured by the next definition.
	
	\begin{defi}\label{cla:deleted-bits-interval}
		Let $I \subset [\beta  m]$ be a maximal merged set. We say that the \emph{decoding algorithm deleted} $\left|r_{j}'\right| - \left|\tilde{r}_j\right| + \sum_{i\in I\setminus \lbrace j\rbrace} \left| r_i\right|$ bits in the set $I$.
	\end{defi}
	
	As $\left|r_{j}'\right| \leq \left|\tilde{r}_j\right|$ the following claim is obvious.
	
	\begin{claim} \label{merge_claim}
		Let $I$ be a maximal merged set and set $j = \min I$. The number of bits deleted by the decoding algorithm in the merged set $I$ is 
		at most $\sum_{i\in I\setminus \lbrace j\rbrace} \left| r_i\right|$.
	\end{claim}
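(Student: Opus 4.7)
The claim is essentially a direct consequence of Definition~\ref{cla:deleted-bits-interval} together with the inequality $|r'_j|\le|\tilde r_j|$ that is asserted just above the statement. So my plan is to first verify this inequality carefully, and then substitute it into the definition of the number of algorithm-deleted bits.

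To establish $|r'_j|\le|\tilde r_j|$, I would argue by cases on the value of $Z_j$. Since $j\in I$ and $I$ is a maximal merged set, $Z_j>0$, so $|r'_j|\in\{1,2\}$. By Claim~\ref{deleted_in_merged_set}, every odd-indexed run strictly between $\min I$ and $\max I$ is deleted by the channel, which means that all the surviving bits from the runs indexed by $I$ carry the same symbol and are concatenated by the channel into a single run seen by the decoder. The length of this run is $L:=\sum_{i\in I}Z_i$. Now consider the two cases:
\begin{itemize}
\item If $|r'_j|=1$, then $0<Z_j\le T$, and $L\ge Z_j>0$, so the threshold step outputs at least a $1$-run, giving $|\tilde r_j|\ge 1=|r'_j|$.
\item If $|r'_j|=2$, then $Z_j>T$, hence $L\ge Z_j>T$, so the threshold step outputs a $2$-run, giving $|\tilde r_j|=2=|r'_j|$.
\end{itemize}
In either case, $|r'_j|\le |\tilde r_j|$ as required.

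With this inequality in hand, the number of bits deleted by the decoding algorithm in $I$, which by Definition~\ref{cla:deleted-bits-interval} equals
\[
|r'_j|-|\tilde r_j|+\sum_{i\in I\setminus\{j\}}|r_i|,
\]
is at most $\sum_{i\in I\setminus\{j\}}|r_i|$, proving the claim. There is no real obstacle here; the only thing worth being slightly careful about is confirming that the bits surviving from the runs in $I$ really do merge into a single run of length $\sum_{i\in I}Z_i$, which is exactly what Claim~\ref{deleted_in_merged_set} guarantees.
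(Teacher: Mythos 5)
Your proof is correct and matches the paper's approach: the paper simply observes that $|r'_j|\le|\tilde r_j|$ and calls the claim obvious given Definition~\ref{cla:deleted-bits-interval}, whereas you flesh out the inequality by a short case analysis on $Z_j$. The only minor nitpick is that the inequality $L\ge Z_j$ (not the exact identity $L=\sum_{i\in I}Z_i$) is all you actually need, so the careful-footnote worry at the end is unnecessary.
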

	
	We next extend Claim~\ref{merge_claim} and bound the total number of bits that our algorithm deletes in an inner codeword. We assume that the buffers before and after the word were correctly identified by the decoding algorithm in Step \ref{decode:buffer}.
	
	\begin{claim} \label{deleted_runs_prop}
		Let $D \subset [\beta  m]$ be the indices of the runs that were deleted by the channel. If the last run was not deleted, i.e., $\beta  m \notin D$, then the number of bits that were deleted by the decoding algorithm is at most $\sum_{i\in D} \left|r_{i+1}\right|$.
		
		If the last run was deleted by the channel, i.e., $\beta  m \in D$, then the number of bits deleted by the algorithm is at most $\sum_{i\in D\setminus \lbrace \beta  m \rbrace}\left|r_{i+1}\right| 
		+ 2$.
	\end{claim}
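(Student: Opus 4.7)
\textbf{Proof plan for Claim~\ref{deleted_runs_prop}.}
The plan is to decompose the ``bits deleted by the algorithm'' across the maximal merged sets in the inner codeword, apply Claim~\ref{merge_claim} to each one, and then charge each contribution to a distinct element of $D$. The extra ``$+2$'' in the case $\beta m\in D$ will be a small correction term for the rightmost bits being swallowed by the buffer.

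Concretely, let $I_1,\ldots,I_s$ be the maximal merged sets contained in $[\beta m]$. By Claim~\ref{merge_claim}, the bits deleted by the decoder inside $I_k$ are at most $\sum_{i\in I_k\setminus\{\min I_k\}}|r_i|$. The key structural step is then to show that for every $i\in I_k\setminus\{\min I_k\}$, the preceding index $i-1$ lies in $D$. Indeed, if $i'$ is the immediate predecessor of $i$ inside $I_k$, then every position $\ell$ with $i'<\ell<i$ must satisfy $Z_\ell=0$: if $r_\ell$ is of the opposite symbol to $r_{i'}$ (and $r_i$) then $Z_\ell>0$ would block the merging of $i'$ and $i$ into one physical run; if $r_\ell$ is of the same symbol and $Z_\ell>0$, then maximality of $I_k$ would force $\ell\in I_k$, contradicting that $i'$ is the predecessor of $i$ in $I_k$. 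Taking $\ell=i-1$ gives $i-1\in D$.

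Hence the map $i\mapsto i-1$, from $\bigcup_k (I_k\setminus\{\min I_k\})$ into $D$, is well defined. It is also injective, since the $I_k$'s are disjoint and distinct $i$'s obviously give distinct $i-1$'s. Setting $d=i-1$ and noting that $|r_i|=|r_{d+1}|$, the total number of bits deleted by the algorithm across all merged sets is at most $\sum_{d\in D\cap[\beta m-1]}|r_{d+1}|$. If $\beta m\notin D$ then $D\cap[\beta m-1]=D$, which is exactly the bound of the first assertion.

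For the case $\beta m\in D$, the charging argument still gives at most $\sum_{d\in D\setminus\{\beta m\}}|r_{d+1}|$ bits deleted inside merged sets, but now we must also account for a boundary effect: since $r_{\beta m}$ is a $1$-run that was deleted by the channel, the surviving $0$-bits of $r_{\beta m-1}$ sit directly next to the $0$-buffer that follows the codeword, so the long run of $0$'s that the decoder identifies as the trailing buffer actually extends into the codeword. Depending on whether $r_{\beta m-1}$ survives alone or belongs to a merged set $I$, the algorithm additionally loses either $|r_{\beta m-1}|$ bits or $|r_{\min I}|$ bits (the $\min I$ contribution is exactly the one that was \emph{not} covered by the charging). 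Since every run has length at most $2$, this boundary loss is at most $2$, yielding the stated $+2$. The main obstacle here is not the charging step, which is fairly mechanical, but the edge case: one has to verify that the boundary event really absorbs only the length of a single run, and in particular that the rightmost merged set (when present) is absorbed entirely rather than creating some further combinatorial trouble.
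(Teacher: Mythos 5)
The charging argument for the interior merged sets is sound and is essentially the paper's own argument in slightly different notation: the paper sets $D_i := D \cap [j_i+1, k_i-1]$, notes the inclusion $\{j_i+1, j_i+3,\ldots,k_i-1\}\subseteq D_i$ together with $I_i\subseteq\{j_i,j_i+2,\ldots,k_i\}$, and charges $\sum_{\ell\in I_i\setminus\{j_i\}}|r_\ell|\leq\sum_{\ell\in D_i}|r_{\ell+1}|$, which is exactly your injective map $i\mapsto i-1$.

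The gap is that your decomposition into ``maximal merged sets contained in $[\beta m]$'' does not account for bits swallowed by the \emph{left} buffer, which happens whenever $1\in D$. The simplest failure: if only $r_1$ is deleted, then the $0$-run $r_2$ merges with the left buffer of $0$'s and the decoder discards all of $r_2$'s surviving bits; but $\{2\}$ has size $1$, so it is not a maximal merged set, and your accounting returns $0$ deleted bits, which is wrong. Worse, if, say, $r_1$ and $r_3$ are deleted so that $\{2,4\}$ is a merged set that is absorbed into the left buffer, then applying Claim~\ref{merge_claim} to it gives $|r_4|$, yet the decoder in fact drops $|r_2|+|r_4|$: the proof of Claim~\ref{merge_claim} relies on $|r'_j|\leq|\tilde r_j|$, which fails when the merged run becomes part of a buffer (then $\tilde r_j$ is empty). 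You recognize and patch exactly this issue on the right (your ``$|r_{\min I}|$'' term and the ``$+2$''), but not on the left. The paper avoids the problem by explicitly splitting off $D_L$ and $D_R$ and treating buffer-merged runs before invoking Claim~\ref{merge_claim} on the remaining interior merged sets. Your first assertion happens to remain true because the indices of $D_L=\{1,3,\ldots,i'-2\}\subseteq D$ are never hit by the interior charging map and so supply exactly the slack needed to pay for the left-buffer losses (and here no ``$+2$'' is needed, since the left-buffer merged set has no phantom ``$\min I$'' inside $[\beta m]$ to account for), but this has to be argued; as written, your proof of the first assertion does not follow when $1\in D$.
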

	
	\begin{proof} 
		We first deal with the case where some runs were merged with the bits in the buffers (before or after the word). This happens if the first or the last run were deleted by the channel.
		If $1\in D$ then let $r_{i'}$ to be the first run of the symbol $1$ that was not deleted by the channel. Then, all runs of the symbol $0$ before $r_{i'}$ were merged to the left buffer. Therefore, $D_L := \lbrace 1, 3, \ldots , i' - 2 \rbrace \subseteq D$ and the decoding algorithm deleted exactly $\left|r_2\right| + \ldots +\left|r_{i' - 1}\right| = \sum_{\ell \in D_L}\left|r_{\ell + 1}\right|$ bits (since all these runs were considered as part of the buffer).
		
		Similarly, if $\beta  m\in D$ define $r_{i'}$ to be the last run of the symbol $1$ that was not deleted by the channel. In this case all the runs of $0$'s after $r_{i'}$ were merged to the right buffer. In this case, $D_R := \lbrace i' + 2, i' + 4, \ldots , \beta  m \rbrace \subseteq D$ and the decoding algorithm deleted exactly $\left|r_{i' + 1}\right| + \ldots +\left|r_{\beta  m - 1}\right| \leq 2+\sum_{\ell \in D_R\setminus \lbrace \beta  m \rbrace} \left|r_{\ell+1}\right|$ bits.
		
		We now account for inner deletions (i.e., those that did not cause runs to merge with buffers). These deletions may generate what we called maximal merged sets.
		Let $I_1, \ldots, I_t$ be all maximal merged sets, excluding those that were merged with buffers. 
		Denote $j_i = \min I_i$ and $k_i = \max I_i$ and let $ D_i := D\cap [j_i + 1, k_i - 1]$ for $ i\in [t]$.\footnote{Note that it may be the case that $D \setminus \left( \cup_i D_i\right)$ is not the empty set. In this case the indices in $D'$ did not cause a merge. E.g., if two consecutive runs are deleted by the channel and the runs before and after were not deleted.}
		
		According to Claim~\ref{deleted_in_merged_set} it holds that $\lbrace j_i + 1, j_i + 3, \ldots, k_i - 1\rbrace \subseteq D_i$. Thus, 
		$I_i \subseteq \lbrace j_i, j_i + 2, \ldots, k_i\rbrace$.
		Claim~\ref{merge_claim}, implies that the number of bits deleted by the algorithm in $I_i$ is at most $\sum_{\ell \in I_i\setminus \lbrace j_i\rbrace} \left| r_{\ell}\right|$. Thus, the total number of bits deleted by the algorithm, excluding those bits from $D_L\cup D_R$, is bounded from above by
		\[
		\sum_{i=1}^{t}\sum_{\ell \in I_i\setminus \lbrace j_i\rbrace}\left|r_{\ell}\right| \leq \sum_{i=1}^{t} \sum_{\ell \in D_i} \left|r_{\ell+1}\right| \leq \sum_{i\in D\setminus (D_L\cup D_R)} \left|r_{i+1}\right|.
		\]
		Taking into account the deleted bits from  $D_L\cup D_R$ the claim follows.
	\end{proof}
	
	We now use concentration bounds to argue about the expected number of bits that were deleted and the effect on the edit distance between the original inner codeword and the one returned by the algorithm in Step~\ref{decode:blow-up:end}.

	We first study the probability that $r_j\neq r'_j$ (recall Definition~\ref{def:r'_j}):
	\begin{enumerate}
		\item If $\left| r_j \right| = 1$ then there are two possible types of errors:
		\begin{enumerate}
			\item $ \left|r'_j\right| = 2$: We denote the probability for this to happen by $$P^{(1)\rightarrow (2)} := \Pr[Z_j \geq T + 1]\;.$$ We next give two estimates of this probability, one is an exact calculation and the other is an upper bound. Direct calculation gives
			\begin{equation}
			P^{(1)\rightarrow (2)} = \Pr[Z_j \geq T + 1] = \sum_{i=T+1}^{\ceil{\frac{M_1}{1-p}}} \binom{\ceil{\frac{M_1}{1-p}}}{i} (1-p)^i \cdot p^{\ceil{\frac{M_1}{1-p}} - i} \;.\label{small_to_big_run_probability}
			\end{equation}
			We next would like to use the Poisson distribution to give a simpler bound. For this we would like to use Theorems~\ref{bin_p_mitz} and \ref{cum_anderson}. 
%
			\begin{lemma} \label{lem:small_to_big_run_bound}
				Let $q \geq 1-p$. It holds that 
				\begin{equation}
				P^{(1)\rightarrow (2)} \leq 1 - e^{-M_1 - q} \sum_{i=0}^{T} \frac{(M_1 + q)^i}{i!} \label{small_to_big_run_bound} 
				\end{equation}
				Moreover, the function $f(q) := 1 - e^{-M_1 - q} \sum_{i=0}^{T} \frac{(M_1 + q)^i}{i!}$ is monotonically increasing in $q$.
			\end{lemma}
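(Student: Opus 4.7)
Write $N := \lceil M_1/(1-p)\rceil$ and $\lambda_0 := N(1-p)$, so that $Z_1 \sim \textup{Bin}(N,1-p) = \textup{Bin}(N,\lambda_0/N)$ with mean $\lambda_0 \in [M_1, M_1 + (1-p)]$. The plan is to upper bound $P^{(1)\to(2)} = 1 - \Pr[Z_1 \leq T]$ by first replacing the binomial tail by the corresponding Poisson tail at parameter $\lambda_0$, and then pushing the Poisson parameter up from $\lambda_0$ to $M_1 + q$.

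First I would apply Theorem~\ref{cum_anderson} (Anderson) to the sequence $g(n) := \Pr[\textup{Bin}(n,\lambda_0/n) \leq T]$. Since $M_1$ and $T$ are integers with $M_1 < T$, one has $\lambda_0 \leq M_1 + (1-p) < M_1 + 1 \leq T$, so item~(2) of Theorem~\ref{cum_anderson} applies, and $g(n)$ is monotonically decreasing for $n \geq T$ (in the regime of interest $N \geq T$; if this fails for small $p$, the statement is vacuous because $Z_1$ is bounded by $N < T$ and $P^{(1)\to(2)} = 0$). Combined with Theorem~\ref{bin_p_mitz}, which identifies $\lim_{n\to\infty} g(n) = \Pr[\textup{Poisson}(\lambda_0) \leq T]$, I conclude
\[
\Pr[Z_1 \leq T] \;=\; g(N) \;\geq\; e^{-\lambda_0}\sum_{i=0}^{T}\frac{\lambda_0^{\,i}}{i!}.
\]

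Next I would raise the Poisson parameter. Because $q \geq 1-p$, we have $\lambda := M_1 + q \geq M_1 + (1-p) \geq \lambda_0$, and Lemma~\ref{lem:poisson_mono} gives $\Pr[\textup{Poisson}(\lambda_0) \leq T] \geq \Pr[\textup{Poisson}(\lambda) \leq T]$. Chaining both inequalities,
\[
P^{(1)\to(2)} \;=\; 1 - \Pr[Z_1 \leq T] \;\leq\; 1 - e^{-(M_1+q)}\sum_{i=0}^{T}\frac{(M_1+q)^i}{i!},
\]
which is the claimed bound.

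For the monotonicity assertion, I would invoke Lemma~\ref{lem:poisson_mono} directly: with $\lambda = M_1 + q$, the map $\lambda \mapsto \Pr[\textup{Poisson}(\lambda) \leq T]$ is strictly decreasing, so $f(q) = 1 - \Pr[\textup{Poisson}(M_1+q) \leq T]$ is strictly increasing in $q$.

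The only subtle step is verifying the two hypotheses of Theorem~\ref{cum_anderson}, namely $\lambda_0 \leq T$ and $N \geq T$. The first is immediate from the integrality of $M_1,T$ and $M_1 < T$ as shown above; the second holds whenever $p$ is close enough to $1$ (i.e. whenever $\lceil M_1/(1-p)\rceil \geq T$), which is precisely the regime where the lemma is used in the analysis. If $N < T$ then $\Pr[Z_1 \leq T] = 1$ and the inequality holds trivially.
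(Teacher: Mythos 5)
Your proof is correct and reaches the same conclusion by a slightly different route. The paper first raises the mean at the binomial level -- comparing $Z_1 \sim \textup{Bin}(n',1-p)$ (mean $n'(1-p) \leq M_1+(1-p)$) against $Y(n',q) \sim \textup{Bin}(n',(M_1+q)/n')$ (mean $M_1+q$) via stochastic dominance on $n'$ trials -- and only then invokes Anderson's monotonicity at $\lambda = M_1+q$ to pass to $\textup{Poisson}(M_1+q)$. You instead pass from $Z_1 = \textup{Bin}(N,\lambda_0/N)$ directly to $\textup{Poisson}(\lambda_0)$ via Anderson at $\lambda = \lambda_0 = N(1-p)$, and then raise the Poisson parameter from $\lambda_0$ to $M_1+q$ using Lemma~\ref{lem:poisson_mono}. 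Both sequences of comparisons are valid; yours has the small aesthetic advantage that Lemma~\ref{lem:poisson_mono} does double duty (for the bound and for the monotonicity claim), while the paper relies on binomial stochastic dominance for the first step. One minor inaccuracy in your write-up: you justify $M_1 + 1 \leq T$ by asserting that $M_1$ and $T$ are integers, but in the paper $M_1$ is \emph{not} an integer (it is taken to be $5.41$, $5.59$, etc.). The bound $\lambda_0 \leq T$ that you actually need does hold for the paper's parameter choices (e.g.\ $\lambda_0 < M_1+1 = 6.41 < 12 = T$), but not for the reason you give; the relevant hypothesis is $M_1 + q \leq T$ (which the paper itself invokes implicitly and leaves out of the lemma statement), and your $\lambda_0 \leq M_1+(1-p) \leq M_1+q$ shows that this suffices. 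Your final remark that the lemma is vacuously true when $N < T$ (since then $P^{(1)\to(2)} = 0$) is a correct and worthwhile observation that the paper does not spell out.
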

			\begin{proof}
				Define $Y(j,x) \sim \textup{Bin} (j, (M_1 + x)/j)$. Observe that $\mathbb{E}[Y(j,x)] = M_1 + x$.
				Denote $n' = \ceil{M_1/(1-p)}$. First note that 
				\[
				\Pr[Z_j \geq T + 1] \leq \Pr[Y(n', 1-p) \geq T + 1]
				\]
				since the expectation of $Y(n',(1-p))$ is $M_1 + (1-p)$ whereas the expectation of $Z_j$ is $\leq M_1 + (1-p)$ and they are both binomial distributions on $n'$ trials.	By the same reasoning we have that for every $j \geq n'$
				\[
				\Pr[Y(j,(1-p)) \geq T + 1] \leq \Pr[Y(j, q) \geq T + 1] \;.
				\]
				Let $P(x) \sim \textup{Poisson}(x)$. Theorem \ref{bin_p_mitz} implies that $\lim_{j \rightarrow \infty}Y(j, x) = P(M_1 + x)$.
				Therefore, 
				\begin{align*}
				P^{(1)\rightarrow (2)} = \Pr \left[ Z_j \geq T + 1\right] &\leq 
				\Pr[Y(n', q) \geq T + 1] \\
				&= 1 - \Pr \left[Y(n', q) \leq T \right] \\
				&\leq 1 - \lim_{j \rightarrow \infty}\Pr \left[Y(j, q)\leq T \right]\\
				&= 1 - \Pr \left[P(M_1 + q) \leq T \right] \\
				&= 1 - e^{-M_1 - q} \sum_{i=0}^{T} \frac{(M_1 + q)^i}{i!}\;,  
				\end{align*}
				where the second inequality follows from Theorem~\ref{cum_anderson} due to monotonicity for $T \geq M_1 + q$.
				
				Note that the monotonicity of $f(q) = 1 - e^{-M_1 - q} \sum_{i=0}^{T} \frac{(M_1 + q)^i}{i!}$ follows from Lemma \ref{lem:poisson_mono}.
			\end{proof}
			
%
			\item $ \left|r'_j\right| = 0$: Here the blown-up run was completely deleted by the channel. The probability for this to happen is $P^{(1)\rightarrow (0)} := \Pr [Z_j = 0]$. It holds that,
			\begin{equation}
			P^{(1)\rightarrow (0)} = \Pr[Z_j = 0] = p^{\ceil{\frac{M_1}{1-p}}}. \label{full_deletion_prob_1} 
			\end{equation}
			It also holds that for any $p \in (0,1)$,
			\begin{equation}
			P^{(1)\rightarrow (0)} = \Pr[Z_j = 0]  \leq e^{-M_1}.\label{full_deletion_bound_1}
			\end{equation}
		\end{enumerate}
		
		\item Similarly, when $\left|r_j\right| = 2$  there are two cases to consider:
		\begin{enumerate}
			\item $ \left|r'_j\right| = 1$: The probability for this to happen is $P^{(2)\rightarrow (1)} := \Pr[Z_j \leq T]$. As before, the exact calculation is
			\begin{equation} 
			P^{(2)\rightarrow (1)} = \Pr[Z_j \leq T] = \sum_{i=0}^{T} \binom{\ceil{\frac{M_2}{1-p}}}{i} (1-p)^i \cdot p^{\ceil{\frac{M_2}{1-p}} - i} \label{big_to_small_run_probability} \;. 
			\end{equation}
			
			Similarly to the calculations for	$P^{(1)\rightarrow (2)}$, we would like to upper bound $P^{(2)\rightarrow (1)}$ using a simpler expression coming from the Poisson distribution. 
			\begin{lemma} \label{lem:bounds-big-run-small-run}
				For every $p$, it holds that
				\begin{equation}
				P^{(2)\rightarrow (1)} \leq e^{-M_2} \sum_{i=0}^{T} \frac{M_2^i}{i!} \label{big_to_small_run_bound} \;.
				\end{equation}
				Moreover, for every $q \geq p$ such that $M_2/(1-q)$ is an integer, it holds that 
				\begin{equation}
				P^{(2)\rightarrow (1)} \leq \sum_{i=0}^{T} \binom{\frac{M_2}{1-q}}{i} (1-q)^i \cdot q^{\frac{M_2}{1-q} - i} \label{big_to_small_run_bound_q}
				\end{equation}
			\end{lemma}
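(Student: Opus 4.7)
The common strategy for both bounds is to interpolate between $Z_2$ and the target Poisson/binomial distribution via the auxiliary family $Y(j) \sim \textup{Bin}(j, M_2/j)$, which has mean exactly $M_2$ for every $j \geq M_2$. The first step, shared by both parts, is a standard stochastic-dominance coupling: since $M_2/n_2' = M_2/\lceil M_2/(1-p)\rceil \leq 1-p$, each Bernoulli trial of $Z_2$ dominates the corresponding trial of $Y(n_2')$, and so $\Pr[Z_2 \leq T] \leq \Pr[Y(n_2') \leq T]$. Thus it remains to upper bound $\Pr[Y(n_2') \leq T]$ by the two claimed quantities.

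For bound (1), I would apply Theorem~\ref{cum_anderson}, case 1, with $\lambda = M_2$. Since $T < M_2$ by the algorithm's choice of threshold ($M_1 < T < M_2$), the function $j \mapsto \Pr[Y(j) \leq T]$ is monotonically non-decreasing for $j \geq M_2$. Taking $j \to \infty$ and invoking the Poisson-limit Theorem~\ref{bin_p_mitz} then yields
\[
\Pr[Y(n_2') \leq T] \leq \lim_{j \to \infty} \Pr[Y(j) \leq T] = \Pr[\textup{Poisson}(M_2) \leq T] = e^{-M_2} \sum_{i=0}^{T} \frac{M_2^i}{i!},
\]
which gives the first bound. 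This is entirely parallel to the argument for Lemma~\ref{lem:small_to_big_run_bound}, only using case 1 of Anderson's theorem (lower-tail monotonicity) instead of case 2.

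For bound (2), the key observation is that the integrality constraint on $N := M_2/(1-q)$ combined with $q \geq p$ forces $N \geq n_2'$. Indeed, $q \mapsto M_2/(1-q)$ is continuous and strictly increasing from $M_2/(1-p)$ to $\infty$ as $q$ ranges over $[p,1)$, so the integer values it attains are precisely $\lceil M_2/(1-p) \rceil = n_2',\, n_2'+1,\, n_2'+2,\ldots$. Applying Theorem~\ref{cum_anderson}, case 1, a second time along the sequence $j = n_2',\, n_2'+1,\ldots, N$ gives $\Pr[Y(n_2') \leq T] \leq \Pr[Y(N) \leq T]$. Since $Y(N) \sim \textup{Bin}(N, M_2/N) = \textup{Bin}(M_2/(1-q), 1-q) = W_q$, this equals the claimed binomial tail, completing part (2).

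The main subtle point is the detour through the equal-mean family $Y(j)$. One might naively try to couple $Z_2 \sim \textup{Bin}(n_2', 1-p)$ directly to $W_q \sim \textup{Bin}(N, 1-q)$, but $W_q$ has a smaller success probability on possibly more trials, so neither distribution stochastically dominates the other and the inequality does not follow from a Bernoulli-by-Bernoulli comparison. Going through $Y(j)$ lets the comparison be carried in two pieces where stochastic dominance and Anderson's monotonicity each apply in the right direction; this is the only genuine technical step, and once the integrality argument $N \geq n_2'$ is in place, everything else is routine.
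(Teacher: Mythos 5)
Your proof takes essentially the same route as the paper: both pass through the equal-mean family $Y(j) \sim \textup{Bin}(j, M_2/j)$, apply a one-sided binomial stochastic dominance to get from $Z_2$ to $Y(n_2')$, then use Anderson's monotonicity (case 1) to climb along $j$ — stopping at $j = M_2/(1-q)$ for the second bound and continuing to the Poisson limit for the first; the paper simply does this as a single inequality chain while you split it into two parts. One small imprecision: Anderson's theorem (case 1) requires $T \leq M_2 - 1$, not merely $T < M_2$ — the paper's proof explicitly states the condition as $T \leq M_2 - 1$, and since $M_2$ need not be an integer, $T < M_2$ alone does not imply it (though the actual parameter choices in the paper do satisfy the stronger inequality).
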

			\begin{proof}
				For a natural number $1\leq i$, let $Y(i) \sim \textup{Bin} \left(i, M_2/i \right)$. Let $P \sim \textup{Poisson}(M_2)$. Observe that  $\Pr[Z_j \leq T] \leq \Pr[Y(\ceil{M_2/(1-p)}) \leq T]$ as the latter can only have smaller expectation. Since $\lim_{i\rightarrow \infty}Y(i) \sim P$ and due to the monotonicity implied by Theorem~\ref{cum_anderson} we get that when $T\leq M_2 -1$,  for every $p$ it holds that:
				\begin{align}
				P^{(2)\rightarrow (1)} = \Pr[Z_j \leq T] & \leq \Pr[Y(\ceil{M_2/(1-p)}) \leq T] \nonumber \\
				& \leq \Pr[Y(\lceil{M_2/(1-q)}\rceil) \leq T] \nonumber \\
				& = \Pr[Y(M_2/(1-q)) \leq T] \nonumber \\
				&\leq \lim_{i\rightarrow \infty}\Pr[Y(i) \leq T] \nonumber \\
				&= \Pr[P \leq T] \nonumber \\ 
				& = e^{-M_2} \sum_{i=0}^{T} \frac{M_2^i}{i!} \nonumber \;, 
				\end{align}
				where the second and the third inequalities hold due to Theorem \ref{cum_anderson} for $T \leq M_2 - 1$. Note that the second inequality proves the second statement in the lemma. 
			\end{proof}
%
			\item $ \left|r'_j\right| = 0$: The probability for this to happen is $P^{(2)\rightarrow (0)} := \Pr [Z_j = 0]$. It holds that,
			\begin{equation}
			P^{(2)\rightarrow (0)} = \Pr[Z_j = 0] = p^{\ceil{\frac{M_2}{1-p}}} \label{full_deletion_prob_2} 
			\end{equation}
			and for every $p\in(0,1)$ we have
			\begin{equation}
			P^{(2)\rightarrow (0)} = \Pr[Z_j = 0] \leq e^{-M_2} \;.\label{full_deletion_bound_2}
			\end{equation}
		\end{enumerate}
	\end{enumerate}


	Recall that $c_{\sigma_i}^{(\sf in)}$ is an inner codeword that consists of exactly $\beta_1 m$ $1$-runs and $\beta_2 m$ $2$-runs. Also recall that we blow-up an inner codeword, $c_{\sigma_i}^{(\sf in)}$, and send it through the BDC$_p$. Suppose that Step~\ref{decode:buffer} of the algorithm identified the $i-1$'th and the $i$'th buffer and that there were no spurious buffers in between. Let $s_j$ be the binary string corresponding to this decoding window obtained in Step \ref{decode:buffer}, and let $\tilde{c}_j$ be the result of Step~\ref{decode:blow-up} of the algorithm on $s_j$.
	
	For every $j\in [\beta m -1]$, Let $X_j$ be the random variable defined by 
	$$X_{j} =  \left\{ \begin{array}{ll}
	0 & \textrm{if $\left|r_j\right| = \left|r_j'\right|$}\\
	1  & \textrm{if $Z_j > 0$ and $\left|r_j\right|\neq \left|r_j'\right|$}\\
	\left| r_j \right| + \left|r_{j+1}\right| & \textrm{if $Z_j =0$}
	\end{array} \right. \;.
	$$
	Similarly define $X_{\beta m}$ to be
	$$X_{\beta m} =  \left\{ \begin{array}{ll}
	0 & \textrm{if $\left|r_{\beta m}\right| = \left|r_{\beta m}'\right|$}\\
	1  & \textrm{if $Z_{\beta m} > 0$ and $\left|r_{\beta m}\right|\neq \left|r_{\beta m}'\right|$}\\
	\left| r_{\beta m} \right| + 2 & \textrm{if $Z_{\beta m} =0$}
	\end{array} \right. \;.
	$$
	\begin{claim} \label{claim:ed_upper_bound}
		Let $c^{(\textup{in})}_{\sigma_i}$ be an inner codeword. Assume that the buffers before and after $c^{(\textup{in})}_{\sigma_i}$ were detected correctly and assume that there were no spurious buffers in between. Suppose that $\tilde{c}_j$ is the corresponding string obtained at Step~\ref{decode:blow-up} of the decoding algorithm on $s_j$.
		Then,
		\[
		\textup{ED}\left(c^{(\textup{in})}_{\sigma_i}, \tilde{c}_j\right) \leq \sum_{j=1}^{\beta m} X_j\;.
		\]
	\end{claim}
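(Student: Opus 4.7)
The plan is to apply the LCS-based identity $\textup{ED}(c, \tilde{c}) = |c| + |\tilde{c}| - 2|\textup{LCS}(c,\tilde{c})|$ from Lemma~\ref{lem:lcs}, and then exhibit an explicit common subsequence whose length gives the required bound. First, I would classify each index $j \in [\beta m]$ into three groups based on the channel's behavior: (A) standalone survivors, where $Z_j > 0$ and $j$ is not part of any maximal merged set; (B) members of some maximal merged set $I$ with $|I| \geq 2$; and (C) channel-deleted runs, where $Z_j = 0$. Note that every index falls into exactly one of these groups.

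Next, I would construct a common subsequence by matching bits run-by-run: for each standalone $r_j$, match $\min(|r_j|,|r_j'|)$ bits against the corresponding $r_j'$ in $\tilde{c}_j$; for each maximal merged set $I$ with $j_0 = \min I$, since the algorithm's output $\tilde{r}_{j_0}$ consists of $1$ or $2$ copies of the same symbol as $r_{j_0}$ and $\sum_{i\in I}|r_i|\geq 2\geq|\tilde{r}_{j_0}|$, match all $|\tilde{r}_{j_0}|$ bits against bits of $r_{j_0}$. Plugging this LCS lower bound into the identity and using $|r_j|+|r_j'|-2\min(|r_j|,|r_j'|) = \bigl||r_j|-|r_j'|\bigr|$, one obtains
\[
\textup{ED}(c^{(\sf in)}_{\sigma_i}, \tilde{c}_j) \;\leq\; \sum_{j\in A} \bigl||r_j|-|r_j'|\bigr| \;+\; \sum_{j\in C} |r_j| \;+\; \sum_{\text{merged } I} \Bigl(\sum_{i\in I}|r_i| - |\tilde{r}_{j_0}|\Bigr).
\]

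It then remains to compare the right-hand side with $\sum_j X_j$ term by term. The first sum matches $\sum_{j\in A} X_j$ since $|r_j|,|r_j'|\in\{1,2\}$ forces $\bigl||r_j|-|r_j'|\bigr|$ to equal the indicator $\mathbf{1}[|r_j|\neq|r_j'|]$. The second sum is dominated by the $|r_j|$ summands in $\sum_{j\in C} X_j = \sum_{j\in C}(|r_j|+|r_{j+1}|)$. For the third, I would split each merged-set term as $(|r_{j_0}|-|\tilde{r}_{j_0}|) + \sum_{i\in I\setminus\{j_0\}}|r_i|$. The second piece is absorbed by the leftover $|r_{j+1}|$-contributions in $\sum_{j\in C} X_j$: by Claim~\ref{deleted_in_merged_set} and maximality of $I$, every index strictly between $\min I$ and $\max I$ that is not in $I$ must lie in $C$, so in particular for each $i\in I\setminus\{j_0\}$ we have $i-1\in C$ and $|r_i|$ appears as the $|r_{(i-1)+1}|$ term of $X_{i-1}$. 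The first piece satisfies $|r_{j_0}|-|\tilde{r}_{j_0}|\leq 1$ (both sides live in $\{1,2\}$ and $|\tilde{r}_{j_0}|\geq 1$ since $Z_{j_0}>0$), and the extremal case $|r_{j_0}|=2,\ |\tilde{r}_{j_0}|=1$ forces $|r_{j_0}'|\leq|\tilde{r}_{j_0}|=1\neq|r_{j_0}|$, so the indicator in $X_{j_0}$ supplies exactly the needed unit.

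The main bookkeeping obstacle will be the boundary case $\beta m\in D$, where the deletion of the last run can cause the zeros of $r_{\beta m-1}$ to be absorbed into the right buffer rather than appearing in the decoding window $s_j$, so the matching argument fails for the very last block. This is precisely what the constant $+2$ in the definition of $X_{\beta m}$ accounts for, mirroring the $+2$ correction in Claim~\ref{deleted_runs_prop}: since $|r_{\beta m-1}|\leq 2$, the constant $2$ safely replaces the non-existent $|r_{\beta m+1}|$ and dominates any bits absorbed by the right buffer. With this adjustment the inequality holds on every run, yielding the claimed bound.
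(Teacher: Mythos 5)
Your LCS-based plan is a reasonable way to make the paper's terse argument explicit, and the interior accounting (matching $\min(|r_j|,|r_j'|)$ for standalone $j$, matching $|\tilde{r}_{\min I}|$ for a merged set, then splitting the merged-set surplus into a bounded piece $|r_{j_0}|-|\tilde{r}_{j_0}|$ picked up by $X_{j_0}$ and a piece $\sum_{i\in I\setminus\{j_0\}}|r_i|$ routed to $X_{i-1}$ via $i-1\in C$) is correct. The paper instead simply cites Claim~\ref{deleted_runs_prop}; your proof effectively reconstructs that claim from scratch, which is fine in principle.

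However, there is a genuine gap: your three-way partition (A)/(B)/(C) does not account for runs absorbed into a buffer, and only $0$-runs adjacent to the endpoints can be so absorbed. If $1\in D$, the surviving $0$-runs between $r_1$ and the first surviving $1$-run $r_{i'}$ merge with the left buffer in Step~1 of the algorithm and therefore never produce any run in $\tilde{c}_j$. Each such run still lands in your group (A) (or in a merged set that touches the buffer, in which case $\tilde{r}_{\min I}$ is empty), so the charge $\bigl||r_j|-|r_j'|\bigr|$ you assign it can be $0$ even though its true contribution to $\textup{ED}$ is $|r_j|\in\{1,2\}$. Consequently your displayed intermediate bound
\[
\textup{ED}\;\leq\;\sum_{j\in A}\bigl||r_j|-|r_j'|\bigr|\;+\;\sum_{j\in C}|r_j|\;+\;\sum_{\text{merged }I}\Bigl(\sum_{i\in I}|r_i|-|\tilde{r}_{\min I}|\Bigr)
\]
is false as written. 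You do notice the symmetric problem on the right, but the fix is misstated: the $+2$ in $X_{\beta m}$ is a stand-in for the single nonexistent $|r_{\beta m+1}|$ and, exactly as in Claim~\ref{deleted_runs_prop}, absorbs only the one $0$-run adjacent to the last surviving $1$-run; the remaining absorbed runs must be charged to the $|r_{\ell+1}|$ terms of interior $X_\ell$ with $\ell\in D$, so the $2$ alone does not ``dominate any bits absorbed by the right buffer.'' To close the argument you need either a fourth class of buffer-absorbed runs charged at full weight $|r_j|$ and routed to $X_{j-1}$ (legitimate because the adjacent $1$-run $r_{j-1}$ is necessarily in $C$, with the $+2$ handling the one absorbed run whose predecessor is the surviving $r_{i'}$), or simply invoke Claim~\ref{deleted_runs_prop} directly, which already covers both boundary cases.
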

	\begin{proof}
		If $r_j$ is a $1$-run and $r_j'$ is a $2$-run then there was an insertion. Equivalently, if $r_j$ is a $2$-run and $r_j'$ is a $1$-run then there was a deletion. 
		If a run was completely deleted by the channel then according to Claim~\ref{deleted_runs_prop}, at the worst case scenario, the following run is also deleted by the algorithm. The definition of the $X_j$'s accounts for all that.
	\end{proof}
	Note that we may do over counting in some scenarios, e.g., if $r_{j+1} \neq r_{j+1}'$ and $r_j$ was deleted by the channel then $X_j + X_{j+1} = \left|r_j\right| + \left|r_{j+1}\right| + 1$ but the edit distance is at most $\left|r_j\right| + \left|r_{j+1}\right|$. This over counting makes the upper bound less tight.
	
	Set $X=\sum_{j=1}^{\beta m}X_j$. 
	We next upper bound and lower bound $\mathbb{E}[X]$.
	\begin{claim} \label{clm:x_expected_upper}
		It holds that
		\[
		\mathbb{E} \left[ X \right] \geq \xi m \;,
		\]
		where 
		\[
		\xi = \beta_1 \left(P^{(1)\rightarrow (2)} + 2\cdot P^{(1)\rightarrow (0)}\right) + \beta_2 \left(P^{(2)\rightarrow (1)} + 3\cdot P^{(2)\rightarrow (0)}\right)\;.
		\]
	\end{claim}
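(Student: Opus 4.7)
The plan is to use linearity of expectation: compute a lower bound for $\mathbb{E}[X_j]$ for each $j \in [\beta m]$ separately, splitting on the two parameters that determine the distribution of $X_j$, namely whether $r_j$ is a $1$-run or a $2$-run, and whether $j < \beta m$ or $j = \beta m$ (the right endpoint needs a separate treatment because of the special $+2$ term appearing in the definition of $X_{\beta m}$). Note that the lengths $\left|r_j\right|, \left|r_{j+1}\right|$ are deterministic quantities fixed by the inner codeword $c^{(\sf in)}_{\sigma_i}$, so the only randomness in $X_j$ comes from $Z_j$.

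First I would handle $j < \beta m$. If $r_j$ is a $1$-run, then by the case analysis in Definition~\ref{def:r'_j} (and the probability calculations following Claim~\ref{deleted_runs_prop}), $X_j = 1$ occurs exactly when $\left|r'_j\right| = 2$, which happens with probability $P^{(1)\rightarrow(2)}$, and $X_j = \left|r_j\right| + \left|r_{j+1}\right| = 1 + \left|r_{j+1}\right|$ occurs when $Z_j = 0$, which happens with probability $P^{(1)\rightarrow(0)}$. Since $\left|r_{j+1}\right| \geq 1$, we obtain
\[
\mathbb{E}[X_j] \;\geq\; P^{(1)\rightarrow(2)} + 2\,P^{(1)\rightarrow(0)}.
\]
Symmetrically, if $r_j$ is a $2$-run, then $X_j = 1$ with probability $P^{(2)\rightarrow(1)}$ and $X_j = 2 + \left|r_{j+1}\right| \geq 3$ with probability $P^{(2)\rightarrow(0)}$, so $\mathbb{E}[X_j] \geq P^{(2)\rightarrow(1)} + 3\,P^{(2)\rightarrow(0)}$.

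For $j = \beta m$, recall from Remark~\ref{rem:odd} that every string in $S$ begins and ends with a $1$, so $r_{\beta m}$ is a $1$-run. From the definition of $X_{\beta m}$ we get $X_{\beta m} = 1$ with probability $P^{(1)\rightarrow(2)}$ and $X_{\beta m} = 1+2 = 3$ with probability $P^{(1)\rightarrow(0)}$, so $\mathbb{E}[X_{\beta m}] \geq P^{(1)\rightarrow(2)} + 3\,P^{(1)\rightarrow(0)} \geq P^{(1)\rightarrow(2)} + 2\,P^{(1)\rightarrow(0)}$, which matches (and in fact exceeds) the bound for a $1$-run at an interior position. Finally, summing over the $\beta_1 m$ indices where $r_j$ is a $1$-run and the $\beta_2 m$ indices where $r_j$ is a $2$-run, linearity of expectation yields
\[
\mathbb{E}[X] \;\geq\; \beta_1 m\,\bigl(P^{(1)\rightarrow(2)} + 2\,P^{(1)\rightarrow(0)}\bigr) + \beta_2 m\,\bigl(P^{(2)\rightarrow(1)} + 3\,P^{(2)\rightarrow(0)}\bigr) \;=\; \xi m,
\]
which is the desired bound. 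There is no real obstacle here; the only mild subtlety is remembering that $r_{\beta m}$ is forced to be a $1$-run (so that the boundary case is subsumed by the $1$-run bound rather than creating an additional term), and being careful to use $\left|r_{j+1}\right| \geq 1$ to obtain the coefficients $2$ and $3$ on the deletion probabilities.
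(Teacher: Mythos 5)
Your proof follows the same approach as the paper: linearity of expectation, splitting on whether $r_j$ is a $1$-run or a $2$-run, and using $\left|r_{j+1}\right|\geq 1$ to obtain the coefficients $2$ and $3$ on the deletion probabilities. One step, however, is mistaken: you assert that $r_{\beta m}$ is forced to be a $1$-run because every string in $S$ ends with the symbol $1$. This conflates the \emph{symbol} of a run with its \emph{length} — a codeword in $S_{m,\beta_1}$ may end with either $\langle 1\rangle$ or $\langle 11\rangle$, so $r_{\beta m}$ can be a $1$-run or a $2$-run (indeed the paper's proof of Claim~\ref{clm:x_expected_lower} explicitly treats both cases). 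Fortunately the slip is harmless: when $r_{\beta m}$ is a $2$-run, the definition of $X_{\beta m}$ gives $X_{\beta m}=\left|r_{\beta m}\right|+2=4$ on the event $Z_{\beta m}=0$, so $\mathbb{E}[X_{\beta m}]\geq P^{(2)\rightarrow(1)}+4\,P^{(2)\rightarrow(0)}\geq P^{(2)\rightarrow(1)}+3\,P^{(2)\rightarrow(0)}$, which is precisely the $2$-run bound you already use elsewhere. With this missing case added the summation goes through and $\mathbb{E}[X]\geq\xi m$ holds as claimed.
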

	\begin{proof}
		For every $X_j$ such that $r_j$ is a $1$-run we have 
		\[
		\mathbb{E}[X_j] \geq 1\cdot P^{(1)\rightarrow (2)} + 2 \cdot P^{(1)\rightarrow (0)} \;,
		\]
		where we used the fact that  $\left|r_j\right| + \left|r_{j+1}\right|\geq 2$.
		Similarly, for every $X_j$ such that $r_j$ is a $2$-run we have 
		\[
		\mathbb{E}[X_j] \geq 1\cdot P^{(2)\rightarrow (1)} + 3 \cdot P^{(2)\rightarrow (0)}\;.
		\]
		As there are exactly $\beta_1$ $1$-runs and $\beta_2$ $2$-runs, the claim follows.
	\end{proof}
	
	\begin{claim} \label{clm:x_expected_lower}
		It holds that
		\[
		\mathbb{E} \left[ X \right] \leq \gamma m + P^{(1)\rightarrow (0)} \;,
		\]
		where,
		\begin{equation} 
		\gamma = \beta_1 \cdot P^{(1)\rightarrow (2)} + \beta_2 \cdot P^{(2)\rightarrow (1)}
		+ \left(2\beta_1+ \beta_2 \right) \cdot P^{(1)\rightarrow (0)}
		+ 4 \beta_2 \cdot P^{(2)\rightarrow (0)} \;,
		\end{equation}
		is the same $\gamma$ as in Proposition~\ref{prop:decode}.
	\end{claim}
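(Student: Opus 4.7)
The plan is to expand $\mathbb{E}[X]=\sum_{j=1}^{\beta m}\mathbb{E}[X_j]$ by linearity, compute each summand from the case-definition of $X_j$, and then assemble the result by a careful bookkeeping of run-type transitions in the fixed codeword $c^{(\sf in)}_{\sigma_i}$.

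For $j<\beta m$ with $|r_j|=1$ the definition of $X_j$ yields $\mathbb{E}[X_j]=P^{(1)\rightarrow(2)}+(1+|r_{j+1}|)\,P^{(1)\rightarrow(0)}$, while for $|r_j|=2$ it gives $\mathbb{E}[X_j]=P^{(2)\rightarrow(1)}+(2+|r_{j+1}|)\,P^{(2)\rightarrow(0)}$; here $|r_{j+1}|\in\{1,2\}$ is a deterministic feature of the inner codeword. The boundary term $\mathbb{E}[X_{\beta m}]$ equals $P^{(1)\rightarrow(2)}+3P^{(1)\rightarrow(0)}$ when $|r_{\beta m}|=1$ and $P^{(2)\rightarrow(1)}+4P^{(2)\rightarrow(0)}$ when $|r_{\beta m}|=2$.

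To collect these contributions I introduce transition counts $a,b,c,d$ for the indices $j\in[\beta m-1]$ at which $(|r_j|,|r_{j+1}|)$ equals $(1,1),(1,2),(2,1),(2,2)$ respectively. Summing over $j<\beta m$ yields
\[
(a+b)P^{(1)\rightarrow(2)}+(2a+3b)P^{(1)\rightarrow(0)}+(c+d)P^{(2)\rightarrow(1)}+(3c+4d)P^{(2)\rightarrow(0)}.
\]
Since $c^{(\sf in)}_{\sigma_i}\in S_{m,\beta_1}$ has exactly $\beta_1 m$ one-runs and $\beta_2 m$ two-runs, the pair $(a+b,c+d)$ equals $(\beta_1 m-1,\beta_2 m)$ or $(\beta_1 m,\beta_2 m-1)$ according to whether $|r_{\beta m}|$ equals $1$ or $2$; I also use $b\leq \beta_2 m$ (each $(1,2)$-transition ends at a distinct $2$-run) and $d\leq c+d$.

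A two-case analysis on $|r_{\beta m}|$ finishes the argument. When $|r_{\beta m}|=1$, the rewriting $2a+3b+3=2(a+b)+b+3\leq 2\beta_1 m+\beta_2 m+1$ together with $3c+4d=3(c+d)+d\leq 4\beta_2 m$ and the boundary value $P^{(1)\rightarrow(2)}+3P^{(1)\rightarrow(0)}$ assemble into exactly $\gamma m + P^{(1)\rightarrow(0)}$. When $|r_{\beta m}|=2$, the analogous computation (now using $d\leq \beta_2 m-1$ since $c+d=\beta_2 m-1$) yields the strictly tighter bound $\gamma m-P^{(1)\rightarrow(0)}$, which is absorbed into the claim. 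The main delicate point is obtaining the sharp coefficient $(2\beta_1+\beta_2)$ in front of $P^{(1)\rightarrow(0)}$: a naive $|r_{j+1}|\leq 2$ would only give $3\beta_1$, whereas the decomposition $2a+3b=2(a+b)+b$ combined with $b\leq \beta_2 m$ extracts the gain that matches the definition of $\gamma$.
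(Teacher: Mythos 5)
Your proof is correct and takes essentially the same route as the paper: you expand $\mathbb{E}[X]$ by linearity, partition the indices $j<\beta m$ by the transition type $(|r_j|,|r_{j+1}|)$ (the paper's sets $Y^{i,k}$ with sizes parameterized by $\lambda_1,\lambda_2$ correspond to your counts $a,b,c,d$), use $b\leq\beta_2 m$ and $d\leq c+d$ to bound the coefficients of the deletion terms, and split on the type of the last run. One tiny arithmetic slip: in the case $|r_{\beta m}|=2$, the constraints $b\leq\beta_2 m$ and $d\leq c+d=\beta_2 m-1$ yield $\mathbb{E}[X]\leq\gamma m$ rather than $\gamma m-P^{(1)\rightarrow(0)}$ (matching the paper's computation), but since $\gamma m\leq\gamma m+P^{(1)\rightarrow(0)}$ this does not affect the conclusion.
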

	
	For the proof we shall denote with $X_j^{i,k}$ the random variable $X_j$ when $r_j$ is an $i$-run and $r_{j+1}$ is a $k$-run.
	
	\begin{proof}
		Suppose that $r_{\beta m}$ is a $1$-run. As will be explained later, this is the worst case, i.e., the upper that we prove on $\mathbb{E}[X]$ is largest in this case.
		Denote by $Y^{i,k}\subseteq [\beta m-1]$ the set of indices $j\in [\beta m - 1]$ such that $r_j$ is an $i$-run and $r_{j+1}$ is a $k$-run.
		From linearity of expectation it follows that
		\[
		\mathbb{E}[X] = \sum_{j\in Y^{1,2}} \mathbb{E} \left[X_j^{1,2} \right] + \sum_{j\in Y^{1,1}} \mathbb{E} \left[X_j^{1,1} \right] + \sum_{j\in Y^{2,1}} \mathbb{E} \left[X_j^{2,1} \right] + \sum_{j\in Y^{2,2}} \mathbb{E} \left[X_j^{2,2} \right] + \mathbb{E}\left[X_{\beta m}\right] \; .
		\]
		Let $\lambda_1$ be  such that $\left|Y^{1,2}\right|= \lambda_1 m$. Thus, $\left|Y^{1,1}\right|=(\beta_1 - \lambda_1 )m - 1$ (where the $1$ is subtracted because of the last run, which we assumed is a $1$-run). By definition we have that
		\[
		\sum_{j\in Y^{1,2}} \mathbb{E} \left[X_j^{1,2} \right] = \left( 1\cdot P^{(1)\rightarrow (2)} + 3\cdot P^{(1)\rightarrow (0)}\right) \cdot \lambda_1 m 
		\]
		and 
		\[
		\sum_{j\in Y^{1,1}} \mathbb{E} \left[X_j^{1,1} \right] = \left( 1\cdot P^{(1)\rightarrow (2)} + 2\cdot P^{(1)\rightarrow (0)} \right) \cdot ((\beta_1 - \lambda_1) m - 1) \;.
		\]
		Observe that	
		\begin{eqnarray*}	
			& & \sum_{j\in Y^{1,2}} \mathbb{E} \left[X_j^{1,2} \right] + \sum_{j\in Y^{1,1}} \mathbb{E} \left[X_j^{1,1} \right] + \mathbb{E}[X_{\beta m}] \\ 
			&=& \left( 1\cdot P^{(1)\rightarrow (2)} + 3\cdot P^{(1)\rightarrow (0)}\right) \cdot \lambda_1 m + \left( 1\cdot P^{(1)\rightarrow (2)} + 2\cdot P^{(1)\rightarrow (0)} \right) \cdot ((\beta_1 - \lambda_1) m - 1)\\& & + \left( 1\cdot P^{(1)\rightarrow (2)} + 3\cdot P^{(1)\rightarrow (0)}\right) \\
			&=&  P^{(1)\rightarrow (2)} \cdot \beta_1 m  +  P^{(1)\rightarrow (0)} \cdot (2\beta_1 m + \lambda_1 m +1) \;.
		\end{eqnarray*}
		As there are exactly $\beta_2 m$ $2$-runs, it holds that $0\leq \lambda_1 \leq \beta_2$. Hence,  this sum is maximized for $\lambda_1 = \beta_2$. We thus have that
		\begin{equation}
		\begin{split}
		&	\sum_{j\in Y^{1,2}} \mathbb{E} \left[X_j^{1,2} \right] + \sum_{j\in Y^{1,1}} \mathbb{E} \left[X_j^{1,1} \right] + \mathbb{E}[X_{\beta m}] \\ \leq & \beta_1 m P^{(1)\rightarrow (2)} + (2\beta_1 + \beta_2)m P^{(1)\rightarrow (0)} + P^{(1)\rightarrow (0)}\;.
		\label{eq:sum1}
		\end{split}
		\end{equation}
		Similarly, let $\lambda_2$ be such that $\left|Y^{2,1}\right| = \lambda_2 m$. Thus, $\left|Y^{2,2}\right| = (\beta_2 -\lambda_2) m$. It holds that
		\[
		\sum_{j\in Y^{2,1}} \mathbb{E} \left[X_j^{2,1} \right] = \left( 1\cdot P^{(2)\rightarrow (1)} + 3\cdot P^{(2)\rightarrow (0)} \right) \cdot \lambda_2 m
		\]
		and 
		\[
		\sum_{j\in Y^{2,2}} \mathbb{E} \left[X_j^{2,2} \right] = \left( 1\cdot P^{(2)\rightarrow (1)} + 4\cdot P^{(2)\rightarrow (0)} \right) \cdot (\beta_2 - \lambda_2) m\;.
		\]
		Since the sum $\sum_{j\in Y^{2,1}} \mathbb{E} \left[X_j^{2,1} \right] + \sum_{j\in Y^{2,2}} \mathbb{E} \left[X_j^{2,2} \right]$ is maximized for $\lambda_2 = 0$ we get, 
		\begin{equation}
		\sum_{j\in Y^{2,1}} \mathbb{E} \left[X_j^{2,1} \right] + \sum_{j\in Y^{2,2}} \mathbb{E} \left[X_j^{2,2} \right] \leq \beta_2 m P^{(2)\rightarrow (1)} + 4 \beta_2 m P^{(2) \rightarrow (0)} \;.
		\label{eq:sum2}	
		\end{equation}
		Combining \eqref{eq:sum1} and \eqref{eq:sum2} we obtain
		\begin{eqnarray}
		\mathbb{E}\left[ X \right] &=& \sum_{j\in Y^{1,2}} \mathbb{E} \left[X_j^{1,2} \right] + \sum_{j\in Y^{1,1}} \mathbb{E} \left[X_j^{1,1} \right] + \mathbb{E}[X_{\beta m}] + \sum_{j\in Y^{2,1}} \mathbb{E} \left[X_j^{2,1} \right] + \sum_{j\in Y^{2,2}} \mathbb{E} \left[X_j^{2,2} \right]  \nonumber \\
		&\leq & \beta_1 \cdot P^{(1)\rightarrow (2)} + \beta_2 \cdot P^{(2)\rightarrow (1)}
		+ \left(2\beta_1+ \beta_2 \right) \cdot P^{(1)\rightarrow (0)}
		+ 4 \beta_2 \cdot P^{(2)\rightarrow (0)}+ P^{(1) \rightarrow (0)}  \nonumber\\
		&= &  \gamma m + P^{(1) \rightarrow (0)} \; ,	\label{eq:sum-all} 
		\end{eqnarray}
		as claimed. 
		
		Note that if $r_{\beta m}$ was a $2$-run, then $\left|Y^{1,2}\right| + \left| Y^{1,1}\right| = \beta_1 m$ (no need to subtract $1$ since the last run is now a $2$-run) and we have,
		\[
		\sum_{j\in Y^{1,2}} \mathbb{E} \left[X_j^{1,2} \right] + \sum_{j\in Y^{1,1}} \mathbb{E} \left[X_j^{1,1} \right] \leq \beta_1 m P^{(1)\rightarrow (2)} + (2\beta_1 + \beta_2)m P^{(1)\rightarrow (0)} \;.
		\]
		In this case, we have that $\left|Y^{2,1}\right| + \left|Y^{2,2}\right| = \beta_2 m -1$. Thus, if we let $\lambda_2 $ be such that $\left|Y^{2,1}\right| = \lambda_2 m$ and $\left|Y^{2,2}\right| = (\beta_2 - \lambda_2) m - 1$ then
		\begin{eqnarray*}	
			& & \sum_{j\in Y^{2,1}} \mathbb{E} \left[X_j^{2,1} \right] + \sum_{j\in Y^{2,2}} \mathbb{E} \left[X_j^{2,2} \right] + \mathbb{E}[X_{\beta m}] \\ 
			&=& \left( 1\cdot P^{(2)\rightarrow (1)} + 3\cdot P^{(2)\rightarrow (0)}\right) \cdot \lambda_2 m + \left( 1\cdot P^{(2)\rightarrow (1)} + 4\cdot P^{(2)\rightarrow (0)} \right) \cdot ((\beta_2 - \lambda_2) m - 1)\\& & + \left( 1\cdot P^{(2)\rightarrow (1)} + 4\cdot P^{(2)\rightarrow (0)}\right) \\
			&=&  P^{(2)\rightarrow (1)} \cdot \beta_2 m  +  P^{(2)\rightarrow (0)} \cdot (4\beta_2 m -\lambda_2 m) \;,
		\end{eqnarray*}
		and this sum is maximized for $\lambda_2 = 0$. We thus have that
		\[
		\sum_{j\in Y^{2,1}} \mathbb{E} \left[X_j^{2,1} \right] + \sum_{j\in Y^{2,2}} \mathbb{E} \left[X_j^{2,2} \right] + \mathbb{E}[X_{\beta m}]  \leq P^{(2)\rightarrow (1)} \beta_2 m  + 4 P^{(2)\rightarrow (0)} \beta_2 m
		\]
		Then, if $r_{\beta m}$ is a $2$-run we have
		\begin{equation*}
		\begin{split}
		\mathbb{E}[X] & \leq \beta_1 m P^{(1)\rightarrow (2)} + (2\beta_1 + \beta_2)m P^{(1)\rightarrow (0)} + P^{(2)\rightarrow (1)} \beta_2 m  + 4 P^{(2)\rightarrow (0)} \beta_2 m \\&= \gamma m <  \gamma m + P^{(1) \rightarrow (0)}  \;.
		\end{split}
		\end{equation*}
	\end{proof}
	
	Thus, for any constant $\gamma' > \gamma$ there exist a constant $M_{\gamma '}$ such that for all $m > M_{\gamma '}$ it holds that
	$$ \mathbb{E}[X] \leq \gamma m + P^{(1)\rightarrow (0)} < \gamma' m \;.$$
	In the following claim we use concentration bound to show that the probability that $X$ is greater than $\gamma' m$, for $\gamma'> \gamma$, is exponentially small in $m$ and then we conclude that decoding of an inner codeword succeeds with high probability.
	
	\begin{claim}\label{cla:upper-bound-gamma'}
		For any $\gamma' > \gamma$ and for every constant $\nu > 0$ it holds that for a large enough $m$,
		\[
		\Pr[X > (1 + \nu) \gamma' m] <  \exp\left( - \frac{\nu^2 \xi^2 m}{8 \beta} \right) = \exp(-\Omega(m)) \;,
		\]
		where $\xi$ is as in Claim~\ref{clm:x_expected_upper}.
	\end{claim}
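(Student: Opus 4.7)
The plan is to apply Hoeffding's inequality (Theorem~\ref{thm:hoeffding}) to $X=\sum_{j=1}^{\beta m} X_j$. The key observations that make this possible are (i) the summands $X_j$ are mutually independent, and (ii) each $X_j$ is bounded by a small constant.

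First I would argue independence. Looking at the definition of $X_j$, its value is a deterministic function of $Z_j$ alone, because the run lengths $|r_j|$ and $|r_{j+1}|$ are fixed by the inner codeword (not random). Since the bits of distinct blown-up runs are transmitted through the BDC$_p$ independently, the random variables $Z_1,\dots,Z_{\beta m}$ are mutually independent, and hence so are $X_1,\dots,X_{\beta m}$. Next I would bound the range: for $j<\beta m$, $X_j\in\{0,1,|r_j|+|r_{j+1}|\}$, and since each run has length at most $2$, we have $X_j\in[0,4]$. Similarly $X_{\beta m}\in\{0,1,|r_{\beta m}|+2\}\subseteq[0,4]$. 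Thus each $X_j\in[a_j,b_j]$ with $b_j-a_j\le 4$, and $\sum_{j=1}^{\beta m}(b_j-a_j)^2\le 16\beta m$.

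Now I combine this with the expectation bound from Claim~\ref{clm:x_expected_lower}: for every constant $\gamma'>\gamma$ there exists $M_{\gamma'}$ such that for all $m>M_{\gamma'}$,
\[
\mu := \mathbb{E}[X]\le \gamma m + P^{(1)\to(0)} < \gamma' m.
\]
Setting $t := (1+\nu)\gamma' m - \mu$, for large enough $m$ we have $t > \nu\gamma' m$, and since $\gamma'>\gamma\ge \xi$ (the inequality $\gamma\ge\xi$ follows by direct comparison of the expressions in Claims~\ref{clm:x_expected_upper} and~\ref{clm:x_expected_lower}, as $\gamma-\xi = \beta_2 P^{(1)\to(0)} + \beta_2 P^{(2)\to(0)}\ge 0$), it follows that $t>\nu\xi m$.

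Finally, Theorem~\ref{thm:hoeffding} gives
\[
\Pr[X > (1+\nu)\gamma' m]=\Pr[X-\mu > t]\le \exp\!\left(-\frac{2t^2}{16\beta m}\right)=\exp\!\left(-\frac{t^2}{8\beta m}\right)\le \exp\!\left(-\frac{\nu^2\xi^2 m}{8\beta}\right),
\]
yielding the claim. The only nontrivial point is the independence argument, which hinges on noticing that $X_j$ depends only on the random variable $Z_j$ (the deterministic run-length factors are read off from the fixed inner codeword), so I expect no real obstacle beyond presenting that observation cleanly.
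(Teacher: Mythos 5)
Your proof is correct and takes essentially the same route as the paper's: the key steps (independence of the $X_j$ because each depends only on $Z_j$, the range bound $0\le X_j\le 4$ giving $\sum_j(b_j-a_j)^2\le 16\beta m$, and Hoeffding's inequality) are identical. The only cosmetic difference is algebraic bookkeeping---the paper first notes $\Pr[X>(1+\nu)\gamma' m]<\Pr[X>(1+\nu)\mathbb{E}[X]]$ and sets $t=\nu\mathbb{E}[X]$, then invokes $\mathbb{E}[X]\ge\xi m$ from Claim~\ref{clm:x_expected_upper}, whereas you set $t=(1+\nu)\gamma'm-\mu$ directly and verify $\gamma\ge\xi$ by a term-by-term comparison; both are valid and reach the same bound.
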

	\begin{proof}
		First note that 
		\begin{equation*}
		\Pr[X > (1 + \nu) \gamma' m] < \Pr[X > (1 + \nu) \mathbb{E}[X]] \;,
		\end{equation*}
		where by Claim~\ref{clm:x_expected_lower} the inequality holds for large enough $m$. 
		The delicate point is to notice that the $X_j$'s are independent. This is because each $X_j$ is determined solely according to the value of $Z_j$ (indeed, its value only depends on whether $Z_j=0$, $Z_j\leq T$ or $Z_j > T$), and the random variables $Z_j$'s are independent by the definition of the binary deletion channel. 
		For every $X_j$ it holds that $ 0 \leq X_j \leq 4$ and if we set  $t = \nu \mathbb{E}[X]$ and apply Theorem~\ref{thm:hoeffding} then we get that
		\begin{equation*}
		\begin{split}
		\Pr \left[ X > (1+\nu) \mathbb{E}[X]\right] &< \exp \left( - \frac{2 \nu^2 (\mathbb{E}[X])^2}{\beta m \cdot 4^2} \right) \\
		& \leq \exp\left(-\frac{2 \nu^2 (\xi m)^2}{16 \beta m}\right) \\
		& = \exp\left( -\frac{\nu^2 \xi^2 m}{8 \beta} \right) \;,
		\end{split}
		\end{equation*}
		where the second inequality follow from Claim~\ref{clm:x_expected_upper}.
		
	\end{proof}
	
	We are now ready to prove the main claim of this subsection, Proposition~\ref{clm:wrong_inner_dec}.
	\begin{proof} [Proof of Proposition \ref{clm:wrong_inner_dec}]
		By Claim~\ref{claim:ed_upper_bound} $X$ is an upper bound on $\ed(c^{(\textup{in})}_{\sigma_i}, \tilde{c}_j)$. Thus, $$\Pr \left[ \ed\left(c^{(\textup{in})}_{\sigma_i}, \tilde{c}_j\right) > \delin m\right]\leq \Pr[X > \delin m] \;.$$
		By the assumption in Proposition~\ref{prop:decode} we have that $\delin > \gamma$.
		We thus get that
		$$ \Pr\left[X > \delin m\right]   = \Pr\left[X > \left(1+\frac{\delin - \gamma}{\delin + \gamma} \right) \frac{\delin + \gamma}{2} m \right]  \leq  \exp\left( -\left( \frac{\delin - \gamma}{\delin + \gamma}\right)^2 \frac{ \xi^2 }{8 \beta} m\right)  \;,$$
		where the last inequality follows from Claim~\ref{cla:upper-bound-gamma'} by plugging $\nu = \frac{\delin - \gamma}{\delin + \gamma}$ and $\gamma' = \frac{\delin + \gamma}{2} $. This completes the proof of Proposition~\ref{clm:wrong_inner_dec}.
	\end{proof}
	
	\begin{remark}\label{rem:m-ind}
		Observe that all the parameters involved in the upper bound in Proposition~\ref{clm:wrong_inner_dec}, namely, $\gamma,\xi,\beta$ are independent of $m$. That is, they only depend on $\delin, M_1,M_2,T$ and $\beta_1$. 
	\end{remark}

	We are now ready to prove Proposition \ref{prop:decode}.
	
	\begin{proof}[Proof of Proposition~\ref{prop:decode}]
		We would like to show that with high probability, the edit distance between the original outer codeword $\sigma^{(\textup{out})}$ and the string $\tilde{\sigma}^{(\textup{out})}$, obtained after Step~\ref{decode:inner} of the decoding algorithm, is smaller than $ \delta_{\textup{out}} n$. To prove this we shall analyze the contribution of each of the error types (deleted buffer, spurious buffer and wrong inner decoding) on the edit distance. 
		
		A deleted buffer causes two inner codewords to merge and thus be decoded incorrectly by the inner code's decoding algorithm. When considering the effect of this on the edit distance between $\sigma^{(\textup{out})}$ and $\tilde{\sigma}^{(\textup{out})}$, this introduces two deletions and one insertion. Similarly, a spurious buffer introduces one deletion and two insertions, since an inner codeword gets split into two parts. A wrong inner decoding causes just one deletion and one insertion.
		Therefore, every error type increases the edit distance between the original outer codeword $\sigma^{(\textup{out})}$ and $\tilde{\sigma}^{(\textup{out})}$ 
		by at most three. 
		
		As mentioned, the outer decoding algorithm fails if $\ed \left( c^{(\sf out)},\tilde{c}^{(\sf out)} \right)> \delta_{\textup{out}} n$. 
		Thus, for this to happen, at least one of the following bad events must occur:
		\begin{enumerate}
			\item There were at least $\delout n /9$ deleted buffers.
			\item There were at least $\delout n /9$ spurious buffers.
			\item There were at least $\delout n /9$ inner codewords that were decoded incorrectly even though they did not have spurious buffers and their buffers were identified.
		\end{enumerate}
		
		We saw in Propositions \ref{clm:dele_buffer},\ref{clm:spuo_buffer} and \ref{clm:wrong_inner_dec} that for every inner codeword, each error type happens with probability $\exp(- \Omega(m))$. 
		Since $\delout$ is a fixed constant, there exists a large enough $m$ so that $\exp(- \Omega(m)) \leq \delout/ 10$ for each error type. 
		An important observation is that, similarly to Remark~\ref{rem:m-ind}, the constants in the $\exp(- \Omega(m))$ in the different propositions depend only on $\delin, \beta_1, M_1, T, M_2, M_B$ which are fixed constants and are not related to the outer code. Thus, we can choose a small enough $\epsout$, which determines a large enough $m$, so that the probability for each error type is $\leq \delout/10$.
		
		By Chernoff bound, for a large enough $n$, each of the three bad events happens with probability $\exp(-\Omega (n))$. Thus, Algorithm~\ref{alg:decode} succeeds with probability $1-\exp(- \Omega(n))$.
	\end{proof}

	\subsection{Proof of Theorem~\ref{BDC_theorem}} \label{Parameters_BDC}
	
	We now prove our main theorem.

	\begin{proof}[Proof of Theorem~\ref{BDC_theorem}]
		
		Our goal is to maximize the rate given in Equation~\eqref{Rate_BDC_eq} while assuring that the parameters that we pick guarantee successful decoding with high probability. Recall that the order by which we choose the parameters in our construction is the following. First, we choose $M_1, T, M_2, \beta_1, M_B, \delout$ to be fixed constants. 
		Then, we compute upper bounds on $P^{(1) \rightarrow (2)}$, $P^{(1) \rightarrow (0)}$, $P^{(2) \rightarrow (1)}$, $P^{(2) \rightarrow (0)}$. Plugging these upper bounds to Equation~\eqref{eq:gamma-bdc}, we get an upper bound on $\gamma$ which we denote by $\tilde{\gamma}$. \footnote{We do not compute the value of $\gamma$ exactly as it is too difficult to do parametrically.}
		Note that $\tilde{\gamma}$ depends only on $M_1, T, M_2, \beta_1,$ and $p$. Then we choose $\delin$ to be larger than $\tilde{\gamma}$, and in particular we have $\gamma \leq \tilde{\gamma} < \delin$. Proposition~\ref{prop:decode} guarantees that if we choose a small enough $\epsout$, then our decoding algorithm will succeed with high probability. Thus, we only have to make sure that the rate that we get satisfies the statement in the theorem. We calculate the value of $\Rin$ using Proposition \ref{inner_code_lem} and then use it to calculate the overall rate. 
		
		We consider several regimes of $p$ and for each regime we choose suitable parameters.

		\paragraph{Case $p \geq 0.9$:}
		In this case we choose: 
		$$M_1 = 5.41, M_2 = 22.8, \beta_1 = 0.522,M_B = 10^{-5}, \delout = 2^{-20}  \quad \text{and } \delin = 0.01052 \;,$$ 
		and set $T=12$. From Proposition \ref{inner_code_lem} we get that, for our choice of parameters, the rate of  the inner code is $\Rin = 0.5229$. 
		The upper bounds for $P^{(1) \rightarrow (0)}, P^{(2) \rightarrow (1)}, P^{(2) \rightarrow (0)}$ are computed using Equations \eqref{full_deletion_bound_1}, \eqref{big_to_small_run_bound}, and \eqref{full_deletion_bound_2}. To upper bound $P^{(1)\rightarrow (2)}$, we use Equation~\eqref{small_to_big_run_bound} given in Lemma~\ref{lem:small_to_big_run_bound} with $q = 0.1$. Observe that as we assume $p\geq 0.9$ it follows that $q\geq 1-p$.
		
		One can plug in the upper bounds to Equation~\eqref{eq:gamma-bdc} and observe that $\tilde{\gamma} < \delin$.
		Proposition~\ref{prop:decode} guarantees that for a small enough $\epsout$ our decoding algorithm succeeds with high probability.
		To calculate the rate we use Equation~\eqref{Rate_BDC_eq}. For a large enough  $m$ (e.g. $m> 10^5$) we obtain
		\[
		\frac{0.5229 (1-p)}{8.27323 + 0.761(1-p) + (1-p)/m} \geq \frac{0.5229 (1-p)}{8.34933} > \frac{(1-p)}{16}\; .
		\]
		
		\paragraph{Case $0.57 < p < 0.9$:} 
		For this regime we use the parameters 
		$$M_1 = 5.59, M_2 = 23.5, \beta_1 = 0.53,M_B = 10^{-5}, \delout = 2^{-20}  \quad \text{and } \delin = 0.008013 \;,$$ 
		and set $T = 13$. We get that the rate of the inner code is $\Rin = 0.55224$. We first note that the calculations used to upper bound $P^{(1) \rightarrow (0)}, P^{(2) \rightarrow (1)}, P^{(2) \rightarrow (0)}$ were obtained by using Equations~\eqref{full_deletion_prob_1},~\eqref{big_to_small_run_probability} and~\eqref{full_deletion_prob_2} with $p = 0.9$. 
		This can be done since Equations ~\eqref{full_deletion_prob_1} and ~\eqref{full_deletion_prob_2} are clearly monotonically increasing in $p$ and we are considering smaller values of $p$. Also, observe that since $M_2/(1 - 0.9) = 235$ is an integer, then by Equation \eqref{big_to_small_run_bound_q} given in Lemma~\ref{lem:bounds-big-run-small-run}, for every $p \leq 0.9$,
		\[
		P^{(2)\rightarrow (1)} \leq \sum_{i=0}^{T} \binom{\frac{M_2}{1-0.9}}{i} (1-0.9)^i \cdot (0.9)^{\frac{M_2}{1-0.9} - i} \;,
		\]
		which is exactly what we get from Equation~\eqref{big_to_small_run_probability} with $p = 0.9$. Now, to upper bound $P^{(1) \rightarrow (2)}$ we use Equation~\eqref{small_to_big_run_bound} with $q = 1 - 0.57$, which is fine as $p > 0.57$ and thus $q>1-p$. As before, calculations show that $\tilde{\gamma} < \delin$. Hence for a small enough $\epsout$ our decoding algorithm succeeds with high probability by Proposition~\ref{prop:decode}.
		Plugging the parameters into Equation~\eqref{Rate_BDC_eq} and letting $m$ be large enough we get 
		\[
		\frac{0.55224 (1-p)}{8.48521 + 0.765 (1-p) + (1-p)/m} > \frac{0.55224 (1-p)}{8.81416}> \frac{1-p}{16}\; .
		\]
		
		\paragraph{Case $0 < p \leq 0.57$:} The parameters we choose for this regime are 
		$$M_1 = 5.59, M_2 = 20.21, \beta_1 = 0.53,M_B = 10^{-5}, \delout = 2^{-20}  \quad \text{and } \delin = 0.006147 \;,$$ 
		and set $T = 13$. Using Proposition~\ref{inner_code_lem}, we get that $\Rin = 0.577475$.
		As in the previous case, the upper bounds to $P^{(1) \rightarrow (0)}, P^{(2) \rightarrow (1)}, P^{(2) \rightarrow (0)}$ were obtained by using Equations~\eqref{full_deletion_prob_1},~\eqref{big_to_small_run_probability} and~\eqref{full_deletion_prob_2}, this time with $p = 0.57$ (observe that $M_2/(1-0.57)$ is an integer).
		In this case $13 = T \geq \ceil{M_1/(1-p)}$. For a random variable $Z$ distributed as $Z \sim \textup{Bin} \left( \ceil{ M_1/(1-p)}, 1-p\right)$, it holds that
		\[
				P^{(1)\rightarrow (2)} = \Pr[Z \geq T+1] = 0
		\]
		since bits can only be deleted by the BDC$_p$.

%
		One can simply verify that $\tilde{\gamma} < \delin$ and hence for a small enough $\epsout$ our decoding algorithm succeeds with high probability by Proposition~\ref{prop:decode}.
		Plugging the parameters into Equation~\eqref{Rate_BDC_eq} and letting $m$ be large enough we get that for the case $p \leq 0.57$, the rate of the construction is 
		\[
		\frac{0.57747 (1-p)}{7.71206 + 0.765 (1-p) + (1-p)/m} > \frac{0.57747 (1-p)}{8.47706}> \frac{1-p}{16}\; .
		\]
		This completes the proof of Theorem~\ref{BDC_theorem}	
	\end{proof}

	\section{Rates For Fixed Values of Deletion Probabilities}\label{sec:fixed}
	
	In Theorem \ref{BDC_theorem} we constructed codes of rate larger than $(1-p)/16$ for the BDC$_p$ that can be used for reliable communication. Note that even if $p \rightarrow 1$ our construction gives codes of positive rate. Now, we wish to fix $p$ (and thus leave the regime $p\rightarrow 1$) and instead of using the bounds given in Equations~\eqref{small_to_big_run_bound},~\eqref{full_deletion_bound_1},~\eqref{big_to_small_run_bound}  and~\eqref{full_deletion_bound_2}, we can use the exact direct calculations given in Equations~\eqref{small_to_big_run_probability},~\eqref{full_deletion_prob_1},~\eqref{big_to_small_run_probability} and~\eqref{full_deletion_prob_2}, respectively. Using the exact bounds we can improve, for any fixed value of $p$, the rate of the code compared to what we obtained in Theorem~\ref{BDC_theorem}. 
	The reason that we can improve the bound is that in the proof of Theorem~\ref{BDC_theorem} we looked for a relatively simple argument that should work for every value of $p$. When $p$ is fixed, we can use more direct calculations to get a better bound. For example, we can get significant improvement by using Equation~\eqref{full_deletion_prob_1} instead of Equation~\eqref{full_deletion_bound_1}. E.g., for $p=0.8$  there is a relatively large difference between $p^{M_1/(1-p)}$ and $e^{-M_1}$. E.g., for $M_1 = 5$ we have that $e^{-5} = 0.00673$ and $0.8^{5/(0.2)} = 0.00377$. Such savings allow us to choose smaller value of $M_1$ for the case $p=0.8$. Then, by reducing the value of $M_1$ we reduce also the values of $T$ and $M_2$ which eventually lead to an improved rate.

	The reason that we do not optimize the calculation using these equations for every $p$ is that the optimization involves complex expressions involving all our parameters and it is not clear how to optimize it and get a closed formula for the rate for arbitrary $p$.
	

	In \cite{drinea2007improved}, the authors gave constructions of probabilistic codes 
	for the binary deletion channel. They derived lower bounds on the capacity of the BDC$_p$ that are the best lower bounds as far as we know for fixed values of $p$.
	
	In Table \ref{table:2} we compare our results to the ones obtained in \cite{drinea2007improved}. One can see that our rates are smaller by approximately a factor of $2$. Yet, the construction presented in this paper is deterministic, efficient, and has a simpler analysis. 
	
	\begin{table}[h]
		\centering
		\begin{tabular}{ | c | c | c || c |} 
			\hline
			$p$ & ($\beta_1, N_1, T, N_2, \Rin, \delin $) & Final rate & \cite{drinea2007improved} \\	
			\hline
			$0.50$ & ($0.497$, $8$, $7$, $27$, $0.5456$, $0.00922$) & $0.050682$ & $0.10186$ \\
			\hline
			$0.55$ & ($0.519$, $9$, $8$, $34$, $0.5525$, $0.00825$) & $0.043005$ & $0.084323 $\\
			\hline
			$0.60$ & ($0.508$, $10$, $8$, $38$, $0.5184$, $0.01120$) & $0.035935$ & $0.069564 $\\
			\hline
			$0.65$ & ($0.519$, $13$, $9$, $49$, $0.5545$, $0.00810$) & $0.029926$ & $0.056858 $\\
			\hline
			$0.70$ & ($0.509$, $15$, $9$, $57$, $0.5267$, $0.01051$) & $0.024353$ & $0.045324 $\\
			\hline
			$0.75$ & ($0.524$, $20$, $10$, $75$, $0.5400$, $0.00910$) & $0.019420$ & $0.035984 $\\
			\hline
			$0.80$ & ($0.514$, $24$, $10$, $96$, $0.5289$, $0.01022$) & $0.014830$ & $0.027266 $\\
			\hline
			$0.85$ & ($0.526$, $34$, $11$, $138$, $0.5413$, $0.00895$) & $0.010701$ & $0.019380 $\\
			\hline
			$0.90$ & ($0.537$, $54$, $12$, $224$, $0.5534$, $0.00773$) & $0.006845$ & $0.012378 $\\
			\hline
			$0.95$ & ($0.53$, $108$, $12$, $452$, $0.5402$, $0.00893$) & $0.003305$ & $0.005741 $\\
			\hline
			$0.99$ & ($0.52$, $541$, $12$, $2280$, $0.5318$, $0.00985$) & $0.000641$ & - \\
			\hline
		\end{tabular}
		\caption{Rates for fixed values of $p$. $N_1$ and $N_2$ are the lengths of the inner codeword runs after the blow-up. I.e., $N_1 = \ceil{M_1/(1-p)}$ and $N_2 = \ceil{M_2/(1-p)}$.}
		\label{table:2}
	\end{table}
	
	\begin{figure}
		\centering
		\begin{tikzpicture}
		
		\begin{axis}[
		scaled y ticks = false,
		tick label style={/pgf/number format/fixed },
		axis lines = left,
		xlabel = $p$,
		ylabel = {$Rate$},
		xtick={0.5,0.6,0.7,0.8,0.9,1},
		ytick={0,0.01,0.02,0.03,0.04,0.05}, 
		legend pos=north east,
		ymajorgrids=true,
		grid style=dashed,
		]
		\addplot [
		domain=0.5:1, 
		samples=100, 
		color=red,
		]
		{(1-x)/15.71};
		\addlegendentry{$(1-p)/15.71$}
		\addplot[
		color=blue,
		mark=square,
		]
		coordinates {
			(0.5,0.050682)(0.55,0.043005)(0.6,0.035935)(0.65,0.029926)(0.7,0.024353)(0.75,0.019420)(0.8,0.014830)(0.85,0.010701)(0.9,0.006845)(0.95,0.003305)(0.99,0.0006418)
		};
		\addlegendentry{Rates for fixed $p$}
		\end{axis}
		\end{tikzpicture}
		\caption{Rates for fixed values of $p$.} \label{fig:fixed_rates_vs_general}
	\end{figure}

	Note that as $p$ tends to $1$ the rate that we achieve approaches $(1-p)/15.7$ as can be seen in figure \ref{fig:fixed_rates_vs_general}.

	\section{Poisson Repeat Channel}\label{sec:poisson}
	
		We first recall the definition of the PRC$_\lambda$.
	\begin{defi}
		Let $\lambda > 0$. The \emph{Poisson repeat channel with parameter $\lambda$ (PRC$_{\lambda}$)} replaces each transmitted bit randomly (and independently of other transmitted bits), with a discrete number of copies of that bit, distributed according to the Poisson distribution with parameter $\lambda$.
	\end{defi}

	This channel was first defined by Mitzenmacher and Drinea in \cite{mitzenmacher2006simple} who used it to prove a lower bound of $\left(1-p\right)/9$ on the rate of  the BDC. 	More recently, Cheraghchi \cite{cheraghchi2018capacity} gave an upper bound on its capacity and showed further connections to the BDC.

	Before proceeding, let us describe the connection between the PRC and the BDC discovered by Mitzenmacher and Drinea. What they observed is that a code for the PRC$_\lambda$ having rate $\mathcal R$, yields a code for the BDC$_p$ of rate $(1-p) \cdot \mathcal{R}/\lambda$. The reduction is via a probabilistic argument -- from each codeword in the code for the PRC$_{\lambda}$ we generate a codeword for the BDC$_p$ as follows: we replace each of the bits in the codeword by a discrete number of copies of those bits, distributed  according to the Poisson distribution with parameter $\lambda/(1-p)$. The intuition for the construction is that now, when we send the codeword through the BDC$_p$, the resulting word is distributed as if we had sent the original codeword  through the PRC$_\lambda$. 
	
	To the best of our knowledge, prior to this work there were no explicit  deterministic constructions of coding schemes for the PRC$_\lambda$.
In this section, we prove that the scheme that we constructed for the BDC  can also be used for PRC (with slightly different parameters). We note that one can also use the construction given in \cite{guruswami2017efficiently} to obtain a deterministic construction for the PRC, yet our construction yields better rates in this case as well.

We focus on the regime where $\lambda \leq 0.5$, as, in some sense, the PRC behaves like the BDC for small values of $\lambda$ -- intuitively, the smaller $\lambda$ is the more likely deletions are.


	We now describe the construction for this channel. Note that most of the details are identical to our construction for the BDC$_p$. Therefore, in order not to repeat the entire proof, we focus on the differences and leave the details to the reader.
	
	\subsection{Construction} \label{sec:poisson-construction}
	We  use the same inner and outer codes defined in Proposition \ref{inner_code_lem} and Theorem \ref{Haupler_thm}. For parameters $M_1 < M_2$ and $M_B$ our construction is as follows:
	
	\textbf{Encoding.}
	The only differences in the encoding procedure are the  length of the buffers and the blow-up of the runs:
	\begin{itemize}
		\item We place a buffer of $0$'s between every two inner codewords, where the buffers length is $ \ceil{M_B  m/\lambda}$.
		\item Every run of length $1$ is replaced with a run of length $\ceil{M_1/\lambda}$.
		\item Every run of length $2$ is replaced with a run of length $\ceil{M_2/\lambda}$.
	\end{itemize}
	\begin{remark}
		We must choose $M_2 > \lambda$ since otherwise all runs in the inner code will be replaced with a run of length $1$.
	\end{remark}
	\textbf{Decoding.}
	Since the inner and outer codes are the same we use the decoding algorithm given in Algorithm \ref{alg:decode}.\\
	
	\noindent\textbf{Rate.}
	Similar to the calculations yielding Equation~\eqref{Rate_BDC_eq}, the rate of this construction is	\begin{eqnarray}
	\mathcal{R} &=& \frac{\log\left(\left|\Sigma\right|^{\Rout n}\right) }{\beta_1 \ceil{ M_1/\lambda } nm + \beta_2 \ceil{ M_2 / \lambda } nm + \ceil{M_B m/\lambda} (n-1)} \nonumber \\
	&\geq & \frac{\Rin \Rout}{\beta_1 \ceil{ M_1/\lambda } + \beta_2 \ceil{ M_2 / \lambda }+ M_B/\lambda + 1/m}
\label{Rate_Poisson_eq}
 \\
	&\geq & \frac{\Rout \Rin \cdot \lambda}{\beta_1 M_1+ \beta_2 M_2 + \beta \lambda + M_B + \lambda / m} \;. \nonumber
	\end{eqnarray}
	As before, we can avoid the ceilings if we consider values of $\lambda$ such that $\ceil{ M_1 / \lambda }$, $\ceil{ M_2 / \lambda }$ and $\ceil{ M_B m / \lambda }$ are integers. In this case, the rate of the construction is given by
	\begin{equation} \label{Rate_Poisson_eq_no_ceil}
	\mathcal{R} \geq \frac{\Rin \Rout \cdot \lambda }{\beta_1 M_1+ \beta_2 M_2 + M_B} \;.
	\end{equation}
	
	\subsection{Correctness of Decoding Algorithm}
	Since we use the same inner and outer codes in our encoding and the same decoding algorithm, the analysis performed in Section~\ref{sec:analyze} can be repeated to this case as well with some minor modifications. We will briefly mention these modifications and leave the proofs to the reader.
	 
	We start by formally stating an analogous version of Proposition~\ref{prop:decode} to this setting.
	
	\begin{prop} \label{prop:decode-pois}
		Given $M_1, T, M_2, M_B, \beta_1, \delin, \epsin, \delout$ (as described in Section \ref{sec:poisson-construction}) let
		$Z_1 \sim \textup{Poisson}(\lambda \ceil{M_1/\lambda})$ and $Z_2 \sim \textup{Poisson}(\lambda \ceil{M_2/\lambda})$. 
		Denote 
		\begin{align*}
		&P^{(1) \rightarrow (2)} := \Pr[Z_1 \geq T + 1] \;, \\
		&P^{(1)\rightarrow (0)} := \Pr[Z_1 = 0] \;, \\ 
		&P^{(2) \rightarrow (1)} := \Pr[Z_2 \leq T] \;, \\
		&P^{(2)\rightarrow (0)} := \Pr[Z_2 = 0] \;,
		\end{align*}
		
		and define
		\begin{equation}\label{eq:gamma-prc}
		\gamma := \beta_1 \cdot P^{(1) \rightarrow (2)} + \beta_2 \cdot P^{(2) \rightarrow (1)} + \left(2\beta_1+ \beta_2 \right) P^{(1)\rightarrow (0)}+ 4 \beta_2 P^{(2)\rightarrow (0)} \;.		
		\end{equation}
		Let $x\in\Sigma^{\Rout n}$ be a message and let $y$ be the string obtained after encoding $x$ using our code and transmitting it through the PRC$_{\lambda}$. 
		If $\gamma < \delin$, then there exists $\epsilon_0=\epsilon_0(M_1, T, M_2, M_B,\beta_1, \delin, \delout)$ such that for every $\epsout < \epsilon_0$ it holds that Algorithm~\ref{alg:decode} returns $x$ with probability $1-\exp\left(-\Omega(n)\right)$.
	\end{prop}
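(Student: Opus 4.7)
The plan is to mirror the proof of Proposition~\ref{prop:decode} essentially verbatim, replacing every binomial random variable arising from the BDC$_p$ with the appropriate Poisson random variable coming from the PRC$_{\lambda}$. The single unifying observation is that by Lemma~\ref{lem:poisson-sum}, if we transmit a run of length $n$ through the PRC$_\lambda$, the total number of output bits is the sum of $n$ i.i.d.\ $\textup{Poisson}(\lambda)$ variables, hence distributed as $\textup{Poisson}(\lambda n)$. Applying this to the blown-up runs of lengths $\ceil{M_1/\lambda}$, $\ceil{M_2/\lambda}$, and $\ceil{M_B m/\lambda}$, we get random variables $Z_1,Z_2$ and a buffer variable whose distributions exactly match those appearing in the definitions of $P^{(1)\rightarrow(2)}$, $P^{(1)\rightarrow(0)}$, $P^{(2)\rightarrow(1)}$, $P^{(2)\rightarrow(0)}$ in the statement. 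Moreover, since the PRC corrupts each input bit independently, all the $Z_j$'s across different runs (and across different inner codewords) remain mutually independent.

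First I would restate the three error types (deleted buffer, spurious buffer, wrong inner decoding) and prove analogues of Propositions~\ref{clm:dele_buffer}, \ref{clm:spuo_buffer}, \ref{clm:wrong_inner_dec} in the PRC setting. For the deleted-buffer event, the number of surviving buffer bits is $Z \sim \textup{Poisson}(\lambda\ceil{M_B m/\lambda})$ with $\mathbb{E}[Z]\geq M_B m$, and the probability $\Pr[Z < M_B m/2]$ can be bounded by $\exp(-\Omega(m))$ using the Poisson Chernoff bound (Lemma~\ref{lem:chernoff-poisson}, second item). For the spurious-buffer event, the same case split as in the BDC proof applies: the probability that $\ell$ consecutive $1$-runs are all deleted is at most $(e^{-\lambda\ceil{M_1/\lambda}})^{\ell} \leq e^{-M_1 \ell}$, and if $\ell\leq mM_B/(4M_2)$ the number of surviving bits in the $\ell+1$ merged $0$-runs is again Poisson (by Lemma~\ref{lem:poisson-sum}) with the same expectation bound as in the BDC case, so the Poisson Chernoff bound again yields $\exp(-\Omega(m))$.

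The wrong-inner-decoding analysis carries over essentially without change. The combinatorial claims (Claims~\ref{deleted_in_merged_set}--\ref{deleted_runs_prop}) are channel-independent and apply directly. The random variables $X_j$ defined after Claim~\ref{deleted_runs_prop} are determined solely by whether $Z_j=0$, $Z_j\leq T$, or $Z_j>T$, so their expectations are expressed exactly through the same symbols $P^{(1)\rightarrow(2)}$, $P^{(1)\rightarrow(0)}$, $P^{(2)\rightarrow(1)}$, $P^{(2)\rightarrow(0)}$, and Claims~\ref{clm:x_expected_upper} and \ref{clm:x_expected_lower} go through with identical algebra, yielding $\mathbb{E}[X]\leq \gamma m + P^{(1)\rightarrow(0)}$ for the $\gamma$ defined in~\eqref{eq:gamma-prc}. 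Since the $Z_j$'s are independent (different runs pass through independent copies of the channel) and each $X_j$ is a function of the corresponding $Z_j$ alone, Hoeffding's inequality (Theorem~\ref{thm:hoeffding}) gives exactly the same concentration as in Claim~\ref{cla:upper-bound-gamma'}, namely $\Pr[X>\delin m]\leq \exp(-\Omega(m))$ whenever $\gamma<\delin$.

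Having proved the three per-codeword bounds, I would conclude by repeating the final part of the proof of Proposition~\ref{prop:decode} verbatim: each of the three error types contributes at most three to the edit distance between $\sigma^{\sf out}$ and $\tilde\sigma^{\sf out}$, so if the outer decoder fails then at least one of three bad events (each demanding $\geq \delout n/9$ errors of a fixed type) must occur, and since each per-codeword probability is a fixed constant that can be made $\leq \delout/10$ by choosing $\epsout$ small enough (hence $m$ large enough) as in Remark~\ref{rem:m-ind}, a standard Chernoff bound over the $n$ codewords yields failure probability $\exp(-\Omega(n))$. I do not anticipate a real obstacle: the only minor subtlety is verifying that the independence structure of run-lengths across inner codewords is preserved under the PRC (it is, by definition), so that Hoeffding and Chernoff apply as in the BDC case.
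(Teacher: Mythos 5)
Your proposal is correct and follows essentially the same route as the paper: reuse the BDC analysis wholesale, replacing binomial variables with Poisson variables via Lemma~\ref{lem:poisson-sum}, redoing the buffer bounds (Propositions~\ref{clm:dele_buffer} and~\ref{clm:spuo_buffer}) with the Poisson Chernoff bound of Lemma~\ref{lem:chernoff-poisson}, noting the channel-independent combinatorial claims carry over unchanged, and recomputing the four transition probabilities under the Poisson model. You supply slightly more detail on the buffer events than the paper, which simply states these follow by substituting Lemma~\ref{lem:chernoff-poisson} for Lemma~\ref{lem:chernoff} and omits the calculation, but the argument is the same.
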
 
	
	Note that the only difference between this proposition and Proposition ~\ref{prop:decode} is in the definitions of $Z_1$ and $Z_2$.
	Recall that the proof of Proposition~\ref{prop:decode} heavily relies on Propositions~\ref{clm:dele_buffer}, \ref{clm:spuo_buffer}, and \ref{clm:wrong_inner_dec}. Therefore, to prove Proposition~\ref{prop:decode-pois}, one needs to formally state and prove analogous versions of Propositions~\ref{clm:dele_buffer}, \ref{clm:spuo_buffer}, and \ref{clm:wrong_inner_dec} in the setting of the PRC.
	
	We first observe that it is very simple to prove the analogous claims to Propositions~\ref{clm:dele_buffer} and~\ref{clm:spuo_buffer} by using Lemma~\ref{lem:chernoff-poisson} instead of Lemma~\ref{lem:chernoff} (since our random variables are now distributed according to the Poisson distribution).
	Hence we omit the details. We thus have that the probability of each error type is $\exp(-\Omega(m))$ per inner codeword. We focus on analyzing the case where we might output a wrong inner codeword in Step \ref{decode:inner} of Algorithm \ref{alg:decode} (i.e. the case analyzed in Proposition~ \ref{clm:wrong_inner_dec}).

	\subsubsection{Wrong Inner Decoding}
	Note that as we consider the same threshold decoding step for decoding the inner windows (i.e., Step~\ref{decode:blow-up:end} in Algorithm~\ref{alg:decode}) and the same inner code, the claims of Section~\ref{Inner_Dec_Analysis} apply here as well. 
	The difference from Section~\ref{Inner_Dec_Analysis} is in the computations of the probabilities $P^{(1)\rightarrow (2)}$ , $P^{(1)\rightarrow (0)}$, $P^{(2)\rightarrow (1)}$, $P^{(2)\rightarrow (0)}$. We focus on these computations as they play a significant role in computing the rate in Theorem~\ref{PRC_theorem}.
	
	
	Recall that in the encoding process, a run $r_j$ is replaced with a run of length $\ceil{M_1/\lambda}$ or $\ceil{M_2/\lambda}$ depending on $r_j$'s length. As in Section \ref{Inner_Dec_Analysis}, define $Z_j$ to be the random variable corresponding to the number of bits from this blown-up run that survived the transmission through the PRC$_{\lambda}$. According to Lemma~\ref{lem:poisson-sum}, $Z_j \sim \textup{Poisson}(\lambda \ceil{M_1/\lambda})$ if $\left|r_j\right| = 1$ and $Z_j \sim \textup{Poisson}(\lambda \ceil{M_2/\lambda})$ if $\left|r_j\right| = 2$.  
	Let $r_j'$ be exactly as defined in Definition \ref{def:r'_j}. As before, we study the probability that $r_j \neq r_j '$:
	\begin{enumerate}
		\item If $\left|r_j\right| = 1$ then there are two possible types of errors:
	\begin{enumerate}
		\item $\left|r_j '\right| = 2$: The probability for this to happen is $P^{(1)\rightarrow (2)} := \Pr[Z_j \geq T+ 1]$. We next give two estimates, one is an exact calculation and the other is an upper bound. For every $\lambda$ we have
		\begin{equation}
		\begin{split}
		P^{(1)\rightarrow (2)} &= \Pr[Z_j \geq T + 1] \\
		&= 1 - \Pr[Z_j \leq T] \\ 
		&= 1 - e^{-\lambda\ceil{\frac{M_1}{\lambda}}} \sum_{i=0}^{T} \frac{(\lambda\ceil{\frac{M_1}{\lambda}})^i}{i!} \;. \label{PRC_small_to_big_probability}
		\end{split}
		\end{equation} 
		Let $Y$ be a random variable distributed as $Y\sim \textup{Poisson} (M_1 + \lambda)$. We can upper bound $P^{(1) \rightarrow (2)}$ by 
		\begin{equation}\label{PRC_small_to_big_bound}
		\begin{split}
		P^{(1)\rightarrow (2)} = \Pr[Z_j \geq T + 1] &= 1 - \Pr[Z_j \leq T] \\ &\leq 1 - \Pr[Y \leq T] \\
		&= 1 - e^{-M_1 - \lambda} \sum_{i=0}^{T} \frac{(M_1 + \lambda)^i}{i!} \;.
		\end{split}
		\end{equation} 
		where the inequality follows from Lemma \ref{lem:poisson_mono} by noting that $ \lambda \ceil{M_1/\lambda} \leq M_1 + \lambda$.
		
		\item $\left|r_j '\right| = 0$: In this case, $r_j$ was completely deleted by the channel. The probability for this to happen is
		\begin{equation} \label{PRC_1run_deletion_pro}
		P^{(1)\rightarrow (0)} = \Pr[Z_j = 0] = e^{-\lambda\ceil{M_1/\lambda}} \leq e^{-M_1} \;.
		\end{equation}
		
	\end{enumerate}
	
	\item If $\left|r_j\right| = 2$ then one of the following cases hold:
	\begin{itemize}
		\item $\left|r_j'\right| = 1$: The probability for this to happen is $P^{(2)\rightarrow (1)} := \Pr[Z_j \leq T]$. As before, the exact probability calculation is
		\begin{equation}
		\label{PRC_big_to_small_probability}
		P^{(2)\rightarrow (1)} = \Pr[Z_j \leq T] = e^{-\lambda\ceil{\frac{M_2}{\lambda}}} \sum_{i=0}^{T} \frac{(\lambda\ceil{\frac{M_2}{\lambda}})^i}{i!} \;. 
		\end{equation}
		Let $Y$ be a random variable distributed as $Y \sim \textup{Poisson}(M_2)$ then it holds that
		\begin{equation}
		P^{(2)\rightarrow (1)} = \Pr[Z_j \leq T] \leq \Pr[Y \leq T] = e^{-M_2} \sum_{i=0}^{T} \frac{M_2^i}{i!}\;,	\label{PRC_big_to_small_bound}
		\end{equation}
		where the inequality follows from Lemma \ref{lem:poisson_mono} by noting that $M_2 \leq \lambda \ceil{M_2/ \lambda}$.	
		\item $\left|r_j '\right| = 0$. The probability for this to happen is
		\begin{equation} \label{PRC_2run_deletion_pro}
		P^{(2)\rightarrow (0)} = \Pr[Z_j = 0] = e^{-\lambda\ceil{M_2/\lambda}} \leq e^{-M_2}\;.
		\end{equation}
	\end{itemize}
	
	\end{enumerate}

By using these estimates and proceeding exactly as in the proof of Proposition~\ \ref{clm:wrong_inner_dec} one gets that the probability of error in this case as well is $\exp\left(-\Omega(m)\right)$. Combining everything together the proof of Proposition~\ref{prop:decode-pois} follows similarly to the proof of Proportion \ref{prop:decode} . In particular,  Algorithm~\ref{alg:decode} decodes correctly in this setting as well. 

\subsubsection{Proof of Theorem~\ref{PRC_theorem}}

As in the proof of Theorem~\ref{BDC_theorem}, we first compute an upper bound on $\gamma$ (recall its definition in Proposition~\ref{prop:decode-pois}) that holds for all $\lambda \leq 0.5$, then we compute the rate of the inner code by using Proposition~\ref{inner_code_lem} and finally we compute the rate of our code using Equation~\ref{Rate_Poisson_eq}. 

The parameters we use for our construction are 
\[
M_1 = 5.49, M_2 = 24.2, \beta_1 = 0.532, M_B = 10^{-5} \quad \textup{and } \delta_{\textup{out}} = 2^{-20} \;.
\] 
We pick $T = 13$ and set $\delin = 0.00954$.

First observe that for every $\lambda > 0$, we can upper bound $P^{(1)\rightarrow (0)}, P^{(2)\rightarrow (1)}, P^{(2)\rightarrow (0)}$ using Equations \eqref{PRC_1run_deletion_pro}, \eqref{PRC_big_to_small_bound}, and \eqref{PRC_2run_deletion_pro} respectively.
As we assume $\lambda \leq 0.5$, we can upper bound $P^{(1)\rightarrow (2)}$ using Equation \eqref{PRC_small_to_big_bound} with $\lambda = 0.5$ (due to monotonicity implied by Lemma~\ref{lem:poisson_mono}). 
Plugging these upper bounds to Equation~\eqref{eq:gamma-prc}, we get an upper bound on $\gamma$ which, as before, we denote by $\tilde{\gamma}$. Calculating, it is simple to verify that $\gamma\leq \tilde{\gamma} < \delin$. Therefore, for a small enough $\epsout$, our decoding algorithm succeeds with high probability.	
Applying Proposition \ref{inner_code_lem} we get an inner code of rate $\Rin = 0.53186$ and by letting $m$ be large enough, the rate of our concatenated code according to Equation \eqref{Rate_Poisson_eq} is
	\[
	\mathcal{R} = \frac{0.5318 \lambda}{8.58349 + 0.766 \lambda + \lambda/m} > \frac{\lambda}{17} \;.
	\]

	\section{Open Questions}
	The main open question is to further improve the construction presented in this paper and close the gap to (and even surpass) the lower bound of $(1-p)/9$ on the capacity of the BDC$_p$. 
	Alternatively, we can ask to come up with a deterministic construction for the PRC$_{\lambda}$ that gives better rates. By the reduction from the PRC to the BDC this will improve upon the constructions for the BDC.

	Even though the capacity of the BDC$_p$  scales proportionally with $1-p$ for $p\rightarrow 1$, it is an interesting open question to understand if there is a constant $1/9 \leq \mu \leq 0.4143$ such that the capacity of the channel is $(\mu \pm o(1))(1-p)$ where $o(1)$ is w.r.t. $n$, the block length of the code.

	Another interesting question is the maximal deletion fraction $\delta$, for which for every $\epsilon > 0$, there exists a code with rate bounded away from $0$ that can handle $\delta - \epsilon$ fraction of adversarial deletions. One can easily see that $\delta = 1/2$ is an upper bound (we simply delete all $0$'s or all $1$'s). Bukh et al. \cite{bukh2017improved} showed that  $\delta \geq \sqrt{2} - 1$. An interesting open question is whether this gap can be closed. I.e., are there codes that are capable of correcting $1/2 -\epsilon $ adversarial deletion for every $\epsilon > 0$ and that have rate bounded away from $0$.
	
	\section*{Acknowledgment}
	We wish to thank an anonymous reviewer for pointing out that our inner code is robust against the edit distance adversary and not just a restricted adversary considered in an earlier version of this paper, thus simplifying our arguments.
	
	\bibliographystyle{alpha}
	\bibliography{BDC_construction_improvement}
	
	\appendix

\end{document}